\title{}
\begin{document}

\begin{center}
	\Large \textsc{The Bootstrap for Network Dependent Processes}
	\bigskip
\end{center}

\date{\today}
\email{\href{mailto:D.Kojevnikov@tilburguniversity.edu}{D.Kojevnikov@tilburguniversity.edu}}

\vspace*{5ex minus 1ex}
\begin{center}
	\textsc{Denis Kojevnikov} \\
\bigskip
	\textit{Tilburg University}
	\bigskip
\end{center}

\begin{abstract}
	{\footnotesize This paper focuses on the bootstrap for network dependent processes under the conditional $\psi$-weak dependence. Such processes are distinct from other forms of random fields studied in the statistics and econometrics literature so that the existing bootstrap methods cannot be applied directly. We propose a block-based approach and a modification of the dependent wild bootstrap for constructing confidence sets for the mean of a network dependent process. In addition, we establish the consistency of these methods for the smooth function model and provide the bootstrap alternatives to the network heteroskedasticity-autocorrelation consistent (HAC) variance estimator. We find that the modified dependent wild bootstrap and the corresponding variance estimator are consistent under weaker conditions relative to the block-based method, which makes the former approach preferable for practical implementation.
	}

	\bigskip

	{\footnotesize \noindent \textsc{Keywords.} Conditional bootstrap; Block bootstrap; Dependent wild bootstrap; Network dependent process; Random field; Conditional $\psi$-weak dependence.}

\end{abstract}

\maketitle

\section{Introduction}
The aim of this paper is developing bootstrap approaches for the sample mean of network dependent processes studied in \citet*[hereafter KMS]{Kojevnikov/Marmer/Song:20}. A network dependent process is a random field indexed by the set of nodes of a given undirected network. This network governs the stochastic dependence between the elements of the associated random field. Specifically, the latter is assumed to satisfy a conditional version of the $\psi$-weak dependence condition of \cite{Doukhan/Louhichi:99} given a common shock of a general form. \citealias{Kojevnikov/Marmer/Song:20} show that the pointwise Law of Large Numbers and the Central Limit Theorem hold for a sequence of such processes under suitable assumptions on the networks' denseness and the strength of the stochastic dependence. In addition, they provide nonparametric HAC estimators of the variance-covariance matrix for the vector of sample moments, which is similar to the spatial HAC estimator developed in \cite{Kelejian/Prucha:07}.

These results provide an asymptotic approximation of the distribution of the sample mean which can be used for inference on the true mean of a network dependent process. However, this approximation relies on the network HAC estimator which has two major drawbacks. First, unlike its spatial or time-series counterparts it is not guaranteed to yield a positive semi-definite estimate. Second, these estimators are known to have poor finite sample properties \citep[see, e.g.,][Section~3.5]{Matyas:99:GMM}. The aim of the current work is to provide an alternative nonparametric way to conduct inference in these settings.

The nonparametric bootstrap methods for the case of weakly dependent observations have been studies since the introduction of the non-overlapping block bootstrap in \cite{Carlstein:86} and the moving block bootstrap in \cite{Kunsch:89} and \cite{Liu/Singh:92} for stationary, mixing time-series. Since then, a number of block-based methods have been considered in the statistics literature. They share the idea of resampling groups of consecutive observations to capture the stochastic dependence in the original series and include, among others, the circular block bootstrap \citep[]{Politis/Romano:92}, the stationary bootstrap \citep[]{Politis/Romano:94} and the tapered block bootstrap \citep[]{Paparoditis/Politis:01}. A detailed exposition and comparison of some of these methods can be found in \cite{Lahiri:03:Resampling}. Block-based bootstrap was also successfully applied to the case of weakly dependent random fields satisfying certain mixing conditions \citep[see, e.g.,][Section~12 and references therein]{Lahiri:03:Resampling}.

More recent developments in this area of research are discussed in \cite{Goncalves:11}. In particular, the dependent wild bootstrap proposed in \cite{Buhlmann:93} and \cite{Shao:10} departs from other methods. Instead of using blocks, it tries to mimic the autocovariance structure of the original data by introducing auxiliary random variables and can be applied to irregularly spaced data. A related method, the dependent random weighting, was recently introduced in \cite*{Sengupta/Shao:15} and has wider applicability; specifically, it can be directly applied to irregularly spaced spatial data.

Another useful resampling technique developed for stationary and nonstationary times-series and homogenous random fields under mixing is subsampling. A comprehensive treatment of this method is given in \cite*{Politis:99}. Interestingly, in the time-series case subsampling is similar to the moving block bootstrap where a single block is resampled. Finally, it is worth mentioning the spatial smoothed bootstrap suggested in \cite*{Garcia:13}. In this instance, assuming homogeneity of the underlying data generating process, bootstrap pseudo-samples are drawn from the estimated joint distribution of a given sample.

Network dependent processes are closely related to random fields indexed by elements of a lattice in $\R^d$ \citep[see, e.g.,][]{Comets/Janzura:98,Conley:99}. However, they are not a special case of the latter and so the existing bootstrap methods cannot be directly applied to our framework. The main reason for that is the irregularity of the structure of underlying networks. In particular, subsampling and all types of the block bootstrap for time-series and spatial data rely on the existence of ordered blocks of closely-located observations. The dependent wild bootstrap uses a well-known property of kernel functions that guarantees the positive semi-definiteness of certain weighting matrices. However, as argued in \citealias{Kojevnikov/Marmer/Song:20}, this relation does not necessarily hold when applied to networks. Finally, the homogeneity assumption of the spatial smoothed bootstrap and the spatial subsampling, that is the invariance of joint distributions under spatial shifts is not suitable for our case.

We propose two bootstrap approaches for constructing asymptotically valid confidence sets for the mean of a network dependent process and establish the first-order consistency of these methods for smooth functions of means conditionally on the common shock. The first approach is a block-based method in which blocks are constructed from certain neighborhoods of each node in a network. The second is a modification of the dependent wild bootstrap that employs the topology of a given graph to generate random weights instead of using a fixed kernel function. In addition, we provide the bootstrap variance estimators of the scaled sample mean which yield positive semi-definite estimates and can be used as an alternative to the network HAC estimator. We find that the consistency of the modified dependent wild bootstrap and the corresponding variance estimator holds under weaker conditions as compared with the block bootstrap. However, the bootstrap distribution corresponding to the former method may fail to match the higher-order cumulants of the underlying data generating process, thus preventing improvements over asymptotic approximations.

The rest of the paper is organized as follows. The next section describes a modification of network dependent processes allowing for weighted networks. This modification can be useful for handling dense graphs once varying intensity of links is assumed. Section \ref{sec:cond_bootstrap} provides some general result regarding the conditional bootstrap. Specifically, we use the almost sure convergence of probability kernels to ensure that the bootstrap is valid for (almost) every realization of the common shock, which may also represent the stochastic network formation process. In Section \ref{sec:bootstrap_mean} we present the above-mentioned bootstrap methods in detail and establish sufficient conditions for their conditional consistency.
All the proofs and other technical details are presented in the Appendices \ref{sec:app_main}-\ref{sec:app_aux}.

\section{The Setup}
\label{sec:setup}

We consider a variation of network dependent processes characterized in \citealias{Kojevnikov/Marmer/Song:20}. Namely, let $G\equiv(N,E)$ be an undirected graph (possibly infinite), where $N$ is the set of nodes and $E$ denotes the set of links (we identify $N$ with integers $\{1,2,\ldots\}$). Each edge $e\in E$ is associated with a weight $W(e)\in\bar{\R}$. Also let the function $d:N\times N\to\bar{\R}_{\ge 0}$ be a distance on $G$; for example, the shortest path distance for an unweighted graph. An $\X$-valued \textit{network dependent process} $Y\equiv (Y,G)$ is a collection of $\X$-valued random elements defined on a common probability space indexed by $N$, i.e., $\{Y_i:i\in N\}$. The network $G$ governs the stochastic dependence between random elements. In this paper we consider $\X=\R^v$ with $v\ge 1$.

Further, suppose that we observe a sequence of network dependent processes $\seq{(Y_n,G_n)}$ defined on a common probability space $(\Omega,\H,\PM)$, where each $G_n\equiv(N_n,E_n)$ is a finite graph of size $m_n\to\infty$ as $n\to\infty$; w.l.o.g. we set $m_n=n$. Here, the sequence $\seq{G_n}$ can be a sequence of subgraphs of an infinite network $(N_{\infty},E_{\infty})$. In general, however, these graphs can be unrelated. In order to emphasize the dependence of the distance between two nodes on $n$, we denote it as $d_n(\csdot,\csdot)$. Additionally, since the distance function may implicitly depend on weights associated with the edges of a graph, we impose the following restriction in order to employ the results established in \citealias{Kojevnikov/Marmer/Song:20} with the least possible change.

\begin{assumption}
	\label{assu:min_dist}
	For all $n\ge 1$, $\min_{i,j\in N_n}d_n(i,j)\ge 1$ and $d_n(i,j)=\infty$ whenever $i,j\in N_n$ are disconnected (i.e., there is no path connecting $i$ and $j$).
\end{assumption}

For example, if $W(e)\in [0,1]$ for all $e\in E$, which can be interpreted as the intensity of links, then the shortest weighed distance associated with $1/W(\csdot)$ satisfies this assumption, where implicitly we set $1/0\equiv\infty$. In this case an unweighted network $(N,E)$ is equivalent to a complete graph $(N,E')$, where for $e\in E'$, $W(e)=\ind\{e\in E\}$. In a similar manner, the (at most countable) parameter space of a random field on a metric space $(\mathcal{Z},\rho)$ can be modelled as a compete graph of suitable cardinality, where $W(x\leftrightarrow y)$ is a function of the distance $\rho$ between two points $x,y\in \mathcal{Z}$. Then Assumption \ref{assu:min_dist} corresponds to the case of increasing domain asymptotics \citep[see, e.g.,][]{Conley:99,Jenish/Prucha:09}.

Let $\CS\subset \H$ be a given sub-$\sigma$-field. We assume that the sequence of network dependent processes is conditionally weakly dependent given $\CS$. Specifically, for $a,b\in\N$ and $s\ge 0$ let
\[
	\PS_n(a,b;s)\eqdef \{(A,B)\subset N_n^2: \abs{A}=a,\abs{B}=b,d_n(A,B)\ge s\}
\]
with $d_n(A,B)\eqdef\min_{i\in A,j\in B}d_n(i,j)$ and let $\L_v$ be the family of real-valued, bounded, Lipschitz functions, i.e.,

\[
	\L_v\eqdef \bigcup_{a\ge 1}\L_{v,a},
\]
where
\[
	\L_{v,a}\eqdef \{f:\R^{v\times a}\to \R: \norm{f}_{\infty}<\infty, \Lip(f)<\infty\}.
\]
The functions in $\L_{v,a}$ are Lipschitz with respect to the distance $\delta_a$ on $\R^{v\times a}$ given by
\[
	\delta_a(\vec{x},\vec{y})\eqdef\sum_{l=1}^a\norm{x_l-y_l},
\]
where $\norm{\csdot}$ is a norm on $\R^v$ and $\vec{x}\equiv(x_1,\ldots,x_a)$ and $\vec{y}\equiv(y_1,\ldots,y_a)$ are points in $\R^{v\times a}$. In addition, for a set of nodes $A\subset N_n$ we write $Y_{n,A}\equiv\{Y_{n,i}:i\in A\}$.

\begin{definition}
\label{def:weak_dep}
A sequence $\seq{(Y_n,G_n)}$ is $(\L_v,\psi,\CS)$-weakly dependent if for each $n\ge 1$ there exist a $\CS$-measurable sequence $\gamma_n\equiv\seq{\gamma_{n,s}}_{s=1}^{\infty}$ and a collection of nonrandom functions $(\psi_{a,b})_{a,b\in\N}$, $\psi_{a,b}:\L_{a}\times \L_{b}\to \R_{\ge 0}$ such that for any $(A,B)\in \mathcal{P}_n(a,b;s)$ with $s\ge 1$, $f\in \L_{v,a}$ and $g\in \L_{v,b}$,
\begin{equation}
\label{eq:cov_bound}
	\abs{\Cov(f(Y_{n,A}),g(Y_{n,B})\mid \CS)}\le \psi_{a,b}(f,g)\gamma_{n,\floor{s}} \qtext{a.s.}
\end{equation}
\end{definition}

\begin{remark*}
(a) When it is clear from the context, we denote such a sequence as $\seq{Y_n}$ omitting the reference to the underlying networks. (b) $(\L_v,\psi)\equiv (\L_v,\psi,\{\emptyset,\Omega\})$. (c) The elements of $\arr{\gamma_{n,s}}$ are called the weak-dependence coefficients associated with $\seq{Y_n}$. (d) For convenience, we set $\gamma_{n,0}\equiv 1$.
\end{remark*}

A number of examples of network dependent processes that are $(\L_v,\psi,\CS)$-weakly dependent are given in \citealias{Kojevnikov/Marmer/Song:20}. For instance, strong mixing processes correspond to $\psi_{a,b}(f,g)=4\norm{f}_{\infty}\norm{g}_{\infty}$. Also associated and Gaussian processes and their certain derivatives are $(\L_v,\psi,\CS)$-weakly dependent with $\psi_{a,b}(f,g)=ab\Lip(f)\Lip(g)$. It is worth mentioning that the corresponding weak dependence coefficients may depend on the topology of the underlying networks.

Conditioning on a $\sigma$-field $\CS$ can be useful in various cases. First, if the underlying graphs are realizations of a stochastic network formation process, then one can potentially condition on the $\sigma$-field generated by that process and treat the observed graphs as fixed. Second, fixing nodes with high degree centrality may help to obtain local stochastic dependence.

\begin{example}
Consider a set independent random variables $\seq{\varepsilon_i:i\in N}$ and let $C\subset N$ denote a set of nodes with ``high'' degree centrality (for clarity, we omitted the subscript $n$). Then $u_{N\setminus C}$, where $u_i\eqdef \varepsilon_i +\sum_{j\in C}\beta_{ij}\varepsilon_j$ and $\beta_{ij}\in \R$, are conditionally independent given $\CS=\sigma(\varepsilon_C)$. Moreover, for arbitrary measurable functions $\seq{\phi_i}$ the process $\seq{Y_i\eqdef\phi_i(u_N)}$ satisfies the covariance bound \eqref{eq:cov_bound} with $\psi_{a,b}(f,g)=a\norm{g}_{\infty}\Lip(f)+b\norm{f}_{\infty}\Lip(g)$. In the context of social interaction models $\seq{u_i}$ and $\seq{Y_i}$ may represent idiosyncratic shocks and observable outcomes, respectively.
\end{example}

In order to facilitate the exposition, throughout the paper we consider a sequence of network dependent processes $\seq{Y_n}$ satisfying the covariance bound \eqref{eq:cov_bound} with a specific form of the function $\psi_{a,b}$ and bounded weak dependence coefficients. The restricted $\psi_{a,b}$ function is fairly general and covers many useful examples of weakly dependent processes.

\begin{assumption}
\label{assu:psi_weak_dep}
$\seq{(Y_n,G_n)}$ is $(\L_v,\psi,\CS)$-weakly dependent and there exist constants $M\ge 1$ and $C>0$ such that $\gamma_{n,s}\leq M$ a.s.\ for all $n,s\ge 1$, and
\begin{align*}
	\psi_{a,b}(f,g)&=c_1\norm{f}_{\infty}\norm{g}_{\infty}\,\,+c_2\Lip(f)\norm{g}_{\infty} \\
	&\quad +c_3\norm{f}_{\infty}\Lip(g)+c_4\Lip(f)\Lip(g),
\end{align*}
where $c_1,\ldots,c_4\le Cab$.
\end{assumption}

It should be noted that processes satisfying Assumption \ref{assu:psi_weak_dep} possess some hereditary properties. Specifically, if $\seq{Y_n}$ is $(\L_v,\psi,\CS)$-weakly dependent with the weak dependence coefficients $\arr{\gamma_{n,s}}$, then for any Lipschitz function $h:\R^v\to\R^w$ the sequence $\seq{h(Y_{n,i}):i\in N_n}$ is $(\L_w,\psi,\CS)$-weakly dependent with the same weak dependence coefficients. Moreover, this type of weak dependence is preserved under some locally Lipschitz functions as shown in Proposition \ref{prop:local_lipshitz} below, which is an extension of Proposition 2.1. in \citet*{Dedecker/Doukhan:07:WeakDep} to our settings.

\begin{prop}
\label{prop:local_lipshitz}
Suppose that $\seq{Y_n}$ satisfies Assumption \ref{assu:psi_weak_dep} and there exist $L<\infty$ and $p>1$ such that $\sup_{n,i\in N_n}\E[\norm{Y_{n,i}}_{\infty}^p\mid \CS]\le L$ a.s. Let $h:\R^v\to\R^w$ be such that
\begin{equation}
\label{eq:local_lipshitz0}
	\norm{h(x)-h(y)}\le \eta\norm{x-y}\left(\norm{x}^{\tau-1}+\norm{y}^{\tau-1}\right)
\end{equation}
for some $\eta>0$ and $\tau\in [1,p)$.
Then $\seq{h(Y_{n,i}):i\in N_n}$ is $(\L_w,\psi,\CS)$-weakly dependent with the weak dependence coefficients
\[
	\gamma_{n,s}'=KM\gamma_{n,s}^r,
\]
where $K$ is a constants depending on $\eta$, $v$, and $L$ and
\[
	r= \begin{cases}
		(p-\tau)/(p-1), & \text{if } c_4=0, \\
		(p-\tau)/(p+\tau-2), & \text{otherwise}.
	\end{cases}
\]
\end{prop}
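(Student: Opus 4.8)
The plan is to verify the covariance bound \eqref{eq:cov_bound} for $\seq{h(Y_{n,i})}$ directly, by truncating the range of each $Y_{n,i}$ and then optimizing over the truncation level; this is the conditional analogue of the argument behind Proposition~2.1 of \citet{Dedecker/Doukhan:07:WeakDep}. Fix $a,b$, functions $f\in\L_{w,a}$ and $g\in\L_{w,b}$, a pair $(A,B)\in\PS_n(a,b;s)$ with $s\ge 1$, and abbreviate $\gamma\eqdef\gamma_{n,\floor{s}}$. For $T\ge 1$ let $\phi_T:\R^v\to\R^v$ be the radial retraction onto the ball of radius $T$ (the identity on that ball); $\phi_T$ is Lipschitz with a constant $c_v$ depending only on $v$ and the norm. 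On the ball of radius $T$, \eqref{eq:local_lipshitz0} shows that $h$ is Lipschitz with constant of order $\eta T^{\tau-1}$, so $h\circ\phi_T$ is globally Lipschitz with constant of order $\eta T^{\tau-1}$. Consequently, setting $f_T(\vec{x})\eqdef f(h(\phi_T(x_1)),\ldots,h(\phi_T(x_a)))$ and defining $g_T$ analogously, we have $f_T\in\L_{v,a}$, $g_T\in\L_{v,b}$ with $\norm{f_T}_{\infty}\le\norm{f}_{\infty}$, $\Lip(f_T)\le c_v\,\eta\,T^{\tau-1}\Lip(f)$, and similarly for $g_T$.

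Next I would split the target covariance by bilinearity. Writing $X\eqdef f(h(Y_{n,A}))$, $X_T\eqdef f_T(Y_{n,A})$, $Z\eqdef g(h(Y_{n,B}))$ and $Z_T\eqdef g_T(Y_{n,B})$,
\begin{align*}
	\Cov(X,Z\mid\CS)&=\Cov(X_T,Z_T\mid\CS)+\Cov(X-X_T,Z_T\mid\CS)\\
	&\quad+\Cov(X_T,Z-Z_T\mid\CS)+\Cov(X-X_T,Z-Z_T\mid\CS).
\end{align*}
Since $X_T,Z_T$ are bounded Lipschitz functions of $Y_{n,A},Y_{n,B}$, Assumption \ref{assu:psi_weak_dep} gives $\abs{\Cov(X_T,Z_T\mid\CS)}\le\psi_{a,b}(f_T,g_T)\gamma$ a.s., and substituting the estimates above shows that the four structural summands of $\psi_{a,b}(f_T,g_T)\gamma$ scale in $T$ like $T^0$, $T^{\tau-1}$, $T^{\tau-1}$ and $T^{2(\tau-1)}$ respectively (the last present only when $c_4\ne 0$). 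For the three remainder terms I would use the crude a.s.\ bound $\abs{\Cov(U,V\mid\CS)}\le\E[\abs{U}\abs{V}\mid\CS]+\E[\abs{U}\mid\CS]\,\E[\abs{V}\mid\CS]$ together with $\abs{X_T}\le\norm{f}_{\infty}$ and $\abs{Z_T}\le\norm{g}_{\infty}$. The key computation is the conditional $L^1$ size of $X-X_T$: since $\phi_T(Y_{n,i})=Y_{n,i}$ on $\{\norm{Y_{n,i}}\le T\}$, \eqref{eq:local_lipshitz0} and the elementary inequality $\norm{Y_{n,i}}^{\tau}\ind\{\norm{Y_{n,i}}>T\}\le T^{\tau-p}\norm{Y_{n,i}}^{p}$ yield $\E[\abs{X-X_T}\mid\CS]\le c_v\,\eta\,a\,L\,T^{\tau-p}\Lip(f)$ a.s.; hence the remainder terms decay like $T^{\tau-p}$ and carry the structures $\Lip(f)\norm{g}_{\infty}$ and $\norm{f}_{\infty}\Lip(g)$, plus a higher-order $\Lip(f)\Lip(g)$ piece of order $T^{2(\tau-p)}$.

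Finally I would optimize $T$ to balance the polynomial growth of the leading term against the $T^{\tau-p}$ decay of the remainder. When $c_4=0$ the worst growth is $T^{\tau-1}$, and balancing $T^{\tau-1}\gamma$ against $T^{\tau-p}$ forces $T\sim\gamma^{-1/(p-1)}$, producing the factor $\gamma^{(p-\tau)/(p-1)}$; when $c_4\ne 0$ the worst growth is $T^{2(\tau-1)}$, and balancing $T^{2(\tau-1)}\gamma$ against $T^{\tau-p}$ forces $T\sim\gamma^{-1/(p+\tau-2)}$, producing $\gamma^{(p-\tau)/(p+\tau-2)}$ — exactly the two exponents $r$ in the statement. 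At the optimal $T$ the remaining summands carry a strictly larger power of $\gamma$ and so are dominated by $\gamma^{r}$ when $\gamma\le 1$; for $\gamma>1$ one takes $T=1$ and uses $\gamma\le M\gamma^{r}$, which is where the explicit factor $M$ enters. Since $\gamma\le M$ a.s., all numeric constants together with the surviving powers of $M$ collect into a single factor $KM$ with $K$ depending on $\eta$, $v$ and $L$, while the bound $c_i\le Cab$ keeps the coefficient structure of the required $\psi$-form; this gives the asserted covariance bound, so $\seq{h(Y_{n,i})}$ is $(\L_w,\psi,\CS)$-weakly dependent with $\gamma_{n,s}'=KM\gamma_{n,s}^{r}$.

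The main obstacle is the $T$-optimization bookkeeping: the four structural terms grow at genuinely different rates in $T$, so the dichotomy $c_4=0$ versus $c_4\ne 0$ is precisely the distinction between a worst-case growth of $T^{\tau-1}$ and $T^{2(\tau-1)}$, and recovering the exact exponent $r$ requires tracking these contributions separately rather than bounding everything by the crudest term. A secondary technical point is that every inequality must hold $\PM$-a.s.\ and remain compatible with the conditional moment control $\E[\norm{Y_{n,i}}^{p}\mid\CS]\le L$ a.s., which is what makes the conditional version of the Dedecker--Doukhan argument go through.
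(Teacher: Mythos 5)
Your proposal is correct and takes essentially the same route as the paper's own proof: truncate each coordinate (the paper censors element-wise via $\varphi_\kappa$, you retract radially onto a ball), bound the truncated covariance via Assumption \ref{assu:psi_weak_dep} with Lipschitz constant of order $\eta T^{\tau-1}$, control the remainders through the conditional $L^1$ truncation error of order $T^{\tau-p}$ coming from the $p$-th moment bound, and optimize $T$ as the same power of $\gamma_{n,s}$, with the identical $c_4=0$ versus $c_4\ne 0$ dichotomy producing the two exponents $r$. The only points the paper states explicitly that you leave implicit are the degenerate event $\{\gamma_{n,s}=0\}$, on which the covariance is shown to vanish a.s., and the remark that the inequalities, holding a.s.\ for every deterministic truncation level, extend to the ($\CS$-measurable) random optimizer; both are immediate within your argument.
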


\begin{remark*}
The boundedness of the conditional moments of $\norm{Y_{n,i}}_{\infty}$ is required in order to maintain Assumption \ref{assu:psi_weak_dep}. Once this condition is relaxed, it suffices to assume that these moments are a.s.\ finite.
\end{remark*}

\subsection{Asymptotic Results}
Introducing weighted networks is useful in several scenarios. First, as we have already mentioned it allows incorporating some additional random processes into the current framework. Second, assuming varying intensity of connections enables one to handle denser networks in the sense of the total number of links. Finally, some commonly used statistical models explicitly use weights and can be adapted to our framework, e.g., the spatial Cliff-Ord-type linear model in \cite{Kelejan/Prucha:07_2}.

\begin{example}
For each $n\ge 1$, let $u_n$ be a $n\times 1$ vector of independent random variables and let $\tilde{W}_n$ be an $n\times n$ matrix which is a function of weights associated with a given network. Consider a linear model with disturbances following the next autoregressive process:
\[
	\varepsilon_n=\lambda\tilde{W}_n\varepsilon_n+u_n, \quad\abs{\lambda}<1,
\]
Typically the original weighting matrix is modified to ensure that the spectral radius of $\tilde{W}_n$ is bounded by $1$. Under certain restrictions on the denseness of underlying networks, the process $\seq{\varepsilon_n}$ is weakly dependent with $\psi_{a,b}(f, g)=a\norm{g}_{\infty}\Lip(f)+b\norm{f}_{\infty}\Lip(g)$ so that the model can be accommodated within the current framework.

Assume that $C_n\eqdef (I-\lambda\tilde{W}_n)^{-1}$ exists for each $n\ge 1$ and $\mu\eqdef \sup_{n,i\in N_n}\E\abs{u_{n,i}}<\infty$. Then $\varepsilon_n=C_n u_n$ and, letting $\varepsilon_{n,i}^{(s)}\eqdef \sum_{j\in N_n:d_n(i,j)<s+1}[C_n]_{ij}u_{n,j}$,
\begin{align*}
	&\E[\absin{\varepsilon_{n,i}-\varepsilon_{n,i}^{(s)}}]\le \mu \max_{i\in N_n}\sum_{j\in N_n:d_n(i,j)\ge s+1}\abs{[C_n]_{ij}}\equiv \gamma_{n,s}.
\end{align*}
Thus, $\arr{\varepsilon_n}$ is $(\L_1,\psi)$-weakly dependent provided that $\sup_n\gamma_{n,s}\to 0$ as $s\to \infty$. In order to appreciate the magnitude of the elements of $C_n$ consider a simple case when $\tilde{W}_n=A_n/\rho(A_n)$ and $A_n$ is the adjacency matrix of the underlying graph. Then $C_n=\sum_{k\ge 0}(\lambda/\rho(A_n))^kA_n^k$, where $[A_n^k]_{ij}$ measures the number of paths of length $k$ between nodes $i$ and $j$ which is directly related to the denseness of the network.
\end{example}

For a given network $G_n$ let $N_n(i;s)$ denote the open neighborhood of radius $s>0$ around $i\in N_n$, i.e,
\[
	N_n(i;s)\eqdef\{j\in N_n:d_n(i,j)<s\},\footnote{
		Note that this definition of the open neighborhood of a node differs from one commonly used in graph theory.
	}
\]
and let $N_n^{\partial}(i;s)\eqdef N_n(i;s+1)\setminus N_n(i;s)$. In addition, we define the following aggregate measures of the network denseness:
\begin{equation}
\label{eq:network_measures}
	\begin{alignedat}{2}
		\delta_n(s;k)&\eqdef n^{-1}\sum_{i\in N_n}\abs{N_n(i;s+1)}^k, \quad \delta_n^{\partial}(s;k)&&\eqdef n^{-1}\sum_{i\in N_n}\absin{N_n^{\partial}(i;s)}^k, \\
		D_n(s)&\eqdef \max_{i\in N_n}\abs{N_n(i;s+1)}, \qtext{and}\quad D_n^{\partial}(s)&&\eqdef \max_{i\in N_n}\absin{N_n^{\partial}(i;s)}.
	\end{alignedat}
\end{equation}

It is straightforward to see that under Assumption \ref{assu:min_dist}, which restricts the minimum distance between any two nodes of a network, the asymptotic results derived in \citealias{Kojevnikov/Marmer/Song:20} remain valid once we replace their measures of network denseness with those given in \eqref{eq:network_measures} and redefine $H_n(s,m)$ as follows:
\begin{equation}
\label{eq:network_measures2}
	H_n(s,m)\eqdef \left\{(i,j,k,l)\in N_n^4 :
	\scaleobj{0.8}{
		\begin{gathered}
			j\in N_n(i;,m+1),l\in N_n(k;m+1), \\[-1ex]
			\floor{d_n(\{i,j\},\{k,l\})}=s
		\end{gathered}}
	\right\}.
\end{equation}

In the case of random networks, however, the measures of network denseness are also random. Therefore, one needs a conditional version of the Law of Large Numbers in order to be able to condition on the common shock $\CS$. Note that the other result are stated in the conditional form and can be directly applied to this case if we assume certain measurability conditions. Let $\D(G_n)$ denote the distance matrix associated with $G_n$, i.e., $[\D(G_n)]_{ij}=d_n(i,j)$. If $\D(G_n)$ is $\CS$-measurable, then $N_n(i;s)=\sum_{j\in N_n}\ind\{[\D(G_n)]_{ij}<s\}$ is also $\CS$-measurable as well as the quantities given in \eqref{eq:network_measures} and \eqref{eq:network_measures2}. We make the following assumption:

\begin{assumption}
\label{assu:msble_dist}
The distance matrix $\D(G_n)$ is $\CS$-measurable for all $n\ge 1$.
\end{assumption}

In addition, we introduce the notion of asymptotically negligible random functions, which is useful for defining the conditional versions of the asymptotic tightness and uniform integrability.

\begin{definition}
Let $\F\subset \H$ and let $f:\Y\times\Omega\to\R_{\ge 0}$ be such that $f(y,\csdot)$ is $\F$-measurable for all $y\in\Y$. A sequence of such functions $\seq{f_n}$ is \textit{asymptotically negligible} (a.n.), if for almost all $\omega\in \Omega$,
\[
	\essinf_{s\in S}\limsup_{n\to\infty}f_n(s,\omega)=0.\footnote{
		The essential infimum of an arbitrary family of random variables $\{Z_{\alpha}:\alpha\in \mathcal{A}\}$ a random variable $Z$ such that (a) $Z\le Z_{\alpha}$ a.s.\ for all $\alpha\in \mathcal{A}$ and (b) $Z\ge Z'$ a.s.\ for any random variable $Z'$ satisfying $Z'\le Z_{\alpha}$ a.s.\ for all $\alpha\in \mathcal{A}$. In particular, there exists a sequence $\seq{\alpha_n}$ such that $Z=\inf_{n}Z_{\alpha_n}$ a.s.\ \citep[see, e.g.,][Theorem~1.3.40]{Cohen:15:StochCalc}. The essential supremum is defined similarly and the following identity holds: $\esssup_{\alpha}\{Z_{\alpha}\}=-\essinf_{\alpha}\{-Z_{\alpha}\}$.
	}
\]
In particular, an array of random vectors $\arr{X_{n,i}}$ is
\begin{itemize}[leftmargin=*]
	\item[--] $\F$-\textit{asymptotically tight} if $\max_{i}\PR{\norm{X_{n,i}}>y\mid \F}$ is a.n.
	\item[--] $\F$-\textit{asymptotically uniformly integrable} (u.i.) if $\max_{i}\E[\norm{X_{n,i}}\ind\{\norm{X_{n,i}}>y\}\mid \F]$ is a.n.
\end{itemize}
\end{definition}

\begin{theorem}[Conditional Weak Law of Large Numbers]\label{thm:CWLLN}
Let $\seq{(Y_n,G_n)}$ be $(\L_v,\psi,\CS)$-weakly dependent satisfying Assumption \ref{assu:min_dist}, \ref{assu:psi_weak_dep}, and \ref{assu:msble_dist}. Suppose that $\seq{Y_n}$ is $\CS$-asympto-tically u.i. and
\[
	\frac{1}{n}\sum_{s\ge 1}\delta_n^{\partial}(s;1)\gamma_{n,s}\to 0 \qtext{a.s.}
\]
Then
\[
	\norm{\frac{1}{n}\sum_{i\in N_n}(Y_{n,i}-\E[Y_{n,i}\mid \CS])}_{\CS,1}\to 0 \qtext{a.s.}\footnote{
		For a random vector $X$ and $\F\subset\H$ we write $\norm{X}_{\F,p}\equiv \E[\norm{X}^p\mid \F]^{1/p}$.
	}
\]
\end{theorem}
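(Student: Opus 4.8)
The plan is to use a smooth-truncation-plus-variance-bound argument adapted to the conditional setting. Write $\tilde{Y}_{n,i}\eqdef Y_{n,i}-\E[Y_{n,i}\mid\CS]$ and $S_n\eqdef n^{-1}\sum_{i\in N_n}\tilde{Y}_{n,i}$, so the goal is $\norm{S_n}_{\CS,1}\to 0$ a.s. Since a hard truncation by an indicator would destroy the Lipschitz structure underlying Definition \ref{def:weak_dep}, I would instead truncate radially via $h_M(y)\eqdef y\min(1,M/\norm{y})$, which is bounded by $M$ and Lipschitz with a constant independent of $M$ (it is the nearest-point retraction onto a ball). By the hereditary property noted after Assumption \ref{assu:psi_weak_dep}, the array $\{h_M(Y_{n,i}):i\in N_n\}$ is again $(\L_v,\psi,\CS)$-weakly dependent with the same coefficients $\gamma_{n,s}$. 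I then decompose $S_n$ into a truncated centered part $n^{-1}\sum_i(h_M(Y_{n,i})-\E[h_M(Y_{n,i})\mid\CS])$ and a remainder $n^{-1}\sum_i(R_{n,i}^M-\E[R_{n,i}^M\mid\CS])$ with $R_{n,i}^M\eqdef Y_{n,i}-h_M(Y_{n,i})$, and bound each piece separately.

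For the truncated part, fix $M$ and control the conditional second moment componentwise. Expanding the square yields $n^{-2}\sum_{i,j}\Cov((h_M)_k(Y_{n,i}),(h_M)_k(Y_{n,j})\mid\CS)$. Assumption \ref{assu:min_dist} forces the diagonal (pairs with $d_n(i,j)<1$) to consist only of $i=j$, contributing at most $M^2/n$. For the off-diagonal terms I would group pairs by $s=\floor{d_n(i,j)}\ge 1$ and apply the covariance bound \eqref{eq:cov_bound}: since each scalar component of $h_M$ lies in $\L_{v,1}$ with sup-norm at most $M$ and bounded Lipschitz constant, Assumption \ref{assu:psi_weak_dep} gives $\psi_{1,1}\le C'M^2$ uniformly for $M\ge 1$. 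Counting the pairs at floor-distance $s$ as $\sum_{i\in N_n}\absin{N_n^{\partial}(i;s)}=n\delta_n^{\partial}(s;1)$ collapses the double sum to $C'M^2 n^{-1}\sum_{s\ge 1}\delta_n^{\partial}(s;1)\gamma_{n,s}$, which tends to $0$ a.s. by hypothesis (Assumption \ref{assu:msble_dist} ensures the denseness measures are $\CS$-measurable, so they behave as constants under the conditioning). Conditional Jensen's inequality then shows that the conditional $L^1$-norm of the truncated centered part vanishes a.s. for each fixed $M$.

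For the remainder, the triangle inequality together with conditional Jensen gives $\norm{n^{-1}\sum_i(R_{n,i}^M-\E[R_{n,i}^M\mid\CS])}_{\CS,1}\le 2\max_i\E[\norm{R_{n,i}^M}\mid\CS]$, and since $\norm{R_{n,i}^M}\le\norm{Y_{n,i}}\ind\{\norm{Y_{n,i}}>M\}$ this is at most $2\max_i\E[\norm{Y_{n,i}}\ind\{\norm{Y_{n,i}}>M\}\mid\CS]$, which is asymptotically negligible by the $\CS$-asymptotic uniform integrability hypothesis. Combining the two bounds, for a.e. $\omega$ and every integer $M$ one obtains $\limsup_n\norm{S_n}_{\CS,1}\le 2\limsup_n\max_i\E[\norm{Y_{n,i}}\ind\{\norm{Y_{n,i}}>M\}\mid\CS]$; letting $M\to\infty$ and invoking the a.n. property (using monotonicity of the tail in $M$ to pass from the essential infimum to the limit) drives the right-hand side to $0$.

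I expect the main obstacle to be the bookkeeping of the almost-sure and conditional statements rather than any single estimate: the truncated-part bound holds a.s. for each fixed $M$, whereas the uniform-integrability input is phrased through the essential infimum over truncation levels in the definition of asymptotic negligibility. The delicate step is to intersect the a.s. events over a countable set of levels $M$ and then exploit monotonicity of the tail in $M$ to convert $\essinf_M\limsup_n$ into an honest limit, so that both estimates can be combined on a single full-measure event.
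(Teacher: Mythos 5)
Your proposal is correct and follows essentially the same route as the paper's proof: truncate, bound the truncated centered part in conditional $L^2$ via the $\psi$-weak dependence covariance bound and the pair count $n\delta_n^{\partial}(s;1)$, bound the remainder by the $\CS$-asymptotic uniform integrability, and combine through the essential-infimum/countable-levels argument. The only (immaterial) difference is that the paper first reduces to the scalar case via $c^{\top}Y_{n,i}$ for unit vectors $c$ and then censors with $\varphi_k$, whereas you truncate the vectors radially with $h_M$ and argue componentwise.
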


\begin{remark*}
Similarly to the unconditional case a sufficient condition for the $\CS$-asymptotic uniform integrability of $\seq{Y_n}$ is the a.s.\ finiteness of $\sup_{n,i\in N_n}\E[\normin{Y_{n,i}}^p\mid \CS]$ for some $p>1$.
\end{remark*}

Finally, let $\bar{Y}_n\eqdef n^{-1}\sum_{i\in N_n}Y_{n,i}$ and $\Sigma_n\eqdef \Var(\sqrt{n}\bar{Y}_n\mid \CS)$. Then the network HAC estimator of $\Sigma_n$,
\begin{equation}
\label{eq:HAC}
	\hat{\Sigma}_n=\frac{1}{n}\sum_{i,j\in N_n}\kappa\left(\frac{d_n(i,j)}{b_n+1}\right)(Y_{n,i}-\bar{Y}_n)(Y_{n,j}-\bar{Y}_n)^{\top},
\end{equation}
where $\kappa:\bar{\R}\to[-1,1]$ is a kernel function satisfying: $\kappa(0)=1$, $\kappa(z)=\kappa(-z)$, and $\kappa(z)=0$ for $\abs{z}>1$ and $b_n$ is the lag truncation parameter, is consistent under the same set of assumptions. Unfortunately, due to the irregularity of a network's structure, this estimator is not guaranteed to be positive semi-definite. However, once the minimal eigenvalue of $\Sigma_n$ is a.s.\ bounded from below or it converges to an a.s.\ positive definite matrix, a simple way to fix this issue is available. The details are given in Appendix \ref{section:app_HAC}.

\section{Conditional Bootstrap}
\label{sec:cond_bootstrap}

In this section we present some general result regarding the conditional bootstrap. The latter is useful for an inference which is asymptotically valid for almost all $\omega\in \Omega$ (or almost all realizations of the common shock). These results do not depend on the underlying data generating process. However, we use the present framework for convenience.

Suppose that $\seq{(Y_n, G_n)}$ is a sequence of network dependent processes. For a given $n\ge 1$ let $\theta_n$ be a $\CS$-measurable parameter taking values in $\Theta\subseteq \R^w$ with $w\ge 1$ and let
\[
	T_n(\theta_n)\eqdef\T_n(Y_n,\theta_n;\vartheta_n),
\]
where $\T_n$ is a measurable, real-valued function and $\vartheta_n$ is a $\CS$-measurable nuisance parameter, denote a statistic used to conduct inference on $\theta_n$ based on a realization of $(Y_n,G_n)$ conditionally on $\CS$.

Let $\FD{n}{}$ denote the conditional cdf of $T_n$ given $\CS$.\footnote{
	A (regular) conditional cdf $\FD{X}{\F}$ of $X\in \R$ given $\F\subset\H$ satisfies: (i) $\forall x\in \R$, $\FD{X}{\F}(\csdot,x)$ is a version of $\PR{X\le x\mid \F}$, and (ii) $\forall \omega\in \Omega$, $\FD{X}{\F}(\omega,\csdot)$ is a distribution function. We omit the subscript $X$ or the superscript $\F$ whenever clear from the context.
}
The goal of this section is to provide sufficient conditions for the conditional first-order consistency of resampling estimators of $\FD{n}{}$. Specifically, let $\G_n\eqdef \CS \vee \sigma(Y_n)$ and let $Y_n^{*}$ be a pseudo-sample drawn using a realization of $Y_n$. Then the bootstrap counterpart of $T_n$ is $T_n^{*}\eqdef \T_m(Y_n^{*},\theta_n^{*})$, where $m$ is the size of $Y_n^{*}$ and $\theta_n^{*}\equiv \theta_n^{*}(Y_n)$ is an estimator of $\theta_n$. The conditional cdf $\FD{n}{*}$ of $T_n^{*}$ is used as an approximation of $\FD{n}{}$. If the latter explicitly depends on the nuisance parameter $\vartheta_n$, then one needs to provide its consistent estimator based on both $Y_n$ and $Y_n^{*}$.

A typical way of showing the consistency of the bootstrap estimators is bounding the Kolmogorov distance between the cdfs of $T_n$ and $T_n^{*}$ \citep[see, e.g.,][Chapter~3]{Shao:95:Bootstrap}. For random variables $X$ and $Y$ and sub-$\sigma$-fields $\F\subset\G\subset\H$ the conditional version of the latter is defined by
\[
	\DK{X,Y\mid \G,\F}\eqdef \sup_{x\in \R}\abs{\FD{X}{\G}(\csdot,x)-\FD{Y}{\F}(\csdot,x)},\footnote{
		Note that $\DK{\csdot,\csdot\mid \G,\F}$ is $\G$-measurable because $\seq{Z_x}$, where $Z_x\eqdef\abs{\FD{X}{\G}(\csdot,x)-\FD{Y}{\F}(\csdot,x)}$, is a c\'adl\'ag stochastic process).
	}
\]
where $\FD{X}{\G}$ and $\FD{Y}{\F}$ are the conditional cdfs of $X$ and $Y$, respectively (when $\F=\G$ we denote this measure by $\DK{X,Y\mid \F}$). In addition, we define the conditional convergence in probability and the almost sure convergence of conditional distributions.\footnote{
	A (regular) conditional distribution $\QM_X^{\F}$ of $X\in \R^v$ given $\F\subset\H$ satisfies: (i) $\forall B\in \B(\R^v)$, $\QM_X^{\F}(\csdot, B)$ is a version of $\PR{X\in B\mid \F}$ and (ii) $\forall \omega\in\Omega$, $\QM_X^{\F}(\omega,\csdot)$ is a probability measure on $(\R,\B(\R^v))$. We omit the subscript $X$ or the superscript $\F$ whenever clear from the context.
}

\begin{definition}
\label{def:CCP}
Let $\F\subset\H$ be a sub-$\sigma$-field and let $Z$ be a $\F$-measurable random vector in $\R^v$ with $v\ge 1$. A sequence of $\R^v$-valued random vectors $Z_n\PRC{\F}Z \qtext{a.s.}$ if for any $\epsilon>0$, $\PR{\norm{Z_n-Z}>\epsilon\mid \F}\to 0$ a.s.\footnote{
	Note that this implies convergence in probability due to the dominated convergence theorem. In addition, for an a.s.\ positive $\F$-measurable random variable $\nu$, $\PR{\norm{Z_n-Z}>\nu\mid \F}\to 0$ a.s.
}
\end{definition}

\begin{definition}
\label{def:CWC}
Suppose that $\seq{X_n}$ is a sequence of random vectors on $(\Omega,\H,\PM)$ and $\F\subset\H$. Let $\QM_n$ be the conditional distribution of $X_n$ given $\F$.
We say that $X_n$ converges $\F$-weakly to $X$ having the conditional distribution $\QM$ if for almost all $\omega\in\Omega$ the sequence $\seq{\QM_n(\omega,\csdot)}$ converges weakly to $\QM(\omega,\csdot)$.
\end{definition}

\begin{remark*}
(a) Equivalently, the $\F$-weak convergence can be defined using the notion of probability kernels. So the limiting random vector $X$ is an artificial construct which is used to describe the limiting kernel. (b) A more general notion of the almost sure convergence of conditional probability measures and some of its properties are presented in \cite*{Berti:06}. (c) The notion of the $\F$-weak convergence is stronger than the $\F$-stable convergence and the usual weak convergence. In particular, if $X_n\to X$\, $\F$-weakly, then for any real-valued, bounded, continuous function $f$, $\E[f(X_n)\mid \F]\to\E[f(X)\mid \F]$ a.s., which implies that $X_n$ converges to $X$\, $\F$-stably and in distribution.
\end{remark*}

Assume for a moment that $\CS=\{\emptyset,\Omega\}$. Then if there exists a sequence of random variables $\seq{S_n}$ such that $\DK{T_n,S_n\mid \CS}$ converges to 0 as $n\to\infty$ and $\DK{T_n^{*},S_n\mid \G_n,\CS}$ converges to $0$ a.s.\ (in probability), then the bootstrap estimator is first-order strongly (weakly) consistent. Moreover, if $S_n$ converges weakly to a continuous limit, then the conditional quantiles of $\FD{n}{*}$ are a good approximation to those of $\FD{n}{}$. This typically happens when the statistic $T_n$ is pivotal. However, in the case of a non-pivotal statistic, which is useful when a consistent estimator of $\vartheta_n$ is hard to obtain or the available estimators have poor finite sample properties, the cdfs of $\seq{T_n}$ need not converge.\footnote{
	Consider, for example, the case of the linearized statistic $T_n'$ given in \eqref{eq:linear_stat}. It does not have a nondegenerate weak limit when the sequence of parameters $\seq{\theta_n}$ is not convergent.
}
In this case, the convergence of the Kolmogorov distance between $T_n^{*}$ and $T_n$ to zero does not necessarily imply that $\FD{n}{}(c_n^{*}(\alpha))\to \alpha$ as $n\to\infty$, where $c_n^{*}(\alpha)$ is the conditional $\alpha$-quantile of $\FD{n}{*}$. Nevertheless, as shown in the next result, a sufficient condition for the latter to happen is the continuity of the cdfs of $\seq{S_n}$.

\begin{theorem}
\label{thm:bootstrap_consistency1}
Suppose that for all $n\ge 1$, $S_n$ is conditionally independent of $Y_n$ given $\CS$ and the conditional cdf of $S_n$ given $\CS$ is \tn(a.s.\tn) continuous. Then if
\begin{itemize}[leftmargin=1.65em]
	\item[\tn{(a)}] $\DK{T_n,S_n\mid \CS}\to 0$ a.s.\ and
	\item[\tn{(b)}] $\DK{T_n^{*},S_n\mid \G_n,\CS}\PRC{\CS} 0$ a.s.,
\end{itemize}
$\DK{T_n^{*},T_n\mid \G_n,\CS}\PRC{\CS} 0$ a.s.\ and
\[
	\esssup_{\alpha\in(0,1)}\abs{\PR{T_n\le c_n^{*}(\alpha)\mid \CS}-\alpha}\to 0 \qtext{a.s.}
\]
\end{theorem}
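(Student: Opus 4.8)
The plan is to treat the two conclusions in turn, the first via a triangle inequality and the second by anchoring the data-dependent bootstrap quantiles to the ($\CS$-measurable) quantiles of the auxiliary sequence $\seq{S_n}$. Throughout, write $F_n\eqdef\FD{T_n}{\CS}$, $F_n^{*}\eqdef\FD{T_n^{*}}{\G_n}$, and $G_n\eqdef\FD{S_n}{\CS}$ for the three conditional cdfs, and set $\rho_n\eqdef\DK{S_n,T_n\mid\CS}=\sup_x\abs{F_n(\csdot,x)-G_n(\csdot,x)}$ and $\Delta_n\eqdef\DK{T_n^{*},S_n\mid\G_n,\CS}=\sup_x\abs{F_n^{*}(\csdot,x)-G_n(\csdot,x)}$. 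For the first conclusion, applying the pointwise triangle inequality for $\abs{\csdot}$ and then taking $\sup_x$ gives
\[
	\DK{T_n^{*},T_n\mid\G_n,\CS}\le \Delta_n+\rho_n.
\]
By (b), $\Delta_n\PRC{\CS}0$ a.s. By (a), $\rho_n\to 0$ a.s.; since $\rho_n$ is $\CS$-measurable, $\PR{\rho_n>\epsilon\mid\CS}=\ind\{\rho_n>\epsilon\}\to 0$ a.s., so $\rho_n\PRC{\CS}0$ a.s.\ as well. A conditional union bound, $\PR{\Delta_n+\rho_n>\epsilon\mid\CS}\le\PR{\Delta_n>\epsilon/2\mid\CS}+\PR{\rho_n>\epsilon/2\mid\CS}$, then yields $\DK{T_n^{*},T_n\mid\G_n,\CS}\PRC{\CS}0$ a.s. Neither continuity of $G_n$ nor conditional independence of $S_n$ and $Y_n$ is needed here.

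For the second conclusion I would use two ingredients. First, let $c_n(\beta)\eqdef\inf\{x:G_n(\csdot,x)\ge\beta\}$ denote the $\CS$-measurable quantiles of $S_n$. Because $c_n(\beta)$ is $\CS$-measurable, approximating it by $\CS$-simple functions together with regularity of the conditional cdf gives the plug-in identity $\PR{T_n\le c_n(\beta)\mid\CS}=F_n(\csdot,c_n(\beta))$ a.s.; combining $\sup_x\abs{F_n-G_n}\le\rho_n$ with continuity of $G_n$ (so that $G_n(\csdot,c_n(\beta))=\beta$ for $\beta\in(0,1)$) then yields the uniform-in-$\beta$ bound $\abs{\PR{T_n\le c_n(\beta)\mid\CS}-\beta}\le\rho_n$. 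Second, the elementary deterministic quantile sandwich applied to $\sup_x\abs{F_n^{*}-G_n}\le\Delta_n$ gives, for each $\omega$, $c_n(\alpha-\Delta_n)\le c_n^{*}(\alpha)\le c_n(\alpha+\Delta_n)$.

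Now fix $\epsilon>0$. On $\{\Delta_n\le\epsilon\}\in\G_n$, monotonicity of $c_n(\csdot)$ upgrades the sandwich to $c_n(\alpha-\epsilon)\le c_n^{*}(\alpha)\le c_n(\alpha+\epsilon)$, so that $\ind\{T_n\le c_n^{*}(\alpha)\}$ is squeezed between $\ind\{T_n\le c_n(\alpha-\epsilon)\}$ from below and $\ind\{T_n\le c_n(\alpha+\epsilon)\}$ from above, the two one-sided bounds holding modulo the indicator $\ind\{\Delta_n>\epsilon\}$. Taking $\E[\csdot\mid\CS]$ and inserting the plug-in bound at the two $\CS$-measurable endpoints gives, uniformly in $\alpha\in(0,1)$,
\[
	\abs{\PR{T_n\le c_n^{*}(\alpha)\mid\CS}-\alpha}\le \epsilon+\rho_n+\PR{\Delta_n>\epsilon\mid\CS}.
\]
Taking $\esssup_{\alpha}$ and then $\limsup_{n}$, the right-hand side tends to $\epsilon$ a.s.\ by (a) and (b); letting $\epsilon=1/k$ with $k\in\N$ and intersecting the corresponding full-measure sets delivers the claim. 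Boundary levels $\alpha\pm\epsilon\notin(0,1)$ are handled by the convention $c_n(\beta)=\pm\infty$, which leaves the uniform bound intact.

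The \emph{main obstacle} is precisely the mismatch that dictates this detour: $c_n^{*}(\alpha)$ is $\G_n$-measurable, hence correlated with $T_n$, so the clean identity $\PR{T_n\le c\mid\CS}=F_n(\csdot,c)$ is unavailable at the random point $c=c_n^{*}(\alpha)$. Replacing it by the $\CS$-measurable quantiles $c_n(\alpha\pm\epsilon)$ of $S_n$ — legitimate on $\{\Delta_n\le\epsilon\}$ by the deterministic sandwich — restores the plug-in identity, and the only price is the penalty $\PR{\Delta_n>\epsilon\mid\CS}$, which vanishes a.s.\ by hypothesis (b). The remaining care is bookkeeping: the order of the ``a.s.'' and ``for every $\epsilon$'' quantifiers (resolved by the countable choice $\epsilon=1/k$) and the boundary values of $\alpha$.
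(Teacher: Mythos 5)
Your proof is correct, and its skeleton is the same as the paper's: the first conclusion by the triangle inequality (plus the observation that the $\CS$-measurable term converges in conditional probability), and the second by trapping $c_n^{*}(\alpha)$ between $\CS$-measurable quantiles $c_n(\alpha\pm\epsilon)$ of $S_n$ on the event $\{\Delta_n\le\epsilon\}$, invoking the plug-in identity only at $\CS$-measurable points, paying $\PR{\Delta_n>\epsilon\mid\CS}$ for the bad event, and closing with the essential-supremum and countable-$\epsilon$ bookkeeping. The genuine difference is how the quantile sandwich is produced. The paper obtains it probabilistically: it evaluates $\PR{S_n\le c_n^{*}(\alpha+\eta)\mid\G_n}$ at a $\G_n$-measurable point, which is legitimate only because $S_n$ is assumed conditionally independent of $Y_n$ given $\CS$ (so that this conditional probability equals the $\CS$-conditional cdf of $S_n$ at that point), and then converts the resulting cdf inequalities into quantile inequalities on $\{\Delta_n\le\eta\}$ via generalized-inverse properties. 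You obtain the same sandwich, $c_n(\alpha-\Delta_n)\le c_n^{*}(\alpha)\le c_n(\alpha+\Delta_n)$, as a purely deterministic, $\omega$-wise fact about two cdfs at Kolmogorov distance $\Delta_n$ --- the same generalized-inverse manipulations, but applied pathwise, so no statement about the law of $S_n$ given $\G_n$ is ever needed. The payoff is twofold: your argument never uses the conditional-independence hypothesis for either conclusion, so it proves a marginally stronger theorem; and it yields a slightly sharper bound, $\epsilon+\DK{T_n,S_n\mid\CS}+\PR{\Delta_n>\epsilon\mid\CS}$, against the paper's $3\eta+2\PR{\Delta_n>\eta\mid\CS}+3\DK{T_n,S_n\mid\CS}$, while your $\pm\infty$ convention treats levels $\alpha$ near $0$ and $1$ explicitly, where the paper's restriction $\alpha\pm 2\eta\in(0,1)$ leaves a boundary case implicit. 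What the paper's route offers in exchange is mainly transparency about why the independence hypothesis appears in the statement; your proof shows that, for the two stated conclusions, it is dispensable.
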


\begin{remark*}
(a) Usually when $\CS=\{\emptyset,\Omega\}$ and the statistic $T_n$ is pivotal, we have $S_n=S_{\infty}$, which is the weak limit of $T_n$. (b) A variant of this result can be found in \citet[]{Chernozhukov/Chetverikov/Kato:13} in the context of Gaussian multiplier bootstrap. (c) Theorem \ref{thm:bootstrap_consistency1} also implies that the conditional quantiles $\{c_n^{*}(\alpha):\alpha\in (0,1)\}$ approximate the unconditional quantiles of $T_n$ because, by the dominated convergence theorem,
\[
	\sup_{\alpha\in (0,1)}\abs{\PR{T_n\le c_n^{*}(\alpha)}-\alpha}\le \E\esssup_{\alpha\in(0,1)}\abs{\PR{T_n\le c_n^{*}(\alpha)\mid \CS}-\alpha}\to 0.
\]
\end{remark*}

\begin{definition}
\label{def:bootstrap_consistency}
We say that $\FD{n}{*}$ is conditionally $\DKM$-consistent given $\CS$ if the conclusion of Theorem \ref{thm:bootstrap_consistency1} holds.
\end{definition}

Typically it is not hard to show that condition (a) of Theorem \ref{thm:bootstrap_consistency1} holds (for example, when the elements of $Y_n^{*}$ are conditionally i.i.d.\ given $\G_n$). On contrary, establishing (b) may be a difficult task, especially when $T_n$ is a nonlinear transformation of $Y_n$ in the presence of stochastic dependence between its elements as in the current framework. However, in the case when the statistic $T_n$ converges $\CS$-weakly to $S$ and the limiting kernel (i.e., the regular conditional cdf of $S$ given $\CS$) is continuous, Lemma \ref{lemma:aux_weak_eq} implies that this convergence is equivalent to one with respect to the conditional Kolmogorov distance. In addition, by Lemma \ref{lemma:aux_weak_conv} the almost sure convergence of conditional distributions enjoys a number of useful properties associated with the usual weak convergence such as the continuous mapping theorem, converging together lemma, and the Cram\'er–Wold device. In this situation we have the following simple corollary.

\begin{corollary}
\label{corr:bootstrap_consistency1}
Suppose that $S$ is conditionally independent of $\seq{Y_n}$ given $\CS$ and the conditional cdf of $S$ given $\CS$ is \tn(a.s.\tn) continuous. Then if
\begin{itemize}[leftmargin=1.65em]
	\item[\tn{(a)}] $T_n\to S$ \, $\CS$-weakly and
	\item[\tn{(b)}] $\DK{T_n^{*},S\mid \G_n,\CS}\PRC{\CS} 0$ a.s.,
\end{itemize}
$\FD{n}{*}$ is conditionally $\DKM$-consistent given $\CS$.
\end{corollary}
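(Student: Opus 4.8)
The plan is to deduce the corollary directly from Theorem~\ref{thm:bootstrap_consistency1} by taking the constant comparison sequence $S_n\equiv S$. With this choice the two displays constituting the conclusion of Theorem~\ref{thm:bootstrap_consistency1} are exactly the statements defining conditional $\DKM$-consistency in Definition~\ref{def:bootstrap_consistency}, so it suffices to check that the four hypotheses of the theorem hold for $\seq{S_n}=\seq{S}$.

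Three of those hypotheses transfer immediately. The conditional independence of $S_n$ and $Y_n$ given $\CS$, required for each $n$, follows from the assumed conditional independence of $S$ and the entire sequence $\seq{Y_n}$ given $\CS$; the (a.s.) continuity of the conditional cdf of $S_n$ given $\CS$ is precisely the assumed continuity of the conditional cdf of $S$; and condition~(b) of Theorem~\ref{thm:bootstrap_consistency1}, namely $\DK{T_n^{*},S\mid\G_n,\CS}\PRC{\CS}0$ a.s., is literally condition~(b) of the corollary. Thus the only substantive step is to verify condition~(a) of the theorem, i.e.\ $\DK{T_n,S\mid\CS}\to 0$ a.s.

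To establish this I would invoke Lemma~\ref{lemma:aux_weak_eq}, which is tailored to exactly this situation: when the limiting kernel (the regular conditional cdf of $S$ given $\CS$) is a.s.\ continuous, the $\CS$-weak convergence $T_n\to S$ is equivalent to convergence in the conditional Kolmogorov distance, $\DK{T_n,S\mid\CS}\to 0$ a.s. Since condition~(a) of the corollary supplies exactly the $\CS$-weak convergence, the equivalence furnishes condition~(a) of the theorem. This is the conditional analogue of P\'olya's theorem, and the one delicacy—handled inside Lemma~\ref{lemma:aux_weak_eq}—is upgrading pointwise-in-$x$ convergence of the conditional cdfs to uniform convergence off a \emph{single} $\CS$-null set, i.e.\ uniformly over the supremum in $x$, using the a.s.\ continuity and monotonicity of the limiting kernel. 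This is where the main difficulty of the whole argument resides.

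With all four hypotheses verified for $S_n\equiv S$, Theorem~\ref{thm:bootstrap_consistency1} applies and yields both $\DK{T_n^{*},T_n\mid\G_n,\CS}\PRC{\CS}0$ a.s.\ and $\esssup_{\alpha\in(0,1)}\abs{\PR{T_n\le c_n^{*}(\alpha)\mid\CS}-\alpha}\to 0$ a.s., which is the content of Definition~\ref{def:bootstrap_consistency}. In short, once Lemma~\ref{lemma:aux_weak_eq} is in hand the corollary is a direct specialization of Theorem~\ref{thm:bootstrap_consistency1} to a constant comparison sequence, and the genuine work has already been front-loaded into that lemma.
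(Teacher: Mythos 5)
Your proposal is correct and matches the paper's intended argument exactly: the paper justifies this corollary in the text immediately preceding it, noting that Lemma \ref{lemma:aux_weak_eq} converts the $\CS$-weak convergence in condition (a) into $\DK{T_n,S\mid\CS}\to 0$ a.s.\ (using the a.s.\ continuity of the limiting kernel), after which the corollary is a direct specialization of Theorem \ref{thm:bootstrap_consistency1} with $S_n\equiv S$. Your verification of the remaining hypotheses (conditional independence, continuity, and condition (b)) is the same routine transfer the paper leaves implicit.
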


Next, we consider the case in which the statistic $T_n$ takes the following form:
\[
	T_n(\theta_n)=\tau_n\left(\phi(\hat{\theta}_n)-\phi(\theta_n)\right),
\]
where $\phi:\Theta\to \R$ is a continuously differentiable function, $\hat{\theta}_n$ is a consistent estimator of $\theta_n$ (in the sense of Definition \ref{def:CCP}) and $\tau_n$ is a normalizing coefficient. In particular, the \textit{smooth function model} (see, e.g., \citealp[Section~4.2]{Lahiri:03:Resampling} and \citealp[Section~2.4]{Hall:92:Bootstrap}) falls into this case. The resampling version of the statistic $T_n$ is
\[
	T_n^{*}=\tau_n^{*}\left(\phi(\hat{\theta}_n^{*})-\phi(\theta_n^{*})\right),
\]
where $\theta_n^{*}$ is a consistent estimator of $\theta_n$, which may differ from $\hat{\theta}_n$, and $\tau_n^{*}$ is the bootstrap counterpart of $\tau_n$. Let $\xi_n\eqdef \tau_n(\hat{\theta}_n-\theta_n)$ and $\xi_n^{*}\eqdef \tau_n^{*}(\hat{\theta}_n^{*}-\theta_n^{*})$. Consider the linearized statistics
\begin{equation}
\label{eq:linear_stat}
	T_n'\eqdef \nabla\phi(\theta_n)^{\top}\zeta_n \qtext{and}\quad T_n'^{*}\eqdef \nabla\phi(\theta_n^{*})^{\top}\xi_n^{*}.
\end{equation}
The following result shows that it suffices to find a ``smooth'' approximation $S_n'$ of the linearized statistics in order to apply Theorem \ref{thm:bootstrap_consistency1} to this setup. In particular, the result largely depends on the asymptotic behavior of the conditional L\'evy concentration function of $S_n'$. For a random variable $X$, $\epsilon>0$, and a sub-$\sigma$-field $\F\subset\H$ the latter is given by
\[
	\LC{\epsilon,X\mid \F}\eqdef \sup_{x\in \R}\left(\FD{X}{\F}(\csdot,x+\epsilon)-\FD{X}{\F}(\csdot,x-)\right).
\]

\begin{lemma}
\label{lemma:bootstrap_consistency2}
Suppose that $\hat{\theta}_n^{*}-\theta_n^{*}\PRC{\CS}0$ a.s., $\xi_n^{*}$ and $\xi_n$ are $\CS$-asymptotically tight and $\sup_n\norm{\theta_n}<\infty$ a.s. Furthermore, assume that
\begin{itemize}[leftmargin=1.65em]
	\item[\tn{(a)}] $\DK{T_n',S_n'\mid \CS}\to 0$ a.s.,
	\item[\tn{(b)}] $\DK{T_n'^{*},S_n'\mid \G_n,\CS}\PRC{\CS}0$ a.s.\ and
	\item[\tn{(c)}] $\LC{\epsilon,S_n'\mid \CS}$ is a.n.
\end{itemize}
Then w.p.1,
\[
	\DK{T_n^{*},S_n'\mid \G_n,\CS}\PRC{\CS}0 \qtext{and}\quad \DK{T_n,S_n'\mid \CS}\to 0.
\]
\end{lemma}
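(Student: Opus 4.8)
The plan is to treat the statement as a conditional delta-method argument: reduce the nonlinear statistics $T_n$ and $T_n^{*}$ to their linearizations $T_n'$ and $T_n'^{*}$, and then transfer the Kolmogorov closeness of the linear parts to the full statistics through an anti-concentration bound governed by the L\'evy concentration function of $S_n'$. First I would use the mean value theorem together with the continuous differentiability of $\phi$ to write
\[
	T_n = T_n' + R_n, \qquad R_n \eqdef (\nabla\phi(\tilde\theta_n) - \nabla\phi(\theta_n))^{\top}\xi_n,
\]
where $\tilde\theta_n$ lies on the segment joining $\theta_n$ and $\hat\theta_n$, and analogously $T_n^{*} = T_n'^{*} + R_n^{*}$ with $R_n^{*} \eqdef (\nabla\phi(\tilde\theta_n^{*}) - \nabla\phi(\theta_n^{*}))^{\top}\xi_n^{*}$ and $\tilde\theta_n^{*}$ between $\theta_n^{*}$ and $\hat\theta_n^{*}$. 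The argument then splits into (i) showing the remainders are conditionally negligible and (ii) passing from the linear parts to the full statistics.

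For step (i) I would bound $\abs{R_n} \le \norm{\nabla\phi(\tilde\theta_n) - \nabla\phi(\theta_n)}\,\norm{\xi_n}$. Since $\hat\theta_n - \theta_n \PRC{\CS} 0$ and $\sup_n\norm{\theta_n}<\infty$ a.s., for almost every $\omega$ the points $\theta_n$ and $\tilde\theta_n$ eventually lie in a fixed compact ball on which $\nabla\phi$ is uniformly continuous, so $\norm{\nabla\phi(\tilde\theta_n)-\nabla\phi(\theta_n)} \PRC{\CS} 0$; splitting on $\{\norm{\xi_n} > M\}$ and using the $\CS$-asymptotic tightness of $\xi_n$ to control the tail (along the countable sequence of thresholds attaining the essential infimum in the definition of a.n.) then yields $R_n \PRC{\CS} 0$ a.s. The bootstrap remainder is handled in the same fashion, with the extra observation that $\theta_n^{*}$ is $\G_n$-measurable and $\CS$-tight (which follows from its consistency together with $\sup_n\norm{\theta_n}<\infty$), so that $\PR{\abs{R_n^{*}} > \epsilon \mid \G_n} \PRC{\CS} 0$ a.s.

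For step (ii) the key tool is the anti-concentration inequality
\[
	\DK{X + R, X \mid \F} \le \PR{\abs{R} > \epsilon \mid \F} + \LC{\epsilon, X \mid \F},
\]
combined with the L\'evy-concentration comparison $\LC{\epsilon, T_n' \mid \CS} \le \LC{\epsilon, S_n'\mid \CS} + 2\DK{T_n', S_n'\mid\CS}$, which holds because the conditional cdfs of $T_n'$ and $S_n'$ given $\CS$ differ by at most $\DK{T_n',S_n'\mid\CS}$ at every point and left limit. Chaining these with the triangle inequality for the conditional Kolmogorov distance gives
\[
	\DK{T_n, S_n'\mid\CS} \le \PR{\abs{R_n} > \epsilon \mid \CS} + \LC{\epsilon, S_n'\mid\CS} + 3\DK{T_n', S_n'\mid\CS},
\]
and the analogous bound for $\DK{T_n^{*}, S_n'\mid\G_n,\CS}$ with $R_n^{*}$ and $\DK{T_n'^{*}, S_n'\mid\G_n,\CS}$ on the right. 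Fixing a.e. $\omega$, assumption (c) lets me choose $\epsilon$ (from the sequence attaining the essential infimum) making $\limsup_n \LC{\epsilon, S_n'\mid\CS}$ arbitrarily small; for that fixed $\epsilon$ the remainder term vanishes by step (i) and the linear-part term by (a), respectively (b), and letting the tolerance shrink delivers $\DK{T_n, S_n'\mid\CS}\to 0$ a.s.\ and $\DK{T_n^{*}, S_n'\mid\G_n,\CS}\PRC{\CS}0$ a.s.

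The main obstacle I expect is the bootstrap remainder control in step (i), namely $\PR{\abs{R_n^{*}} > \epsilon \mid \G_n} \PRC{\CS} 0$ a.s. This is a nested conditional statement: the inner conditional probability given $\G_n$ averages over the bootstrap draw while the outer $\PRC{\CS}$ convergence averages over the data, and reconciling the uniform continuity of $\nabla\phi$ around the $\G_n$-measurable (hence random) center $\theta_n^{*}$ with the $\CS$-conditional tightness of $\xi_n^{*}$ requires care. This is precisely where the countable-sequence characterization of the essential infimum and the footnoted extension of Definition \ref{def:CCP} to $\CS$-measurable thresholds are used, so that all the bounds remain jointly valid off a single null set.
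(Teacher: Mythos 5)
Your proposal is correct and follows essentially the same route as the paper's proof: a mean-value-theorem linearization, control of the remainders $R_n$ and $R_n^{*}$ via the tightness of $\xi_n$, $\xi_n^{*}$, and $\theta_n^{*}$ together with continuity of $\nabla\phi$ (the paper packages this as Lemmas \ref{lemma:aux_asy_tightness} and \ref{lemma:aux_CMT}), and an anti-concentration transfer through $\LC{\csdot,S_n'\mid\CS}$ followed by the triangle inequality and the essential-infimum/Markov handling of the nested conditioning. The only differences are cosmetic — you route the concentration comparison through $\LC{\epsilon,T_n'\mid\CS}$ whereas the paper substitutes $S_n'$ directly inside the cdf difference, yielding the same bound with the same constants.
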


Consequently, the continuity of the conditional cdfs of $\seq{S_n'}$ ensures the bootstrap consistency in the sense of Definition \ref{def:bootstrap_consistency}.

\begin{theorem}
\label{thm:bootstrap_consistency2}
Suppose that the conditions of Lemma \ref{lemma:bootstrap_consistency2} hold and, in addition, $\seq{S_n}$ satisfy the independence and continuity conditions of Theorem \ref{thm:bootstrap_consistency1}. Then $\FD{n}{*}$ is conditionally $\DKM$-consistent given $\CS$.
\end{theorem}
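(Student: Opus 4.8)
The plan is to derive Theorem~\ref{thm:bootstrap_consistency2} by chaining Lemma~\ref{lemma:bootstrap_consistency2} into Theorem~\ref{thm:bootstrap_consistency1}, with the smooth approximating sequence $\seq{S_n'}$ produced by the lemma playing the role of the auxiliary sequence in the theorem. The essential observation is that Lemma~\ref{lemma:bootstrap_consistency2} has already carried out the only nontrivial work, namely transferring the two conditional Kolmogorov-distance conditions from the linearized statistics $T_n'$, $T_n'^{*}$ of \eqref{eq:linear_stat} to the full nonlinear statistics $T_n$, $T_n^{*}$; what remains is to feed its output into the general consistency criterion.

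First I would invoke Lemma~\ref{lemma:bootstrap_consistency2}, whose hypotheses are in force by the standing assumption of the theorem. Its conclusion is that, with probability one,
\[
	\DK{T_n^{*},S_n'\mid \G_n,\CS}\PRC{\CS}0 \qtext{and}\quad \DK{T_n,S_n'\mid \CS}\to 0.
\]
These are exactly conditions (b) and (a), respectively, of Theorem~\ref{thm:bootstrap_consistency1} once its auxiliary sequence is identified with $\seq{S_n'}$.

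Second, I would confirm that the two remaining, structural, hypotheses of Theorem~\ref{thm:bootstrap_consistency1} are supplied by the added assumption of Theorem~\ref{thm:bootstrap_consistency2}: that each $S_n'$ is conditionally independent of $Y_n$ given $\CS$, and that the conditional cdf of $S_n'$ given $\CS$ is (a.s.) continuous. Here the sequence referred to in the hypothesis is precisely the $\seq{S_n'}$ generated by the lemma, so the identification is legitimate. With all four hypotheses verified, Theorem~\ref{thm:bootstrap_consistency1} applied with $S_n=S_n'$ delivers $\DK{T_n^{*},T_n\mid \G_n,\CS}\PRC{\CS}0$ a.s.\ together with the quantile-approximation conclusion, which is exactly the assertion that $\FD{n}{*}$ is conditionally $\DKM$-consistent given $\CS$ in the sense of Definition~\ref{def:bootstrap_consistency}.

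At the level of this theorem there is no genuine obstacle; it is a short two-step invocation. All of the analytic difficulty lives upstream in Lemma~\ref{lemma:bootstrap_consistency2}, where the delta-method passage from $T_n'$ to $T_n$ is controlled by the $\CS$-asymptotic tightness of $\xi_n$ and $\xi_n^{*}$ and, decisively, by the anti-concentration encoded in the assumption that $\LC{\epsilon,S_n'\mid \CS}$ is asymptotically negligible. The only point requiring care in the present proof is the bookkeeping around the notation: one must check that the continuity and independence conditions are imposed on the same sequence $\seq{S_n'}$ that the lemma returns, so that the substitution $S_n=S_n'$ when applying Theorem~\ref{thm:bootstrap_consistency1} is valid.
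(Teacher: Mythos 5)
Your proposal is correct and is exactly the paper's argument: the paper's own proof of Theorem \ref{thm:bootstrap_consistency2} simply states that it follows immediately from Lemma \ref{lemma:bootstrap_consistency2} and Theorem \ref{thm:bootstrap_consistency1}, which is the chaining you carry out, taking the lemma's approximating sequence $\seq{S_n'}$ as the auxiliary sequence $\seq{S_n}$ in the theorem. Your added bookkeeping (checking that the independence and continuity hypotheses are imposed on the same sequence the lemma returns) is a sensible explicit verification of what the paper leaves implicit.
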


Similarly to the general case, when $\xi_n$ converges $\CS$-weakly to some random vector $\xi$ and the sequence of parameters $\seq{\theta_n}$ converges a.s.\ to a $\CS$-measurable random variable $\theta$, Lemma \ref{lemma:aux_weak_conv} implies that $T_n'$ converges $\CS$-weakly to $\nabla\phi(\theta)^{\top}\xi$. In addition, if the conditional cdf of the latter is (a.s.) continuous, it satisfies assumption (b) of Lemma \ref{lemma:bootstrap_consistency2}.

\begin{corollary}
\label{corr:bootstrap_consistency2}
Suppose that $\hat{\theta}_n^{*}-\theta_n^{*}\PRC{\CS}0$ a.s., $\xi_n^{*}$ is $\CS$-asymptotically tight and $S'\eqdef\nabla\phi(\theta)^{\top}\xi$ satisfies the independence and continuity conditions of Corollary \ref{corr:bootstrap_consistency1}. Then if
\begin{itemize}[leftmargin=1.65em]
	\item[\tn{(a)}] $\xi_n\to \xi$ \, $\CS$-weakly,
	\item[\tn{(b)}] $\theta_n\to\theta$ a.s.\ and
	\item[\tn{(c)}] $\DK{T_n'^{*},S'\mid \G_n,\CS}\PRC{\CS}0$ a.s.,
\end{itemize}
$\FD{n}{*}$ is conditionally $\DKM$-consistent given $\CS$.
\end{corollary}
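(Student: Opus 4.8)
The plan is to deduce the corollary from Theorem \ref{thm:bootstrap_consistency2} by taking the approximating sequence in Lemma \ref{lemma:bootstrap_consistency2} to be the \emph{constant} sequence $S_n'\equiv S'$. Under this choice the conclusion of Theorem \ref{thm:bootstrap_consistency2} is exactly the claim, so it suffices to check three things: the standing hypotheses of Lemma \ref{lemma:bootstrap_consistency2} ($\CS$-asymptotic tightness of $\xi_n$ and $\xi_n^{*}$, and $\sup_n\norm{\theta_n}<\infty$ a.s.), its three labeled conditions (a)--(c), and the independence and continuity conditions of Theorem \ref{thm:bootstrap_consistency1} for $S_n'\equiv S'$.

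For the standing hypotheses, $\xi_n^{*}$ is $\CS$-asymptotically tight and $\hat\theta_n^{*}-\theta_n^{*}\PRC{\CS}0$ a.s.\ are given outright. The $\CS$-asymptotic tightness of $\xi_n$ follows from condition (a): for almost every $\omega$ the conditional laws of $\xi_n$ converge weakly to that of $\xi$, and a weakly convergent sequence of probability measures with a proper limit is tight, whence $\essinf_y\limsup_n\PR{\norm{\xi_n}>y\mid\CS}=0$ a.s. Condition (b), $\theta_n\to\theta$ a.s., forces $\sup_n\norm{\theta_n}<\infty$ a.s.\ since an a.s.\ convergent sequence is a.s.\ bounded. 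Finally, the independence and continuity conditions of Theorem \ref{thm:bootstrap_consistency1} for the constant sequence $S_n'\equiv S'$ are precisely the conditions imposed on $S'$ in the statement via Corollary \ref{corr:bootstrap_consistency1}: $S'$ is conditionally independent of $\seq{Y_n}$ given $\CS$ (hence of each $Y_n$), and its conditional cdf is a.s.\ continuous.

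It remains to verify (a)--(c) of Lemma \ref{lemma:bootstrap_consistency2} with $S_n'\equiv S'$. Condition (b) reads $\DK{T_n'^{*},S'\mid\G_n,\CS}\PRC{\CS}0$ a.s., which is hypothesis (c). Condition (c), that $\LC{\epsilon,S'\mid\CS}$ be a.n., follows from a.s.\ continuity: since $S_n'\equiv S'$ does not depend on $n$, one has $\limsup_n\LC{\epsilon,S'\mid\CS}=\LC{\epsilon,S'\mid\CS}$, and continuity gives $\LC{\epsilon,S'\mid\CS}\downarrow 0$ as $\epsilon\downarrow 0$ a.s., so $\essinf_{\epsilon>0}\LC{\epsilon,S'\mid\CS}=0$ a.s. The substantive step is condition (a), $\DK{T_n',S'\mid\CS}\to 0$ a.s. Here $T_n'=\nabla\phi(\theta_n)^{\top}\xi_n$ with $\theta_n$ being $\CS$-measurable, so $\nabla\phi(\theta_n)$ is $\CS$-measurable and, by continuous differentiability of $\phi$ together with (b), converges a.s.\ to $\nabla\phi(\theta)$. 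Treating $\nabla\phi(\theta_n)$ as a conditionally deterministic coefficient and invoking the converging-together lemma and the continuous mapping theorem for $\CS$-weak convergence in Lemma \ref{lemma:aux_weak_conv}, the $\CS$-weak convergence $\xi_n\to\xi$ yields $T_n'\to\nabla\phi(\theta)^{\top}\xi=S'$ $\CS$-weakly. Since the conditional cdf of $S'$ is a.s.\ continuous, Lemma \ref{lemma:aux_weak_eq} converts this into convergence in the conditional Kolmogorov distance, giving $\DK{T_n',S'\mid\CS}\to 0$ a.s. With (a)--(c) and the standing hypotheses in hand, Lemma \ref{lemma:bootstrap_consistency2} and then Theorem \ref{thm:bootstrap_consistency2} deliver that $\FD{n}{*}$ is conditionally $\DKM$-consistent given $\CS$.

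The main obstacle is the converging-together argument establishing $T_n'\to S'$ $\CS$-weakly. One must combine the $\CS$-measurable (hence conditionally constant) factor $\nabla\phi(\theta_n)$, which converges only a.s., with the factor $\xi_n$, which converges merely $\CS$-weakly, and then pass the limit through the continuous map $(A,x)\mapsto A^{\top}x$ --- all within the a.s.-weak-convergence framework of Lemma \ref{lemma:aux_weak_conv}. Once this joint convergence is secured, the reduction to the continuous limit $S'$ and the conversion to the conditional Kolmogorov metric via Lemma \ref{lemma:aux_weak_eq} are routine.
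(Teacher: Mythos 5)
Your proof is correct and follows essentially the same route as the paper: both reduce the claim to Theorem \ref{thm:bootstrap_consistency1} through the machinery of Lemma \ref{lemma:bootstrap_consistency2} with the constant approximating sequence $S_n'\equiv S'$, using Lemma \ref{lemma:aux_weak_conv} to obtain the $\CS$-weak convergence and Lemma \ref{lemma:aux_weak_eq} to convert it into convergence of the conditional Kolmogorov distance. The only cosmetic difference is that the paper applies the weak-convergence argument directly to $T_n$ via its mean-value representation \eqref{eq:mvapprox}, whereas you verify condition (a) of the Lemma for the linearized statistic $T_n'$ and let the Lemma's internal concentration-function step carry the passage from $T_n'$ to $T_n$; both rest on the same ingredients and are equally valid.
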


\begin{remark*}
The assumption regarding convergence of the sequence of parameters $\seq{\theta_n}$ can be relaxed. In the unconditional case it suffices to assume that $\sup_n\norm{\theta_n}<\infty$. Then one needs to provide a uniform bound on $\abs{\PR{\xi_n\in A}-\PR{\xi\in A}}$, where $A$ ranges over the class of half-spaces for a network dependent process similar to that established in \cite{Bentkus:03}. The conditioning on $\CS$ complicates the problem even more so it falls out of the scope of this paper.
\end{remark*}

\section{Bootstrap of the Mean}
\label{sec:bootstrap_mean}

Consider a sequence of network dependent processes $\seq{(Y_n, G_n)}$ satisfying Assumptions \ref{assu:min_dist}, \ref{assu:psi_weak_dep}, and \ref{assu:msble_dist}. As an application of the results given in the preceding section, we consider the mean of a $Y_n$, $\mu_n\equiv \E[Y_{n,i}\mid \CS]$ which may vary with $n$ but not across $i\in N_n$.\footnote{
	It is possible to extend the results of this paper to the case of heterogeneous means as in \cite{Goncalves/White:02}. However, such a setup makes it difficult to isolate the effect of the structure of underlying networks on the consistency of the proposed bootstrap methods.
}
The parameter of interest $\mu_n$ is estimated using the sample mean $\bar{Y}_n$ which is a consistent estimator of $\mu_n$ under the assumptions of Theorem \ref{thm:CWLLN}. In this section we provide a number of resampling based methods for constructing the asymptotically valid confidence sets for $\mu_n$. In addition, we establish consistency of a restricted version of the \textit{smooth function model} in which we are interested in $\phi(\mu_n)$ for a continuously differential function $\phi:\R^v\to \R$. When the elements of $Y_n$ have the same marginal conditional distributions given $\CS$, we may consider $\phi(\E[f(Y_{n,1})\mid \CS])$, where $f:\R^v\to \R^w$ is a locally Lipschitz function satisfying \eqref{eq:local_lipshitz0} and the domain of $\phi$ is $\R^w$ in this case. Since the process $\seq{f(Y_{n,i}):i\in N_n}$ is $(\L_w,\psi,\CS)$-weakly dependent by Proposition \ref{prop:local_lipshitz}, without loss of generality we examine the first version. In addition, we provide consistent positive semi-definite estimators of $\Sigma_n$.

The corresponding test statistics are given by
\begin{align}
\label{eq:test_stats}
\begin{aligned}
	T_{1,n}(\mu_n)&=\sqrt{n}\normin{\bar{Y}_n-\mu_n}, \qtext{and} \\
	T_{2,n}(\mu_n)&=\sqrt{n}\big(\phi(\bar{Y}_n)-\phi(\mu_n)\big),
\end{aligned}
\end{align}
where $\norm{\csdot}$ is the Euclidean norm on $\R^v$. Their conditional distributions given $\CS$ are denoted by $\FD{1,n}{}$ and $\FD{2,n}{}$, respectively, and the bootstrap approximations of these distributions are denoted by $\FD{1,n}{*}$ and $\FD{2,n}{*}$. The confidence sets for $\mu_n$ are obtained by test inversion, i.e
\[
	CS_{n,1-\alpha}\eqdef \left\{\mu\in \R^v: T_{1,n}(\mu)\le c_{n,1-\alpha}^{*}\right\},
\]
where $c_{n,\alpha}^{*}(\omega)\eqdef\inf\{x:\FD{1,n}{*}(\omega,x)\ge \alpha\}$ is the conditional $\alpha$-quantile of $\FD{1,n}{*}$. In practice, if the exact distribution of $T_{j,n}^{*}$, $j=1,2$ is not available, it can be replaced with a suitable Monte-Carlo estimator.

\medskip

\subsection{Block bootstrap (BB).}
First, we suggest a variant of the block bootstrap, which is extensively studied in the time-series and spatial literature. Specifically, we choose the maximal block radius $s_n>0$ and define $n$ overlapping blocks $\{B_{n,1},\ldots,B_{n,n}\}$ with $B_{n,k}\eqdef N_n(k;s_n+1)$. That is $B_{n,k}$ is an $(s_n+1)$ open neighborhood of the node $k$. Then we randomly select $K_n\eqdef \floor{n/\delta_n(s_n)}$ blocks $\{B_{n,1}^{*},\ldots,B_{n,K_n}^{*}\}$ with \textit{replacement} (note that $\delta_n(s_n)$ is the average block size) which yields a bootstrap sample
\[
	Y_n^{*}=\big\{Y_{n,B_{n,k}^{*}}:1\le k\le K_n\big\}.
\]
Formally, let $\{u_1,\ldots,u_{K_n}\}$ be i.i.d.\ $U\{1,n \}$ random variables defined on $(\Omega,\H,\PM)$ and independent of $\G_n$. Then the $k$-th resampled block is defined as $B_{n,k}^{*}=B_{n,u_k}$ and, therefore, for $1\le k\le K_n$ and $1\le l\le n$, $\PR{B_{n,k}^{*}=B_{n,l}\mid \G_n}=n^{-1}$ a.s. For the ease of exposition we assume that $n/\delta_n(s_n)$ is an integer.

The size of the bootstrap sample $L_n\eqdef\sum_{k=1}^{K_n}\abs{B_{n,k}^{*}}$ is random conditional on the data and depends on the distribution of $\abs{N_n(\csdot;s_n)}$ given the network $G_n$. However, on average it is expected to be close to $n$, (in fact, the conditional expectation of $L_n$ given $\CS$ is exactly $n$). Also in the time series case this approach reduces to a variant of the moving blocks bootstrap with unequally sized blocks such that blocks located near the endpoints have smaller size.

Let $Z_{n,k}^{*}\eqdef\sum_{j\in B_{n,k}^{*}}Y_{n,j}$ and let $\tilde{Y}_n^{*}\eqdef n^{-1}\sum_{k=1}^{K_n}Z_{n,k}^{*}$ be the quasi-average of the bootstrap sample $Y_n^{*}$, which replaces the sample average in the bootstrap versions of $T_{1,n}$ and $T_{2,n}$. We could also consider the true average of a pseudo-sample, i.e., $\bar{Y}_n^{*}\eqdef L_n^{-1}\sum_{k=1}^{K_n}Z_{n,k}^{*}$. However, $L_n$ is not independent of the blocks sums and, as mentioned before, its distribution depends on the underlying network topology. As a result, it is relatively difficult to find a ``smooth'' approximation of the distribution of $\sqrt{L_n}\bar{Y}_n^{*}$ which guarantees the first-order consistency of the bootstrap (in particular, the suggested resampling scheme may not be appropriate in this case). In addition, since the conditional expectation of $\tilde{Y}_n$ given $\G_n$ differs from the sample average, we replace the true parameter $\mu_n$ with $\mu_n^{*}\eqdef \E[\tilde{Y}_n^{*}\mid \G_n]$. As indicated in \cite{Lahiri:92} in the time-series context, replacing $\mu_n$ with $\bar{Y}_n$ introduces an additional bias which does not allow for second-order improvements over the normal approximation \citep[see also][Section~2.7.1]{Lahiri:03:Resampling}. The BB counterparts of the test statistics in \eqref{eq:test_stats} are given by
\begin{align*}
	T_{1,n}^{*}&=\sqrt{n}\normin{\tilde{Y}_n^{*}-\mu_n^{*}}, \qtext{and} \\
	T_{2,n}^{*}&=\sqrt{n}\big(\phi(\tilde{Y}_n^{*})-\phi(\mu_n^{*})\big).
\end{align*}

The conditional variance of the scaled sample mean $\Sigma_n$ can be estimated using the bootstrap version $\Sigma_n^{*}\equiv \Var(\sqrt{n}\tilde{Y}_n^{*}\mid \G_n)$. Since $\{Z_{n,1}^{*},\ldots,Z_{n,K_n}^{*}\}$ are conditionally independent given $\G_n$,
\[
	\Sigma_n^{*}=\frac{1}{\delta_n(s_n)}\left(\frac{1}{n}\sum_{i\in N_n}Z_{n,i}Z_{n,i}^{\top}-\bar{Z}_n\bar{Z}_n^{\top}\right) \qtext{a.s.},
\]
where $Z_{n,i}\eqdef \sum_{j\in B_{n,j}}Y_{n,j}$ and $\bar{Z}_n\eqdef n^{-1}\sum_{i\in N_n}Z_{n,i}$. By construction the matrix $\Sigma_n^{*}$ is positive semidefinite and its form is similar to the network HAC estimator \eqref{eq:HAC}. To see this let
\begin{equation}
\label{eq:weight_fun}
	\omega_n(i,j)\eqdef \frac{\abs{N_n(i;s_n+1)\cap N_n(j;s_n+1)}}{\delta_n(s_n)}
\end{equation}
(when $i=j$ we denote this quantity by $\omega_n(i)$). Then
\[
	\Sigma_n^{*}=\frac{1}{n}\sum_{i,j\in N_n}\omega_n(i,j)(Y_{n,i}-\mu_n)(Y_{n,j}-\mu_n)^{\top}+R_n \qtext{a.s.},
\]
where $\E[\norm{R_n}_F\mid \CS]\to 0$ a.s.\ under some conditions, given later. It is worth mentioning that when $\mu_n=0$ a.s., the remainder term $R_n=0$ a.s. Unlike a typical kernel, the weighting functions $\omega_n(\csdot,\csdot)$ depends on the network topology and it is not bounded by $1$. However, for fixed $i\in N_n$ it is decreasing in the distance between $i$ and $j$. Let $\tilde{\omega}\eqdef \sup_{n}\max_{i\ne j}\omega_n(i,j)$, $\tilde{\mu}_p\eqdef\sup_{n,i\in N_n}\norm{Y_{n,i}}_{\CS,p}$ for $p>0$, and
\begin{equation}
\label{eq:graph_hom}
	\Delta_n(s;k)\eqdef \frac{1}{n}\sum_{i\in N_n}\abs{\abs{N_n(i;s+1)}-\delta_n(s)}^k,
\end{equation}
which is the $k$-th absolute central moment of the sizes of the $(s+1)$-neighborhoods. The following assumptions provide sufficient conditions for the consistency of $\Sigma_n^{*}$.

\begin{assumption}
\label{assu:BB1}
The sequence $\seq{(G_n,s_n)}$ is such that w.p.1 $\tilde{\omega}<\infty$ and
\begin{itemize}[leftmargin=1.65em]
	\setlength\itemsep{0.3em}
	\item[(a)] $\Delta_n(s_n;2)/\delta_n(s_n)+D_n(s_n)/\sqrt{\delta_n(s_n)n}\to 0$,
	\item[(b)] $\max_{i\in N_n}\abs{\sum_{j\in B_{n,i}}(\omega_n(j)-1)}/\sqrt{n}\to 0$,
	\item[(c)] $n^{-1}\sum_{i\in N_n}\sum_{j\in N_n^{\partial}(i;s)}\abs{\omega_n(i,j)-1}\gamma_{n,s}\to 0$ for all $s\ge 1$,
\end{itemize}
\end{assumption}

Assumption \ref{assu:BB1} imposes restrictions on admissible networks topologies. Specifically, the consistency of $\Sigma_n^{*}$ requires a certain degree of homogeneity of the resampled blocks which is characterized by various moments of the weights $\seq{\omega_n(i,j):i,j\in N_n}$. For example, condition $(a)$ requires that the sample variance of $\{\abs{B_{n,i}}\}$ increases at a lower rate than the average block size. It also guarantees that $\mu_n^{*}$ is a consistent estimator of the mean $\mu_n$ and that for large samples the size of a pseudo-sample, $L_n$ is close to $n$. In fact, $\E[\abs{L_n/n-1}\mid \CS]\to 0$ a.s.\ because
\begin{align*}
	\E[\abs{L_n/n-1}\mid \CS] &=\frac{1}{n}\E\left[\abs{\sum_{k=1}^{K_n}\left(\abs{B_{n,k}^{*}}-\delta_n(s_n)\right)}\mid \CS\right] \\
	&\le \frac{1}{n}\sum_{k=1}^{K_n} \Delta_n(s_n;1)\le \frac{\sqrt{\Delta_n(s_n;2)}}{\delta_n(s_n)} \qtext{a.s.}
\end{align*}
This condition is clearly satisfied in the time series context when $s_n=o(\sqrt{n})$ (although, it has been shown that the consistency of the moving block bootstrap in this case holds for $s_n=o(n)$ \citep[see, e.g.,][]{Calhoun:18}). However, it does not hold for unweighted ``star'' networks and $s_n\equiv 1$ because $\Delta_n(1;2)\ge [\Delta_n(1;1)]^2\to 4$ and $\delta_n(1)\to 3$ as $n\to \infty$. In practice, one can compute $\Delta_n(s_n;2)$ for a given graph to see whether this quantity is small relative to the average block size.

Condition (c) ensures that all the non-zero autocovariances are estimated consistently. It is similar to an assumption on kernel functions used in HAC estimation, that is in the limit the value of a kernel at each $s$ must converge to 1. In addition, if $\tilde{\gamma}_s>0$ with positive probability for all $s\ge 1$, then the parameter $s_n$ must go to infinity for this condition to hold.

\begin{assumption}
\label{assu:BB2}
There exists $r>2$ such that w.p.1 $\tilde{\mu}_{2r}<\infty$ and
\begin{itemize}[leftmargin=1.65em]
	\setlength\itemsep{0.3em}
	\item[(a)] $\limsup_{n\to\infty}\sum_{s\ge 1}\delta_n^{\partial}(s)\gamma_{n,s}^{1-\frac{2}{r}}<\infty$,
	\item[(b)] $n^{-2}\sum_{s\ge 0}\abs{H_n(s,2s_n+1)}\gamma_{n,s}^{1-\frac{2}{r}}\to 0$.
\end{itemize}
\end{assumption}

The conditions of Assumption \ref{assu:BB2} are similar to those needed for the consistency of the network HAC estimator \eqref{eq:HAC}. In particular, condition (b) gives a rule of thumb for the choice of the truncation parameter $s_n$ (see \citealiasp{Kojevnikov/Marmer/Song:20}, Section~4.1). Also condition (a) implies that the elements of the true variance $\Sigma_n$ do not diverge to $\pm\infty$. To see this note that for $1\le k,l\le v$ and some constant $C>0$,
\[
	\abs{[\Sigma_n]_{kl}}\le C(\tilde{\mu}_{2r}^2\vee 1)\left(1+\sum_{s\ge 1}\delta_n^{\partial}(s)\gamma_{n,s}^{1-\frac{2}{r}}\right) \qtext{a.s.}
\]
Therefore, $\limsup_{n\ge 1}\abs{[\Sigma_n]_{kl}}<\infty$ a.s.

\begin{prop}
\label{prop:BB_var}
Suppose that Assumptions \ref{assu:BB1} and \ref{assu:BB2} hold. Then
\[
	\E[\norm{\Sigma_n^{*}-\Sigma_n}_F\mid \CS]\to 0 \qtext{a.s.}
\]
\end{prop}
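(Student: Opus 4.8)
The plan is to work from the explicit representation of the bootstrap variance recorded above,
\[
	\Sigma_n^{*}=\frac{1}{n}\sum_{i,j\in N_n}\omega_n(i,j)\tilde{Y}_{n,i}\tilde{Y}_{n,j}^{\top}+R_n \qtext{a.s.},
\]
where $\tilde{Y}_{n,i}\eqdef Y_{n,i}-\mu_n$, and to compare it against $\Sigma_n=n^{-1}\sum_{i,j\in N_n}\E[\tilde{Y}_{n,i}\tilde{Y}_{n,j}^{\top}\mid \CS]$. Writing $\omega_n(i,j)=1+(\omega_n(i,j)-1)$ and centering the products gives the decomposition
\[
	\Sigma_n^{*}-\Sigma_n=R_n+A_n+B_n,
\]
with fluctuation term $A_n\eqdef n^{-1}\sum_{i,j}\omega_n(i,j)\big(\tilde{Y}_{n,i}\tilde{Y}_{n,j}^{\top}-\E[\tilde{Y}_{n,i}\tilde{Y}_{n,j}^{\top}\mid\CS]\big)$ and weight-bias term $B_n\eqdef n^{-1}\sum_{i,j}(\omega_n(i,j)-1)\E[\tilde{Y}_{n,i}\tilde{Y}_{n,j}^{\top}\mid\CS]$. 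Since the weights and neighbourhood sizes are $\CS$-measurable (Assumption \ref{assu:msble_dist}), $B_n$ is $\CS$-measurable, so it suffices to establish $\E[\norm{R_n}_F\mid\CS]\to0$, $\E[\norm{A_n}_F\mid\CS]\to0$, and $\norm{B_n}_F\to0$, all a.s. Throughout I would use that $\norm{\mu_n}\le\tilde{\mu}_{2r}$, so the conditional moments of $\tilde{Y}_{n,i}$ up to order $2r$ are a.s.\ bounded.

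For the weight bias $B_n$ I would treat the diagonal and off-diagonal parts separately. On the diagonal $\omega_n(i,i)=\omega_n(i)$ and $n^{-1}\sum_{i}\abs{\omega_n(i)-1}=\Delta_n(s_n;1)/\delta_n(s_n)\le\sqrt{\Delta_n(s_n;2)}/\delta_n(s_n)\to0$ by Assumption \ref{assu:BB1}(a), while $\norm{\E[\tilde{Y}_{n,i}\tilde{Y}_{n,i}^{\top}\mid\CS]}\le\tilde{\mu}_{2r}^{2}$. Off the diagonal I would group nodes by their $\partial$-neighbourhoods, i.e.\ by $s=\floor{d_n(i,j)}$, and use the covariance bound $\norm{\E[\tilde{Y}_{n,i}\tilde{Y}_{n,j}^{\top}\mid\CS]}\le C\tilde{\mu}_{2r}^{2}\gamma_{n,s}^{1-2/r}$, which follows from $(\L_v,\psi,\CS)$-weak dependence together with a truncation argument for unbounded $Y$ (the device behind the bound on $[\Sigma_n]_{kl}$ displayed after Assumption \ref{assu:BB2}). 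For each fixed $s\ge1$ Assumption \ref{assu:BB1}(c) gives $n^{-1}\sum_{i}\sum_{j\in N_n^{\partial}(i;s)}\abs{\omega_n(i,j)-1}\gamma_{n,s}\to0$, and Assumption \ref{assu:BB2}(a) supplies a summable dominating sequence, so a dominated-convergence argument over $s$ yields $\norm{B_n}_F\to0$ a.s.

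The fluctuation term $A_n$ is the main obstacle, and I would split it into its off-diagonal part $A_n^{\mathrm{od}}$ ($i\ne j$) and diagonal part $A_n^{\mathrm{d}}$ ($i=j$). For the off-diagonal part I bound $\E[\norm{A_n^{\mathrm{od}}}_F\mid\CS]\le(\E[\norm{A_n^{\mathrm{od}}}_F^{2}\mid\CS])^{1/2}$ by conditional Jensen and expand the conditional second moment as the quadruple sum
\[
	\E[\norm{A_n^{\mathrm{od}}}_F^{2}\mid\CS]=\frac{1}{n^{2}}\sum_{i\neq j,\,k\neq l}\omega_n(i,j)\omega_n(k,l)\,\Cov\big(\tilde{Y}_{n,i}\tilde{Y}_{n,j}^{\top},\tilde{Y}_{n,k}\tilde{Y}_{n,l}^{\top}\mid\CS\big),
\]
the matrix covariance being read through the Frobenius inner product. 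Because $\omega_n(i,j)=0$ whenever $d_n(i,j)\ge 2s_n+1$, only quadruples with $j\in N_n(i;2s_n+2)$ and $l\in N_n(k;2s_n+2)$ contribute, and these are counted by $H_n(s,2s_n+1)$ with $s=\floor{d_n(\{i,j\},\{k,l\})}$. The key estimate is the fourth-order covariance bound: viewing each entry of $\tilde{Y}_{n,i}\tilde{Y}_{n,j}^{\top}$ as a locally Lipschitz function of $(Y_{n,i},Y_{n,j})$ and truncating, the products have conditional $r$-th moments bounded by $C\tilde{\mu}_{2r}^{2}$ (Cauchy--Schwarz), and balancing the truncation level against the weak-dependence bound of Assumption \ref{assu:psi_weak_dep} gives $\abs{\Cov(\csdot\mid\CS)}\le C\gamma_{n,s}^{1-2/r}$ uniformly (for overlapping pairs $s=0$, $\gamma_{n,0}=1$, and this reduces to a fourth-moment bound). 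Using $\omega_n(i,j)\le\tilde{\omega}$ off the diagonal and grouping by $s$ yields
\[
	\E[\norm{A_n^{\mathrm{od}}}_F^{2}\mid\CS]\le C\,\tilde{\omega}^{2}\,\frac{1}{n^{2}}\sum_{s\ge 0}\abs{H_n(s,2s_n+1)}\gamma_{n,s}^{1-2/r}\to0 \qtext{a.s.}
\]
by Assumption \ref{assu:BB2}(b). The diagonal part $A_n^{\mathrm{d}}$ carries the larger weight $\omega_n(i)\le D_n(s_n)/\delta_n(s_n)$ but only a pairwise sum; grouping by distance and using Assumption \ref{assu:BB2}(a) bounds its conditional second moment by $C\,D_n(s_n)^{2}/\big(\delta_n(s_n)^{2}n\big)$, which vanishes by the second part of Assumption \ref{assu:BB1}(a) (since $\delta_n(s_n)\ge1$). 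The delicate points are establishing the product covariance bound with the correct $1-2/r$ exponent uniformly over all pairs and matching the admissible quadruples to $H_n(s,2s_n+1)$; this parallels the most technical step in the consistency proof of the network HAC estimator in \citealias{Kojevnikov/Marmer/Song:20}.

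Finally, for the remainder I would return to $\Sigma_n^{*}=\delta_n(s_n)^{-1}\big(n^{-1}\sum_i Z_{n,i}Z_{n,i}^{\top}-\bar{Z}_n\bar{Z}_n^{\top}\big)$, substitute $Y_{n,i}=\tilde{Y}_{n,i}+\mu_n$ using $\sum_i\ind\{j\in B_{n,i}\}=\abs{B_{n,j}}=\delta_n(s_n)\omega_n(j)$, and collect the pieces not captured by the main sum: a centering term $\delta_n(s_n)\big(n^{-1}\sum_j\omega_n(j)\tilde{Y}_{n,j}\big)\big(n^{-1}\sum_k\omega_n(k)\tilde{Y}_{n,k}\big)^{\top}$, a term linear in $\mu_n$ with coefficient $n^{-1}\sum_j\big(\sum_{m\in B_{n,j}}(\omega_n(m)-1)\big)\tilde{Y}_{n,j}$, and a $\mu_n\mu_n^{\top}$ term with scalar coefficient $\Delta_n(s_n;2)/\delta_n(s_n)$. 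The $\mu_n\mu_n^{\top}$ coefficient vanishes by Assumption \ref{assu:BB1}(a). The linear term is controlled in conditional $L^{1}$ because the conditional second moment of its $\tilde{Y}$-average is at most $\big(\max_i\abs{\sum_{j\in B_{n,i}}(\omega_n(j)-1)}/\sqrt n\big)^{2}$ times the covariance sum $n^{-1}\sum_{j,k}\abs{\E[\tilde{Y}_{n,j}^{\top}\tilde{Y}_{n,k}\mid\CS]}$, the first factor vanishing by Assumption \ref{assu:BB1}(b), the second bounded by Assumption \ref{assu:BB2}(a), and $\norm{\mu_n}\le\tilde{\mu}_{2r}$. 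The centering term, after grouping by distance with $\omega_n(j)\le D_n(s_n)/\delta_n(s_n)$ and the same bounded covariance sum, is $O\big(D_n(s_n)^{2}/(\delta_n(s_n)n)\big)$, which again vanishes by the second part of Assumption \ref{assu:BB1}(a). Collecting the three bounds gives $\E[\norm{R_n}_F\mid\CS]\to0$ a.s., completing the proof.
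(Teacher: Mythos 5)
Your proposal is correct and follows essentially the same route as the paper's proof: your main term $n^{-1}\sum_{i,j}\omega_n(i,j)\tilde{Y}_{n,i}\tilde{Y}_{n,j}^{\top}$ split into fluctuation and weight-bias parts reproduces the paper's terms $R_{n,0}+R_{n,1}$, $R_{n,2}$ (bounded via the quadruple sum counted by $H_n(s,2s_n+1)$ and Assumption \ref{assu:BB2}(b)) and $R_{n,3}$ (Assumptions \ref{assu:BB1}(c), \ref{assu:BB2}(a)), while your centering, linear-in-$\mu_n$, and $\mu_n\mu_n^{\top}$ remainder terms are exactly the paper's $B_{n,2}$, $A_{n,1}$, and $A_{n,2}$, controlled by the same parts of Assumption \ref{assu:BB1}. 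The only differences are organizational (the paper routes through the intermediate quantity $\tilde{\Sigma}_n$ and reduces matrices to scalars $c^{\top}(\cdot)c$ via Lemma \ref{lemma:aux_matrix_conv}, and your radius claim ``$\omega_n(i,j)=0$ whenever $d_n(i,j)\ge 2s_n+1$'' should read $d_n(i,j)\ge 2(s_n+1)$, which is what your subsequent counting actually uses), neither of which affects the argument.
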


The result of Proposition \ref{prop:BB_var} implies that $\Sigma_n^{*}$ is a consistent estimator of $\Sigma_n$. Therefore, assuming that $\Sigma_n\to \Sigma$ a.s.\ and $\sqrt{n}(\bar{Y}_n-\mu_n)$ converges $\CS$-weakly to a conditionally normal random vector with variance $\Sigma$, we may use Corollaries \ref{corr:bootstrap_consistency1} and \ref{corr:bootstrap_consistency2} to establish the consistency of the bootstrap distributions. For example, one may employ Theorem 3.2 in \citealias{Kojevnikov/Marmer/Song:20} together with the Cram\'er–Wold device and Lemma \ref{lemma:aux_weak_eq}.

\begin{assumption}
\label{assu:BB3}
$\Sigma_n$ converges a.s.\ to a $\CS$-measurable, positive definite matrix $\Sigma$, and
\[
	\sqrt{n}(\bar{Y}_n-\mu_n)\to \Sigma^{1/2}\eta \qtext{$\CS$-weakly},
\]
where $\eta\sim\ND{0}{I_v}$ independent of $\CS$.\footnote{
	Assumption \ref{assu:BB3} is made merely for ease of exposition. In view of Theorem \ref{thm:bootstrap_consistency1} it can be omitted at the expense of establishing additional Berry-Essen type bounds.
}
\end{assumption}

In addition, we introduce the local versions of some measures of the network denseness. Specifically, for $s,m\ge 0$ we define
\begin{align*}
	\delta_{loc,n}^{\partial}(s,m)&\eqdef \max_{i\in N_n}\sum_{j\in N_n(i;m)}\frac{\abs{N_n^{\partial}(j;s)\cap N_n(i;m)}}{\abs{N_n(i;m)}}
\intertext{and}
	h_{loc,n}(s,m)&\eqdef \max_{i\in N_n}\frac{\abs{H_n(s,\infty)\cap N_n^4(i;m)}}{\abs{N_n(i;m)}^3}.
\end{align*}
These measures are constructed in a way such that for any $m\ge 0$, $\delta_{loc,n}^{\partial}(0,m)=h_{loc,n}(0,m)=1$. Also note that $h_{loc,n}(s,m)\le \delta_{loc,n}^{\partial}(s,m)$.

\begin{assumption}
\label{assu:BB4}
There exists $p>2$ such that w.p.1 $\tilde{\mu}_{2p}<\infty$ and
\begin{align*}
	&\left(\delta_n(s_n)/n\right)^{1/3}\sum_{s\ge 0}\delta_{loc,n}^{\partial}(s,s_n)\gamma_{n,s}^{1-\frac{2}{p}} \\
	&\qquad+\left(\delta_n^{5/2}(s_n)/n\right)^{2/3}\sum_{s\ge 0}h_{loc,n}(s,s_n)\gamma_{n,s}^{1-\frac{2}{p}}\to 0.
\end{align*}
\end{assumption}

When the following summability condition holds:
\[
	\limsup_{n\to\infty}\sum_{s\ge 1}\delta_{loc,n}^{\partial}(s,s_n)\gamma_{n,s}^{1-\frac{2}{p}}<\infty \qtext{a.s.},
\]
Assumption \ref{assu:BB3} reduces to $\delta_n^{5/2}(s_n)/n\to 0$ a.s. In particular, if for each $n\ge 1$ the blocks $\seq{B_{n,k}}$ have the same size $l_n<n$, it suffices to assume that the weak dependence coefficients raised to the power $1-2/p$ are a.s.\ summable and $l_n=o(n^{2/5})$. Note that this assumption also explicitly requires $K_n\to\infty$ as $n\to\infty$.

\begin{prop}
\label{prop:BB_consistency}
Suppose that Assumptions \ref{assu:BB1}-\ref{assu:BB4} hold. Then $\FD{1,n}{*}$ is conditionally $\DKM$-consistent given $\CS$. If, in addition, $\mu_n$ converges a.s.\ to a $\CS$-measurable random vector $\mu$ and $\nabla\phi(\mu)\ne 0$ a.s., then $\FD{2,n}{*}$ is conditionally $\DKM$-consistent given $\CS$.
\end{prop}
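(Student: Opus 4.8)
The plan is to verify the hypotheses of Corollary \ref{corr:bootstrap_consistency1} for $\FD{1,n}{*}$ and of Corollary \ref{corr:bootstrap_consistency2} for $\FD{2,n}{*}$, taking $\xi=\Sigma^{1/2}\eta$ in both cases. For the original statistics this is immediate from Assumption \ref{assu:BB3}: the $\CS$-weak convergence $\sqrt{n}(\bar Y_n-\mu_n)\to\Sigma^{1/2}\eta$ supplies condition (a) of Corollary \ref{corr:bootstrap_consistency1} after composing with $\normin{\csdot}$ via the continuous mapping theorem of Lemma \ref{lemma:aux_weak_conv}, and it is condition (a) of Corollary \ref{corr:bootstrap_consistency2} directly. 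Since $\Sigma$ is a.s.\ positive definite, the limiting kernels---the $\CS$-conditional laws of $S_1\eqdef\normin{\Sigma^{1/2}\eta}$ and of $S'\eqdef\nabla\phi(\mu)^\top\Sigma^{1/2}\eta$ (the latter nondegenerate because $\nabla\phi(\mu)\ne 0$ a.s.)---are $\CS$-measurable with a.s.\ continuous cdfs, and being built from the auxiliary $\eta$ that is independent of $\CS$ they meet the conditional independence and continuity requirements of both corollaries.

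Everything therefore reduces to the bootstrap side, i.e.\ to a \emph{conditional central limit theorem} for $\xi_n^{*}\eqdef\sqrt{n}(\tilde Y_n^{*}-\mu_n^{*})=n^{-1/2}\sum_{k=1}^{K_n}(Z_{n,k}^{*}-\bar Z_n)$. Conditional on $\G_n$ the summands $Z_{n,k}^{*}-\bar Z_n$ are i.i.d.\ and centered, so $\xi_n^{*}$ is a normalized sum of $K_n$ independent terms. Its $\G_n$-conditional variance is exactly $\Sigma_n^{*}$, which converges to $\Sigma$ by Proposition \ref{prop:BB_var} together with Assumption \ref{assu:BB3}. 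The remaining input is the Lyapunov ratio
\[
  \frac{1}{\delta_n(s_n)}\,n^{-(2+\delta)/2}\sum_{l=1}^{n}\E\big[\normin{Z_{n,l}-\bar Z_n}^{2+\delta}\mid\CS\big],
\]
and the core computation is to bound the block-sum moments $\E[\normin{Z_{n,l}}^{2+\delta}\mid\CS]$ by a Rosenthal/KMS-type inequality for conditionally $\psi$-weakly dependent sums and to check, using the local denseness measures $\delta_{loc,n}^{\partial}$ and $h_{loc,n}$, that Assumption \ref{assu:BB4} forces this ratio to $0$ a.s. A conditional Markov inequality then upgrades the a.s.\ vanishing of these $\CS$-conditional expectations to $\PRC{\CS}0$ a.s., and a conditional Berry--Esseen bound (multivariate, over Euclidean balls and half-spaces), combined with the replacement of $\Sigma_n^{*}$ by $\Sigma$, delivers the bootstrap CLT: the $\G_n$-conditional law of $\xi_n^{*}$ is Kolmogorov-close to the $\CS$-conditional law of $\Sigma^{1/2}\eta$, with the distance tending to $0$ in the $\PRC{\CS}$ sense, a.s.

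With the bootstrap CLT in hand the two conclusions follow. For $\FD{1,n}{*}$, composing with $\normin{\csdot}$ and invoking Lemma \ref{lemma:aux_weak_eq} (for a continuous limit, $\CS$-weak convergence is equivalent to convergence in conditional Kolmogorov distance) yields $\DK{T_{1,n}^{*},S_1\mid\G_n,\CS}\PRC{\CS}0$ a.s., which is condition (b) of Corollary \ref{corr:bootstrap_consistency1}. For $\FD{2,n}{*}$ I would first observe that Assumption \ref{assu:BB1}(a) renders $\mu_n^{*}$ consistent for $\mu_n$, so $\mu_n^{*}\to\mu$ a.s.\ and $\nabla\phi(\mu_n^{*})\to\nabla\phi(\mu)$ by continuity; the CLT yields the $\CS$-asymptotic tightness of $\xi_n^{*}$ and hence $\tilde Y_n^{*}-\mu_n^{*}\PRC{\CS}0$ a.s., while the converging-together part of Lemma \ref{lemma:aux_weak_conv} with Lemma \ref{lemma:aux_weak_eq} gives $\DK{\nabla\phi(\mu_n^{*})^\top\xi_n^{*},S'\mid\G_n,\CS}\PRC{\CS}0$ a.s. These are conditions (b) and (c) of Corollary \ref{corr:bootstrap_consistency2}, completing the argument.

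The hard part is the middle paragraph: obtaining a moment inequality for the centered block sums $Z_{n,l}-\bar Z_n$ sharp enough that, once averaged over $l$ and scaled by $\delta_n(s_n)^{-1}n^{-(2+\delta)/2}$, it is controlled by exactly the two terms of Assumption \ref{assu:BB4}. The delicate point is the exponent bookkeeping---matching the powers $1/3$, $2/3$ and $5/2$ and the weak-dependence exponent $1-2/p$ to the Lyapunov order $2+\delta$---which is precisely where the \emph{localized} functionals $\delta_{loc,n}^{\partial}$ and $h_{loc,n}$, rather than their global counterparts, are essential, since the relevant sums range only over nodes inside a single block $N_n(l;s_n+1)$.
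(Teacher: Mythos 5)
Your proposal is correct and follows essentially the same route as the paper's proof: reduction to Corollaries \ref{corr:bootstrap_consistency1} and \ref{corr:bootstrap_consistency2}, a conditional Berry--Esseen CLT for the conditionally i.i.d.\ resampled block sums (the paper's Corollary \ref{corr:aux_CLT1} and Lemma \ref{lemma:aux_CLT2}), with the Lyapunov ratio controlled by third/fourth block-moment bounds expressed through $\delta_{loc,n}^{\partial}$ and $h_{loc,n}$ exactly as Assumption \ref{assu:BB4} demands, a Jensen/Markov upgrade to convergence in conditional probability given $\CS$, and a Gaussian comparison (Lemmas \ref{lemma:aux_normal_approx1} and \ref{lemma:aux_normal_approx2}) replacing $\Sigma_n^{*}$ by $\Sigma$. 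The one cosmetic caveat is that for $\FD{2,n}{*}$ the paper relies on the quantitative univariate bounds rather than your appeal to Lemmas \ref{lemma:aux_weak_conv} and \ref{lemma:aux_weak_eq} (which are stated for a fixed conditioning $\sigma$-field and hence do not literally apply to the $\G_n$-conditional laws), but your half-space Berry--Esseen step already supplies what is needed there.
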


\medskip

\subsection{Dependent wild bootstrap (DWB).}
The dependent wild bootstrap for time-series was introduced in \cite{Shao:10}. This method approximates the finite-sample distribution of $T_n$ by mimicking the autocovariance structure of the underlying sample. In particular, adapting to our framework, assume that $\CS=\{\emptyset,\Omega\}$ and let $G_n$ be an unweighted ``line'' network. Consider an $n$-dimensional, zero mean random vector $W_n$ defined on $(\Omega,\H,\PM)$ and independent of $Y_n$ such that $\Var(W_{n,i})=1$ and $\Cov(W_{n,i},W_{n,j})=\kappa(d_n(i,j)/(s_n+1))$, where $\kappa(\csdot)$ is a positive-definite kernel function and $s_n$ is a bandwidth parameter. The DWB pseudo-sample $Y_n^{*}$ is defined as follows:
\[
	Y_{n,i}^{*}=\bar{Y}_n+(Y_{n,i}-\bar{Y}_n)W_{n,i},\quad i\in N_n.
\]

Let $\bar{Y}_n^{*}\eqdef n^{-1}\sum_{i\in N_n}Y_{n,i}^{*}$. By construction, $\E[\bar{Y}_n^{*}\mid \G_n]=\bar{Y}_n$ so that in contrast to the block bootstrap, the statistic $\sqrt{n}(\bar{Y}_n^{*}-\bar{Y}_n)$ is unbiased given $\G_n$. In addition, noticing that $\kappa(0)=1$, the conditional variance of the scaled bootstrap mean given $\G_n$ is
\begin{align*}
	\Sigma_n^{*}&=\frac{1}{n}\sum_{i,j\in N_n}\Cov(W_{n,i},W_{n,j})(Y_{n,i}-\bar{Y}_n)(Y_{n,j}-\bar{Y}_n)^{\top} \\
	&=\frac{1}{n}\sum_{i,j\in N_n}\kappa\left(\frac{d_n(i,j)}{s_n+1}\right)(Y_{n,i}-\bar{Y}_n)(Y_{n,j}-\bar{Y}_n)^{\top},
\end{align*}
which is a version of the network HAC estimator \eqref{eq:HAC}. Then under certain regularity conditions the DWB is first-order consistent for smooth functions of the mean.

For general graphs, however, positive definiteness of the kernel function $\kappa$ does not imply that the matrix $[\kappa(d_n(i,j)/(s_n+1))]_{i,j\in N_n}$ is positive semi-definite (see \citealiasp{Kojevnikov/Marmer/Song:20}, Section 4.1). Therefore, in general, we cannot guarantee the existence of a random vector with the required covariance structure. A simple way to overcome this issue is to rely on the topology of a given network. Consider the matrix $\Omega_n=[\omega_n(i,j)]_{i,j\in N_n}$, where $\omega_n$ is defined in \eqref{eq:weight_fun}.

\begin{claim}
$\Omega_n$ is positive semi-definite.
\end{claim}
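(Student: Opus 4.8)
The plan is to recognize that, up to the positive normalizing constant $\delta_n(s_n)$, each entry of $\Omega_n$ is the cardinality of the intersection of two neighborhoods, and that a matrix of such pairwise intersection counts is automatically a Gram matrix of $0/1$ vectors. Concretely, for each node $i\in N_n$ I would introduce the membership indicator vector $v_i\in\R^n$ whose $k$-th coordinate is $(v_i)_k\eqdef\ind\{k\in N_n(i;s_n+1)\}$; that is, $v_i$ records, for every node $k$, whether $k$ belongs to the block $B_{n,i}=N_n(i;s_n+1)$.

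First I would express the numerator of $\omega_n(i,j)$ as an inner product. Since $N_n(i;s_n+1)\cap N_n(j;s_n+1)$ consists exactly of those $k$ with $(v_i)_k=(v_j)_k=1$, we have $\absin{N_n(i;s_n+1)\cap N_n(j;s_n+1)}=\sum_{k\in N_n}(v_i)_k(v_j)_k=v_i^{\top}v_j$, an identity that also covers the diagonal case $i=j$, where it reduces to $\abs{N_n(i;s_n+1)}=\norm{v_i}^2$.

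Next I would assemble the vectors into a single matrix: let $V$ be the $n\times n$ matrix whose columns are $v_1,\dots,v_n$. Then the matrix of intersection counts is exactly $V^{\top}V$, which is positive semi-definite because for every $x\in\R^n$ one has $x^{\top}V^{\top}Vx=\norm{Vx}^2\ge 0$. Finally I would divide by the scalar: since every open neighborhood contains its own center, $\abs{N_n(i;s_n+1)}\ge 1$ for all $i$, so $\delta_n(s_n)=n^{-1}\sum_{i\in N_n}\abs{N_n(i;s_n+1)}>0$. As $\Omega_n=\delta_n(s_n)^{-1}V^{\top}V$ is a strictly positive multiple of a positive semi-definite matrix, it is itself positive semi-definite.

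There is essentially no hard step here: the entire content is the observation that pairwise-intersection cardinalities form the Gram matrix $V^{\top}V$ of indicator vectors. The only point requiring any care is verifying that the normalizing constant $\delta_n(s_n)$ is positive, which I dispatch via the trivial bound $\abs{N_n(i;s_n+1)}\ge 1$, ensuring that the rescaling preserves the sign of the quadratic form.
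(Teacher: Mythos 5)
Your proof is correct and is essentially the paper's argument in different packaging: the paper shows $c^{\top}\Omega_n c=\sum_{i\in N_n}\xi_i^2/\delta_n(s_n)\ge 0$ with $\xi_i=\sum_{j\in N_n(i;s_n+1)}c_j$, which is exactly the sum-of-squares/Gram decomposition you obtain by writing $\delta_n(s_n)\Omega_n=V^{\top}V$ for the $0/1$ neighborhood incidence matrix $V$. The only (minor) distinction is that your identity $\abs{N_n(i;s_n+1)\cap N_n(j;s_n+1)}=v_i^{\top}v_j$ needs no symmetry of $d_n$, whereas the paper's double-counting step ``$(j,k)\in N_n^2(i;s_n+1)$ iff $i\in N_n(j;s_n+1)\cap N_n(k;s_n+1)$'' implicitly uses it.
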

\begin{proof}
Let $c\in \R^n$ and $\xi_i\eqdef \sum_{j\in N_n(i;s_n+1)}c_j$. Then, since $(j,k)\in N_n^2(i;s_n+1)$ if and only if $i\in N_n(j;s_n+1)\cap N_n(k;s_n+1)$,
\[
	\sum_{i\in N_n}\xi_i^2=\sum_{i\in N_n}\sum_{j,k\in N_n(i;s_n+1)}c_jc_k=\sum_{i,j\in N_n}c_ic_j\omega_n(i,j)\delta_n(s_n),
\]
Therefore,
\[
	c^{\top}\Omega_n c=\sum_{i,j\in N_n}c_ic_j \omega_n(i,j)=\sum_{i\in N_n}\xi_i^2/\delta_n(s_n)\ge 0. \qedhere
\]
\end{proof}

Consequently, we consider a random vector $W_n$ satisfying the following assumption.

\begin{assumption}
\label{assu:DWB1}
$W_n$ is conditionally independent of $Y_n$ given $\CS$ with $\E[W_n\mid \CS]=0$ a.s and $\E[W_n W_n^{\top}\mid \CS]=\Omega_n$ a.s.
\end{assumption}

Under Assumption \ref{assu:DWB1} the bootstrap variance estimator given by
\begin{equation}
\label{eq:DWB_var}
	\Sigma_n^{*}=\frac{1}{n}\sum_{i,j\in N_n}\omega_n(i,j)(Y_{n,i}-\bar{Y}_n)(Y_{n,j}-\bar{Y}_n)^{\top} \qtext{a.s.}
\end{equation}
is positive semi-definite. We impose the next conditions on the sequence of networks, which in combination with Assumption \ref{assu:BB2}, ensure the consistency of $\Sigma_n^{*}$.

\begin{assumption}
\label{assu:DWB2}
The sequence $\seq{(G_n,s_n)}$ is such that w.p.1 $\tilde{\omega}<\infty$ and
\begin{itemize}[leftmargin=1.65em]
	\setlength\itemsep{0.3em}
	\item[(a)] $\Delta_n(s_n;1)/\delta_n(s_n)+D_n(s_n)/n\to 0$,
	\item[(b)] $n^{-1}\sum_{i\in N_n}\sum_{j\in N_n^{\partial}(i;s)}\abs{\omega_n(i,j)-1}\gamma_{n,s}\to 0$ for all $s\ge 1$.
\end{itemize}
\end{assumption}

The conditions given in Assumption \ref{assu:DWB1} are clearly weaker than those needed for the consistency of the BB variance estimator, which follows from the fact that $\Delta_n(s_n;1)\le \sqrt{\Delta_n(s_n;2)}$ and $\delta_n(s_n)\le n$. Therefore, the DWB estimator \eqref{eq:DWB_var} is likely to be consistent for a wider class of networks. As in the case of the block bootstrap we assume that the true variance $\Sigma_n$ converges a.s.\ to a $\CS$-measurable matrix $\Sigma$.

\begin{prop}
\label{prop:DWB_var}
Suppose that Assumptions \ref{assu:DWB2} and \ref{assu:BB2} hold. Then
\[
	\E[\norm{\Sigma_n^{*}-\Sigma_n}_F\mid \CS] \to 0 \qtext{a.s.}
\]
\end{prop}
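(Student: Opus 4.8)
The plan is to compare $\Sigma_n^{*}$ with $\Sigma_n$ through the infeasible, mean-centered statistic
\[
	\tilde{\Sigma}_n^{*}\eqdef \frac{1}{n}\sum_{i,j\in N_n}\omega_n(i,j)(Y_{n,i}-\mu_n)(Y_{n,j}-\mu_n)^{\top},
\]
writing $\Sigma_n^{*}-\Sigma_n=(\Sigma_n^{*}-\tilde{\Sigma}_n^{*})+(\tilde{\Sigma}_n^{*}-\Sigma_n)$. The point is that $\tilde{\Sigma}_n^{*}$ is exactly the mean-centered core appearing in the block bootstrap estimator of Proposition \ref{prop:BB_var}: the stronger requirements in Assumption \ref{assu:BB1}(a),(b) there are needed only to control the random size of the resampled pseudo-sample and the associated remainder $R_n$, artifacts that are absent from the DWB. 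Consequently the convergence $\E[\norm{\tilde{\Sigma}_n^{*}-\Sigma_n}_F\mid\CS]\to 0$ a.s.\ should rest only on the bias condition Assumption \ref{assu:DWB2}(b) (identical to Assumption \ref{assu:BB1}(c)) together with Assumption \ref{assu:BB2}, and may be extracted from the proof of Proposition \ref{prop:BB_var}.

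For completeness I would re-derive $\tilde{\Sigma}_n^{*}-\Sigma_n\to 0$ via the split
\[
	\tilde{\Sigma}_n^{*}-\Sigma_n=\frac{1}{n}\sum_{i,j}\omega_n(i,j)\big[(Y_{n,i}-\mu_n)(Y_{n,j}-\mu_n)^{\top}-\Cov(Y_{n,i},Y_{n,j}\mid\CS)\big]+\frac{1}{n}\sum_{i,j}(\omega_n(i,j)-1)\Cov(Y_{n,i},Y_{n,j}\mid\CS).
\]
The second ($\CS$-measurable) term is the bias: bounding $\normin{\Cov(Y_{n,i},Y_{n,j}\mid\CS)}$ by the weak-dependence inequality \eqref{eq:cov_bound} under Assumption \ref{assu:psi_weak_dep}, reorganizing the double sum over the distance shells $N_n^{\partial}(i;s)$, and invoking Assumption \ref{assu:DWB2}(b) termwise in $s$ with the uniform summability from Assumption \ref{assu:BB2}(a) kills it. The first term is the stochastic fluctuation, handled by bounding its conditional $L^1$ norm by its conditional $L^2$ norm and computing the variance, a quadruple sum $n^{-2}\sum_{i,j,k,l}\omega_n(i,j)\omega_n(k,l)\Cov\big((Y_{n,i}-\mu_n)(Y_{n,j}-\mu_n)^{\top},(Y_{n,k}-\mu_n)(Y_{n,l}-\mu_n)^{\top}\mid\CS\big)$. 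Since $\omega_n(i,j)=0$ unless $d_n(i,j)<2s_n+2$ and the weights are bounded using $\tilde{\omega}<\infty$, the contributing quadruples are exactly those counted by $H_n(s,2s_n+1)$ in \eqref{eq:network_measures2}, and the covariances of the products decay like $\gamma_{n,s}^{1-2/r}$, so Assumption \ref{assu:BB2}(b) forces this to vanish.

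The ingredient new relative to Proposition \ref{prop:BB_var} is the centering correction $\Sigma_n^{*}-\tilde{\Sigma}_n^{*}$. Expanding $Y_{n,i}-\bar{Y}_n=(Y_{n,i}-\mu_n)-(\bar{Y}_n-\mu_n)$ produces two cross terms linear in $\bar{Y}_n-\mu_n$ and one quadratic term. The quadratic term equals $\big(n^{-1}\sum_{i,j}\omega_n(i,j)\big)(\bar{Y}_n-\mu_n)(\bar{Y}_n-\mu_n)^{\top}$; a direct count gives $n^{-1}\sum_{i,j}\omega_n(i,j)=\delta_n(s_n;2)/\delta_n(s_n)\le D_n(s_n)$, and since $\E[\norm{\bar{Y}_n-\mu_n}^2\mid\CS]$ equals $n^{-1}$ times the trace of $\Sigma_n$, which is $O(1/n)$ because the entries of $\Sigma_n$ are bounded under Assumption \ref{assu:BB2}(a), this term is $O(D_n(s_n)/n)$ and vanishes by Assumption \ref{assu:DWB2}(a). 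The cross terms are controlled by Cauchy--Schwarz: the factor $\E[\norm{\bar{Y}_n-\mu_n}^2\mid\CS]^{1/2}=O(n^{-1/2})$ multiplies the conditional $L^2$ norm of a weighted average of the centered summands, itself a sum of covariances controlled by the same decay bounds together with $\tilde{\omega}<\infty$ and Assumption \ref{assu:DWB2}(a). Collecting the three pieces gives $\E[\norm{\Sigma_n^{*}-\tilde{\Sigma}_n^{*}}_F\mid\CS]\to 0$ a.s.

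The main obstacle is the stochastic fluctuation term of the second paragraph. The products $(Y_{n,i}-\mu_n)(Y_{n,j}-\mu_n)^{\top}$ are not bounded Lipschitz functions of the $Y$'s, so the covariance bound \eqref{eq:cov_bound} cannot be applied to them directly; one must truncate the variables, apply weak dependence to the resulting bounded Lipschitz products, and absorb the truncation error using the moment bound $\tilde{\mu}_{2r}<\infty$ from Assumption \ref{assu:BB2} (in the spirit of Proposition \ref{prop:local_lipshitz}). Arranging for the exponent $1-2/r$ on $\gamma_{n,s}$ to line up with the combinatorial factor $H_n(s,2s_n+1)$, so that Assumption \ref{assu:BB2}(b) applies verbatim, is the delicate bookkeeping of the whole argument; everything else reduces to covariance decay and the elementary density bounds recorded above.
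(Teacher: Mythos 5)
Your proposal is correct and follows essentially the same route as the paper: the paper's own proof also splits $\Sigma_n^{*}-\Sigma_n$ into the $\mu_n$-centered core $B_{n,1}$ (your $\tilde{\Sigma}_n^{*}$) minus $\Sigma_n$ — which it recycles from the claim inside the proof of Proposition \ref{prop:BB_var}, where the bias/fluctuation terms are handled exactly as you describe, including the truncation needed because the products are not bounded Lipschitz — plus the centering correction $\Sigma_n^{*}-B_{n,1}$, which it bounds via Cauchy--Schwarz with $\max_i\sum_j\omega_n(i,j)\le(\tilde{\omega}+1)D_n(s_n)$ and $D_n(s_n)/n\to 0$, matching your quadratic and cross terms. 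One bookkeeping correction: the diagonal ($i=j$) part of your bias term, $n^{-1}\sum_{i}(\omega_n(i)-1)\Var(Y_{n,i}\mid\CS)$, is not covered by Assumption \ref{assu:DWB2}(b), which only governs the shells $s\ge 1$; it needs $\Delta_n(s_n;1)/\delta_n(s_n)\to 0$ from Assumption \ref{assu:DWB2}(a) (this is the term $R_{n,1}$ in the paper's Proposition \ref{prop:BB_var} proof), so your first-paragraph claim that this piece rests only on \ref{assu:DWB2}(b) and \ref{assu:BB2} is slightly off, though harmless since \ref{assu:DWB2}(a) is assumed.
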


First, we consider the Gaussian case. That is, we take $W_n=\Omega_n^{1/2}\zeta_n$, where $\zeta_n$ is the standard normal random vector in $\R^n$ independent of $\G_n$. From the practical perspective it is a convenient choice, especially when $n$ is large because a sample from a multivariate normal distribution can be easily generated. Moreover, efficient algorithms for finding the square root of positive semidefinite matrices are available. We refer to \cite{Higham:08:FM} for details. As noted in \cite{Shao:10}, although the DWB sample with Gaussian weights may not match non-zero higher-order cumulants of the original process, it is difficult to choose the joint distribution of $W_n$ that fits those cumulants, and performance of the DWB primarily depends on the choice of the truncation parameter $s_n$.

In this case, conditionally on $\G_n$, the statistic
\[
	\sqrt{n}(\bar{Y}_n^{*}-\bar{Y}_n)=\frac{1}{\sqrt{n}}\sum_{i\in N_n}W_{n,i}(Y_{n,i}-\bar{Y}_n)
\]
is also normal with zero mean and variance given in \eqref{eq:DWB_var}. Therefore, the conditional distribution of the DWB counterpart of the test statistic $T_{1,n}$,
\[
	T_{1,n}^{*}=\sqrt{n}\norm{\bar{Y_n}^{*}-\bar{Y}_n}
\]
given $\G_n$ is known and is the same as the conditional distribution of the asymptotic Gaussian approximation $\normin{\Sigma_n^{*1/2} \eta}$, where $\eta$ is as $v$-dimensional standard normal random vector independent of $\G_n$, and the latter converges $\CS$-weakly to $\normin{\Sigma^{1/2}\eta}$ by Lemma \ref{lemma:aux_weak_conv}. A more interesting case, however, arises when considering the second test statistic $T_{2,n}$ because for nonlinear transformations the conditional distribution of its bootstrap analog,
\[
	T_{2,n}^{*}=\sqrt{n}\left(\phi(\bar{Y_n}^{*})-\phi(\bar{Y}_n)\right),
\]
is typically unavailable. Then in the Gaussian case the DWB is consistent without any further restriction on the topology of the sequence of networks $\seq{G_n}$. We only need to assume that $\sqrt{n}(\bar{Y}_n-\mu_n)$ converges $\CS$-weakly to a conditionally normal random vector and the asymptotic variance of $T_{2,n}$ is a.s.\ positive.

\begin{prop}
\label{prop:DWB_consistency1}
Suppose that $W_n$ is Gaussian, Assumptions \ref{assu:DWB1}, \ref{assu:DWB2}, \ref{assu:BB2}, and \ref{assu:BB3} hold. Then $\FD{1,n}{*}$ is conditionally $\DKM$-consistent given $\CS$. If, in addition, $\mu_n$ converges a.s.\ to a $\CS$-measurable random vector $\mu$ and $\nabla\phi(\mu)\ne 0$ a.s., then $\FD{2,n}{*}$ is conditionally $\DKM$-consistent given $\CS$.
\end{prop}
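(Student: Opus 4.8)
The plan is to apply the conditional bootstrap consistency results from Section~\ref{sec:cond_bootstrap}, specifically Corollary~\ref{corr:bootstrap_consistency1} for $\FD{1,n}{*}$ and Corollary~\ref{corr:bootstrap_consistency2} for $\FD{2,n}{*}$, exploiting the Gaussianity of $W_n$ to bypass the delicate step~(b) that would otherwise be the main obstacle. First I would treat $T_{1,n}^{*}$. Since $W_n$ is Gaussian and conditionally independent of $Y_n$ given $\CS$ (Assumption~\ref{assu:DWB1}), the statistic $\sqrt{n}(\bar{Y}_n^{*}-\bar{Y}_n)=n^{-1/2}\sum_{i\in N_n}W_{n,i}(Y_{n,i}-\bar{Y}_n)$ is, conditionally on $\G_n$, exactly a centered Gaussian vector with covariance $\Sigma_n^{*}$ given in \eqref{eq:DWB_var}. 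Hence $T_{1,n}^{*}=\sqrt{n}\normin{\bar{Y}_n^{*}-\bar{Y}_n}$ has the same conditional law as $\normin{\Sigma_n^{*1/2}\eta}$ with $\eta\sim\ND{0}{I_v}$ independent of $\G_n$. By Proposition~\ref{prop:DWB_var} (whose hypotheses are Assumptions~\ref{assu:DWB2} and~\ref{assu:BB2}, both assumed), $\E[\norm{\Sigma_n^{*}-\Sigma_n}_F\mid \CS]\to 0$ a.s., and combined with $\Sigma_n\to\Sigma$ a.s.\ from Assumption~\ref{assu:BB3} this gives $\Sigma_n^{*}\to\Sigma$ in the appropriate conditional sense. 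A converging-together argument via Lemma~\ref{lemma:aux_weak_conv} then yields that $\normin{\Sigma_n^{*1/2}\eta}$ converges $\CS$-weakly to $\normin{\Sigma^{1/2}\eta}$, which verifies condition~(b) of Corollary~\ref{corr:bootstrap_consistency1}.

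For the matching condition~(a), I would invoke Assumption~\ref{assu:BB3} directly: $\sqrt{n}(\bar{Y}_n-\mu_n)\to\Sigma^{1/2}\eta$ $\CS$-weakly, so by the continuous mapping theorem part of Lemma~\ref{lemma:aux_weak_conv} the statistic $T_{1,n}=\sqrt{n}\normin{\bar{Y}_n-\mu_n}$ converges $\CS$-weakly to $S\eqdef\normin{\Sigma^{1/2}\eta}$. The limiting random vector $S$ is conditionally independent of $\seq{Y_n}$ given $\CS$ because $\eta$ is independent of $\CS$, and its conditional cdf is a.s.\ continuous since $\Sigma$ is a.s.\ positive definite, making $\Sigma^{1/2}\eta$ a nondegenerate Gaussian whose Euclidean norm has a continuous distribution. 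With both limits equal to the same $S$, Corollary~\ref{corr:bootstrap_consistency1} delivers the conditional $\DKM$-consistency of $\FD{1,n}{*}$.

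For $\FD{2,n}{*}$ I would use the smooth-function machinery of Corollary~\ref{corr:bootstrap_consistency2} with $\phi$, $\hat{\theta}_n=\bar{Y}_n$, $\theta_n=\mu_n$, $\theta_n^{*}=\bar{Y}_n$, and $\tau_n=\tau_n^{*}=\sqrt{n}$, so that $\xi_n=\sqrt{n}(\bar{Y}_n-\mu_n)$ and $\xi_n^{*}=\sqrt{n}(\bar{Y}_n^{*}-\bar{Y}_n)$. Condition~(a) of the corollary, $\xi_n\to\Sigma^{1/2}\eta$ $\CS$-weakly, is again exactly Assumption~\ref{assu:BB3}; condition~(b), $\theta_n=\mu_n\to\mu$ a.s., is the added hypothesis; and $\nabla\phi(\mu)\ne 0$ a.s.\ guarantees the nondegeneracy of $S'\eqdef\nabla\phi(\mu)^{\top}\Sigma^{1/2}\eta$, a scalar Gaussian with a.s.\ positive variance, hence with continuous conditional cdf, satisfying the independence and continuity requirements. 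The asymptotic tightness of $\xi_n^{*}$ follows from its conditional Gaussianity together with the boundedness of $\Sigma_n^{*}$ supplied by Proposition~\ref{prop:DWB_var}, and $\hat{\theta}_n^{*}-\theta_n^{*}=\bar{Y}_n^{*}-\bar{Y}_n\PRC{\CS}0$ a.s.\ follows from $\Var(\bar{Y}_n^{*}\mid\G_n)=\Sigma_n^{*}/n\to 0$.

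The one genuinely nontrivial step is condition~(c) of Corollary~\ref{corr:bootstrap_consistency2}, namely $\DK{T_n'^{*},S'\mid\G_n,\CS}\PRC{\CS}0$ a.s., where $T_n'^{*}=\nabla\phi(\bar{Y}_n)^{\top}\xi_n^{*}$. Here the Gaussian case pays off decisively: conditionally on $\G_n$, $T_n'^{*}$ is a centered Gaussian scalar with variance $\nabla\phi(\bar{Y}_n)^{\top}\Sigma_n^{*}\nabla\phi(\bar{Y}_n)$. The hard part will be showing this conditional variance converges a.s.\ to $\nabla\phi(\mu)^{\top}\Sigma\nabla\phi(\mu)$, for which I would combine $\bar{Y}_n\to\mu$ a.s.\ (a consequence of Theorem~\ref{thm:CWLLN} under the maintained assumptions), the continuity of $\nabla\phi$, and the $L^1(\CS)$-convergence $\Sigma_n^{*}\to\Sigma$ from Proposition~\ref{prop:DWB_var}; a Kolmogorov-distance bound between two Gaussians differing only in variance then closes the argument. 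This is precisely why no further restriction on the network topology beyond those already in force is needed in the Gaussian case, since the exact Gaussianity of the bootstrap statistic reduces step~(c) to a scalar variance convergence rather than requiring a separate central limit theorem for the bootstrap distribution. Finally, Theorem~\ref{thm:bootstrap_consistency2} packages conditions~(a)--(c) of Lemma~\ref{lemma:bootstrap_consistency2} into the conclusion, completing the consistency of $\FD{2,n}{*}$.
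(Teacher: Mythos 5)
Your overall strategy coincides with the paper's: exploit the fact that, conditionally on $\G_n$, the Gaussian DWB statistic is \emph{exactly} normal with covariance $\Sigma_n^{*}$, so that no Berry--Esseen step is needed; reduce everything to the convergence $\Sigma_n^{*}\to\Sigma$ supplied by Proposition \ref{prop:DWB_var} and Assumption \ref{assu:BB3}; and conclude via Corollaries \ref{corr:bootstrap_consistency1} and \ref{corr:bootstrap_consistency2}. (The paper defers its proof to that of Proposition \ref{prop:BB_consistency}, which in the Gaussian case collapses to exactly this.) Your treatment of $\FD{2,n}{*}$ is essentially complete and correct: the linearized bootstrap statistic is conditionally Gaussian given $\G_n$, its conditional variance converges to $\nabla\phi(\mu)^{\top}\Sigma\nabla\phi(\mu)>0$, and the Kolmogorov-distance comparison of two conditional Gaussians that you invoke is precisely the paper's Lemma \ref{lemma:aux_normal_approx2}.

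There is, however, a genuine gap in your verification of condition (b) of Corollary \ref{corr:bootstrap_consistency1} for $\FD{1,n}{*}$. You establish that $\normin{\Sigma_n^{*1/2}\eta}$ converges $\CS$-weakly to $\normin{\Sigma^{1/2}\eta}$ and assert that this ``verifies condition (b).'' It does not. $\CS$-weak convergence concerns the $\CS$-conditional distribution of the bootstrap statistic and, via Lemma \ref{lemma:aux_weak_eq}, yields only $\DK{T_{1,n}^{*},S\mid \CS}\to 0$ a.s., whereas condition (b) demands $\DK{T_{1,n}^{*},S\mid \G_n,\CS}\PRC{\CS}0$ a.s., i.e.\ convergence of the \emph{data-conditional} ($\G_n$-conditional) cdf of $T_{1,n}^{*}$. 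Since $\FD{T_{1,n}^{*}}{\CS}(\csdot,x)=\E[\FD{T_{1,n}^{*}}{\G_n}(\csdot,x)\mid\CS]$ a.s., the $\CS$-conditional statement is the averaged, strictly weaker one; in general a bootstrap law can fluctuate across data realizations while its $\CS$-average converges, and ruling this out is the entire point of the conditional framework (and is why Lemma \ref{lemma:aux_weak_conv} is stated only for a fixed conditioning $\sigma$-field, not for $\G_n$ varying with $n$). The step is fixable precisely because the $\G_n$-conditional law of $T_{1,n}^{*}$ is a deterministic function of the $\G_n$-measurable matrix $\Sigma_n^{*}$: given $\Sigma_n^{*}\PRC{\CS}\Sigma$ a.s.\ and $\Sigma$ a.s.\ positive definite, the paper's Lemma \ref{lemma:aux_normal_approx1} bounds $\DK{\normin{\Sigma_n^{*1/2}\eta},\normin{\Sigma^{1/2}\eta}\mid \G_n,\CS}$ by a power of the element-wise distance between the eigenvalue matrices of $\Sigma_n^{*}$ and $\Sigma$, and continuity of eigenvalues in the matrix entries then gives exactly condition (b). Note the asymmetry in your write-up: for the scalar statistic you reach for the correct tool (a $\G_n$-versus-$\CS$ Gaussian comparison in Kolmogorov distance), but for the vector-norm statistic you substitute weak convergence, which loses the conditioning on the data. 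A minor secondary point: Theorem \ref{thm:CWLLN} gives $\bar{Y}_n-\mu_n\PRC{\CS}0$ a.s.\ (conditional $L^1$ convergence), not $\bar{Y}_n\to\mu$ a.s.; the conditional-in-probability statement is in any case all you need for $\nabla\phi(\bar{Y}_n)\PRC{\CS}\nabla\phi(\mu)$ a.s.
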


Given another choice of $W_n$, the process $\arr{\xi_{n,i}\eqdef W_{n,i}(Y_{n,i}-\bar{Y}_n):i\in N_n}$ is $s_n$-dependent conditionally on $\G_n$, i.e., $\xi_{n,i}$ and $\xi_{n,j}$ are conditionally independent given $\G_n$ whenever $j\notin B_{n,i}\eqdef N_n(i;s_n+1)$. Consequently, in addition to the assumptions of Proposition \ref{prop:DWB_consistency1}, we need to control the behavior of the third conditional moments of $W_n$ and the neighborhoods $\seq{B_{n,i}}$ such that the bootstrap distributions $\FD{1,n}{*}$ and $\FD{2,n}{*}$ in this case approach ones under the Gaussian weights as $n\to \infty$.

\begin{prop}
\label{prop:DWB_consistency2}
Suppose that Assumptions \ref{assu:DWB1}, \ref{assu:DWB2}, \ref{assu:BB2}, and \ref{assu:BB3} hold, and
\begin{equation}
\label{eq:DWB_consistency2}
	\frac{1}{n^{3/2}}\sum_{i\in N_n}\sum_{j\in B_{n,i}}\sum_{k\in B_{n,i}\cup B_{n,j}}\prod_{l\in\{i,j,k\}}\norm{W_{n,l}}_{\CS,3}\to 0 \qtext{a.s.}
\end{equation}
Then $\FD{1,n}{*}$ is conditionally $\DKM$-consistent given $\CS$. If, in addition, $\mu_n$ converges a.s.\ to a $\CS$-measurable random vector $\mu$ and $\nabla\phi(\mu)\ne 0$ a.s., then $\FD{2,n}{*}$ is conditionally $\DKM$-consistent given $\CS$.
\end{prop}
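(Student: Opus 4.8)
The plan is to verify the hypotheses of Corollaries \ref{corr:bootstrap_consistency1} and \ref{corr:bootstrap_consistency2} exactly as in the proof of Proposition \ref{prop:DWB_consistency1}; the only genuinely new ingredient is a \emph{conditional} central limit theorem for the bootstrap sum $\frac{1}{\sqrt{n}}\sum_{i\in N_n}\xi_{n,i}$ with $\xi_{n,i}\eqdef W_{n,i}(Y_{n,i}-\bar{Y}_n)$, which in the Gaussian case was automatic. Since $\Omega_n$ is $\CS$-measurable under Assumption \ref{assu:msble_dist} and $W_n$ is conditionally independent of $Y_n$ given $\CS$, we have $\E[W_nW_n^{\top}\mid\G_n]=\Omega_n$ a.s., so given $\G_n$ the array $\arr{\xi_{n,i}}$ is mean zero with conditional variance $\Sigma_n^{*}$, and Proposition \ref{prop:DWB_var} gives $\Sigma_n^{*}\to\Sigma$ a.s. Conditionally on $\G_n$ the only randomness in $\xi_{n,i}$ is carried by $W_{n,i}$, and $\xi_{n,i}$, $\xi_{n,j}$ are conditionally independent whenever $j\notin B_{n,i}$, so $\arr{\xi_{n,i}}$ is an $s_n$-dependent field whose dependency neighborhoods are the blocks $B_{n,i}$. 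By continuous mapping together with Assumption \ref{assu:BB3}, $T_{1,n}\to\normin{\Sigma^{1/2}\eta}$ $\CS$-weakly, so by Corollary \ref{corr:bootstrap_consistency1} it suffices to establish the vector conditional CLT $\frac{1}{\sqrt{n}}\sum_{i\in N_n}\xi_{n,i}\to\Sigma^{1/2}\eta$ $\G_n$-weakly a.s.; continuous mapping and Lemma \ref{lemma:aux_weak_eq} (the limit cdf being continuous) then convert this into condition (b) of that corollary. By the Cramér--Wold device of Lemma \ref{lemma:aux_weak_conv} the CLT reduces to the scalar statement for $\lambda^{\top}\xi_{n,i}=W_{n,i}\lambda^{\top}(Y_{n,i}-\bar{Y}_n)$ for each fixed $\lambda\in\R^v$.

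The scalar array $\seq{\lambda^{\top}\xi_{n,i}}$ is, conditionally on $\G_n$, a mean-zero $s_n$-dependent field with dependency neighborhoods $B_{n,i}$ and conditional variance $n\,\lambda^{\top}\Sigma_n^{*}\lambda$, which tends to the a.s.\ positive $\lambda^{\top}\Sigma\lambda$. The key step is a Stein-type Berry--Esseen bound over the dependency graph $\{B_{n,i}\}$: for a universal constant $C$,
\[
	\DK{\tfrac{\lambda^{\top}\sum_{i\in N_n}\xi_{n,i}}{\sqrt{n\lambda^{\top}\Sigma_n^{*}\lambda}},\,\eta\mid\G_n}\le\frac{C}{(n\lambda^{\top}\Sigma_n^{*}\lambda)^{3/2}}\sum_{i\in N_n}\sum_{j\in B_{n,i}}\sum_{k\in B_{n,i}\cup B_{n,j}}\E\big[\,\absin{\lambda^{\top}\xi_{n,i}}\absin{\lambda^{\top}\xi_{n,j}}\absin{\lambda^{\top}\xi_{n,k}}\mid\G_n\,\big]\qtext{a.s.}
\]
The right-hand side is $\G_n$-measurable, so by the conditional Markov inequality it converges to $0$ in the $\PRC{\CS}$ sense once its $\CS$-conditional expectation tends to $0$ a.s. Factorizing each summand as $\prod_{l\in\{i,j,k\}}\absin{\lambda^{\top}(Y_{n,l}-\bar{Y}_n)}\cdot\E[\absin{W_{n,i}W_{n,j}W_{n,k}}\mid\CS]$ (using $W_n\perp Y_n$ given $\CS$), bounding the $W$-factor by $\prod_{l}\norm{W_{n,l}}_{\CS,3}$ via H\"older, keeping the normalization bounded below by $\Sigma_n^{*}\to\Sigma$, and bounding $\E[\prod_l\absin{\lambda^{\top}(Y_{n,l}-\bar{Y}_n)}\mid\CS]\le\norm{\lambda}^3\prod_l\norm{Y_{n,l}-\bar{Y}_n}_{\CS,3}\le C'$ (finite since $r>2$ in Assumption \ref{assu:BB2} yields $\tilde{\mu}_3\le\tilde{\mu}_{2r}<\infty$), the $\CS$-conditional expectation of the bound is dominated by a constant multiple of the left-hand side of \eqref{eq:DWB_consistency2}, hence tends to $0$ a.s.

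This yields the scalar conditional CLT, and assembling over $\lambda$ and applying continuous mapping to $\normin{\csdot}$ gives condition (b) of Corollary \ref{corr:bootstrap_consistency1}, so $\FD{1,n}{*}$ is conditionally $\DKM$-consistent given $\CS$. For $\FD{2,n}{*}$ I would invoke Corollary \ref{corr:bootstrap_consistency2} with $\hat{\theta}_n=\theta_n^{*}=\bar{Y}_n$ and $\hat{\theta}_n^{*}=\bar{Y}_n^{*}$: the CLT gives $\CS$-asymptotic tightness of $\xi_n^{*}=\sqrt{n}(\bar{Y}_n^{*}-\bar{Y}_n)$ and $\hat{\theta}_n^{*}-\theta_n^{*}\PRC{\CS}0$ a.s., while $\bar{Y}_n\to\mu$ a.s.\ with $\mu_n\to\mu$, and the continuity of $\nabla\phi$ together with the converging-together lemma of Lemma \ref{lemma:aux_weak_conv} give $T_{2,n}'^{*}=\nabla\phi(\bar{Y}_n)^{\top}\xi_n^{*}\to\nabla\phi(\mu)^{\top}\Sigma^{1/2}\eta$ $\CS$-weakly; since $\nabla\phi(\mu)\ne0$ a.s.\ this limit has an a.s.\ continuous conditional cdf, so the corollary applies.

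The main obstacle is the Berry--Esseen step: one needs a conditional (given $\G_n$) normal approximation for sums over the \emph{growing} dependency graph $\{B_{n,i}\}$ (with $s_n\to\infty$) whose error term is exactly the triple sum of \eqref{eq:DWB_consistency2}. This requires an auxiliary Stein-method lemma for $s_n$-dependence, and care that the standardization $\lambda^{\top}\Sigma_n^{*}\lambda$ stays bounded away from zero uniformly in $n$, which is secured by $\Sigma_n^{*}\to\Sigma$ with $\Sigma$ a.s.\ positive definite.
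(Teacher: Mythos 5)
Your strategy coincides with the paper's in substance: conditionally on $\G_n$ the multiplier sum is a sum over the dependency graph $\{B_{n,i}\}$, a Stein-type normal approximation bounds its conditional Kolmogorov distance to a Gaussian by (essentially) the triple sum in \eqref{eq:DWB_consistency2}, and the general machinery (Corollaries \ref{corr:bootstrap_consistency1} and \ref{corr:bootstrap_consistency2}, together with the Gaussian comparison Lemmas \ref{lemma:aux_normal_approx1} and \ref{lemma:aux_normal_approx2}) finishes exactly as in Proposition \ref{prop:BB_consistency}. Your treatment of the data factors is also what the paper's argument implicitly requires: since \eqref{eq:DWB_consistency2} involves only the $W$-moments, one must pull out $\E[\absin{W_{n,i}W_{n,j}W_{n,k}}\mid\CS]$ by conditional independence and absorb the $Y$-factors via H\"older and $\tilde{\mu}_{2r}<\infty$ with $r>2$.

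Where you genuinely differ is the multivariate step, and this is where your outline has gaps. The paper does not use the Cram\'er--Wold device here: its Lemma \ref{lemma:aux_normal_approx3} (proved from the smooth-test-function bound of Theorem 3.4 in \cite{Rollin:13}, a smoothing argument, and the anti-concentration Lemma \ref{lemma:chi_bound}) bounds $\DK{T_{1,n}^{*},S_{1,n}^{*}\mid \G_n}$ for the vector statistic directly, against the matched-variance Gaussian $Q_n\mid\G_n\sim\ND{0}{\Sigma_n^{*}}$, with a $1/4$-power of the triple sum; only afterwards is $\Sigma_n^{*}$ replaced by $\Sigma$. Your route --- scalar projections, a first-power Berry--Esseen bound, then ``assembling over $\lambda$'' --- leaves two things unproved. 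First, the asserted first-power Kolmogorov bound for locally dependent sums is itself a nontrivial lemma (the paper only establishes a $1/4$-power version; a Chen--Shao-type local-dependence result would serve, but you give neither proof nor citation, and you correctly flag this as the main obstacle). Second, and more substantively, the reassembly step is not licensed by Lemma \ref{lemma:aux_weak_conv}(b): that Cram\'er--Wold device is stated for a fixed conditioning field $\F$ and almost sure convergence of conditional laws, whereas your scalar convergences are relative to $\G_n$, which grows with $n$, hold only in conditional probability given $\CS$, and have exceptional sets depending on $\lambda$. Passing from one-dimensional projections to the norm statistic $T_{1,n}^{*}$ therefore requires an additional subsequence/tightness argument (or a bound uniform over directions) --- precisely the complication the paper's multivariate lemma is designed to avoid. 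Both gaps are repairable, so your proposal is a sound outline; but the genuinely new content of this proposition, the conditional multivariate Stein bound, is exactly what remains unproved in your version.
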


\begin{remark*}
The convergence condition in \eqref{eq:DWB_consistency2} is quite strong. In particular, in a simple case when the neighborhoods $\seq{B_{n,i}}$ have the same size $l_n<n$ for all $n\ge 1$ and $\sup_{n,i\in N_n}\norm{W_{n,i}}_{\CS,3}<\infty$ a.s., it requires $l_n=o(n^{1/4})$. Therefore, it is of high interest to find a better way to handle network dependent processes under $m$-dependence.
\end{remark*}



\section{Conclusion}

Nonparametric bootstrapping for time series and spatial processes has been extensively studied in the past decades. Thus, various resampling methods are now available for statistical analysis of dependent data in these cases. However, the lack of regular structure in networks renders the use of these techniques for bootstrap-based inference in the case of network dependent processes impracticable. In this paper we proposed a block-based method and a variant of the dependent wild bootstrap suitable for the latter processes satisfying the conditional version of \cite{Doukhan/Louhichi:99}'s $\psi$-weak dependence condition. We established the first-order validity of these methods to construct confidence sets for the mean of a network dependent process. In addition, we showed their consistency under the smooth function model conditionally on a common shock of a general form. Finally, the corresponding bootstrap variance estimators can be used for asymptotic inference instead of the network HAC estimator, which is not necessarily positive semi-definite.

As for the future directions, having a continuity theorem and other related results similar to ones established in \cite{Belyaev:00} but under convergence in conditional probability would significantly weaken the bootstrap consistency conditions derived in this paper. In addition, an extension of these methods for bootstrapping $M$-estimators and empirical processes is of great importance for applied research.

\newpage
\bibliographystyle{elsart-harv}
\bibliography{network_bootstrap}

\newpage
\appendix

\section{Proofs of the Main Results}
\label{sec:app_main}

In the following let $\varphi_K$ with $K\in\R_{+}$ denote the element-wise censoring function, i.e., for an indexed family of real numbers $\mathbf{x}\equiv (x_i)_{i\in I}$,
\[
	[\varphi_K(\mathbf{x})]_i\eqdef (-K)\vee(K\wedge x_i), \quad i\in I.
\]

\begin{proof}[\textbf{Proof of Proporsition \ref{prop:local_lipshitz}}]
Fix $\kappa\ge 1$ and let $\xi\eqdef(f\circ h)(Z_{n,A})$ and $\zeta\eqdef(g\circ h)(Z_{n,B})$, where $f,g\in \L_w$ and $(A,B)\in \PS_n(a,b;s)$. Define the censored versions
\[
	\xi_{\kappa}\eqdef(f\circ h\circ \varphi_{\kappa})(Z_{n,A}) \qtext{and}\quad \zeta_{\kappa}\eqdef(g\circ h\circ \varphi_{\kappa})(Z_{n,B}).
\]
Then
\begin{align*}
	\abs{\Cov(\xi,\zeta\mid \CS)}&\le \abs{\Cov(\xi-\xi_{\kappa},\zeta-\zeta_{\kappa}\mid \CS)}+\abs{\Cov(\xi_{\kappa},\zeta_{\kappa}\mid \CS)} \\
	&\le 2\norm{f}_{\infty}\E[\abs{\zeta-\zeta_{\kappa}}\mid \CS]+2\norm{g}_{\infty}\E[\abs{\xi-\xi_{\kappa}}\mid \CS] \\
	&\quad +\abs{\Cov(\xi_{\kappa},\zeta_{\kappa}\mid \CS)} \qtext{a.s.}
\end{align*}
First, $\Lip(f\circ h\circ \varphi_{\kappa})\le 2\eta\kappa^{\tau-1}\Lip(f)$ and $\Lip(g\circ h\circ \varphi_{\kappa})\le 2\eta\kappa^{\tau-1}\Lip(g)$. Therefore,
\begin{align}
	&\begin{aligned}
	\label{eq:local_lipshitz1}
		\abs{\Cov(\xi_{\kappa},\zeta_{\kappa}\mid \CS)}&\le \big(c_1\norm{f}_{\infty}\norm{g}_{\infty} \\
		&\quad +2\eta\kappa^{\tau-1}\{c_2\Lip(f)\norm{g}_{\infty}+c_3\norm{f}_{\infty}\Lip(g)\} \\
		&\quad +(2\eta\kappa^{\tau-1})^2 c_4\Lip(f)\Lip(g)\big)\gamma_{n,s} \qtext{a.s.}
	\end{aligned}
\intertext{Second,}
	&\begin{aligned}
		\label{eq:local_lipshitz2}
		\E[\abs{\xi-\xi_{\kappa}}\mid \CS]&\le \Lip(f)\sum_{i\in A}\E[\norm{h(Z_{n,i})-(h\circ \varphi_{\kappa})(Z_{n,i})}\mid \CS] \\
		&\le C_v\Lip(f)\sum_{i\in A}\E[\norm{Z_{n,i}}_{\infty}^{\tau}\ind\{\norm{Z_{n,i}}_{\infty}>\kappa\}\mid \CS] \\
		&\le C_v\Lip(f)aL\kappa^{\tau-p} \qtext{a.s.},
	\end{aligned}
\end{align}
where $C_v>0$ is a constant depending on $v$ and $\eta$. Similarly,
\begin{align}
\label{eq:local_lipshitz3}
	\E[\abs{\zeta-\zeta_{\kappa}}\mid \CS]&\le C_v\Lip(g)bL\kappa^{\tau-p} \qtext{a.s.}
\end{align}

Since inequalities \eqref{eq:local_lipshitz1}-\eqref{eq:local_lipshitz3} hold for all $\kappa\ge 1$ a.s., they also hold for random $\kappa$ on $\{\kappa\in [1,\infty)\}$. The result follows by setting $\kappa=(\gamma_{n,s}\wedge 1)^{1/(1-p)}$, if $c_4=0$ and $\kappa=(\gamma_{n,s}\wedge 1)^{1/(2-p-\tau)}$, otherwise, and, noticing that $\Cov(\xi,\zeta\mid \CS)=0$ a.s.\ on $\{\gamma_{n,s}=0\}$.
\end{proof}

\medskip

\begin{proof}[\textbf{Proof of Theorem \ref{thm:CWLLN}}]
First, it suffices to show that
\[
	\norm{c^{\top}(\bar{Y}_n-\E[\bar{Y}_n\mid \CS])}_{\CS,1}\to 0 \qtext{a.s.}
\]
for any $c\in \R^v$ with $\norm{c}=1$. Then the proof is similar to one given in the unconditional case. Specifically, for $k>0$, let $\xi_{n,i}^{(k)}\eqdef \varphi_k(c^{\top}Y_{n,i})$ and $\zeta_{n,i}^{(k)}\eqdef c^{\top}Y_{n,i}-\xi_{n,i}^{(k)}$ so that
\begin{align*}
	\norm{c^{\top}(\bar{Y}_n-\E[\bar{Y}_n\mid \CS])}_{\CS,1}&\le 2\max_{i\in N_n}\norm{\zeta_{n,i}^{(k)}}_{\CS,1} \\
	&\quad +\norm{n^{-1}\sum_{i\in N_n}\left(\xi_{n,i}^{(k)}-\E[\xi_{n,i}^{(k)}\mid \CS]\right)}_{\CS,2} \qtext{a.s.}
\end{align*}
The result then follows from the definition of the essential infimum and the following inequalities:
\[
	\norm{\zeta_{n,i}^{(k)}}_{\CS,1}\le \E[\norm{Y_{n,i}}\ind\{\norm{Y_{n,i}}>k\}\mid \CS] \qtext{a.s.}
\]
and, since $\psi_{1,1}(\varphi_k,\varphi_k)\le Ck^2$,
\[
	\norm{\sum_{i\in N_n}\left(\xi_{n,i}^{(k)}-\E[\xi_{n,i}^{(k)}\mid \CS]\right)}_{\CS,2}\le \sqrt{n}k\left(4+C\sum_{s\ge 1}\delta_n^{\partial}(s;1)\theta_{n,s}\right)^{1/2} \qtext{a.s.} \qedhere
\]
\end{proof}

\medskip

\begin{proof}[\textbf{Proof of Theorem \ref{thm:bootstrap_consistency1}}]
The first assertion follows trivially from the triangle inequality. Consider the second assertion. First, note that for a sub-$\sigma$-algebra $\F\subset \H$ and random variables $X,Y$ and $\F$-measurable random variable $Z$, $\PR{X\le Z\mid \F}=\FD{X}{\F}(\csdot,Z)$ a.s.\ and $\PR{Y\le Z\mid \F}=\FD{Y}{\F}(\csdot,Z)$ a.s.\ \citep[see, e.g.,][Theorem~5.4]{Kallenberg:02:FMP}. Therefore,
\[
	\abs{\PR{X\le Z\mid \F}-\PR{Y\le Z\mid \F}}\le \DK{X,Y\mid \F} \qtext{a.s.}
\]
In addition, if $\F=\sigma(\A\cup \B)$, where $\A$ and $\mathcal{B}$ are sub-$\sigma$-algebras of $\H$, and $Y$ is conditionally independent of $\A$ given $\B$, then $\DK{X,Y\mid \F}=\DK{X,Y\mid \F,\B}$ a.s.

Let $c_n(\alpha)$ denote the conditional $\alpha$-quantile of $S_n$ given $\CS$. Fix $\eta>0$ such that $\alpha\pm 2\eta\in (0,1)$ and let $\Delta_n\equiv \DK{T_n^{*},S_n\mid \G_n,\CS}$. Then, using the properties of generalized inverses \citep[see, e.g.,][Proposition 1]{Embrechts/Hofert:13} and the conditional independence of $S_n$ and $Y_n$ given $\CS$, we get
\begin{align*}
	\PR{S_n\le c_n^{*}(\alpha+\eta)\mid \G_n} &\ge \PR{T_n^{*}\le c_n^{*}(\alpha+\eta)\mid \G_n}-\eta \\
	&\ge \alpha \\
	&= \PR{S_n\le c_n(\alpha)\mid \G_n}\qtext{a.s.\ on } \{\Delta_n\le \eta\}
\end{align*}
and
\begin{align*}
	\PR{T_n^{*}\le c_n(\alpha+2\eta)\mid \G_n} &\ge \PR{S_n\le c_n(\alpha+2\eta)\mid \G_n}-\eta \\
	&= \alpha+\eta \\
	&\ge \PR{T_n^{*}\le c_n^{*}(\alpha)\mid \G_n} \qtext{a.s.\ on } \{\Delta_n\le \eta\}.
\end{align*}
Therefore,
\begin{align*}
	\PR{c_n^{*}(\alpha)\ge c_n(\alpha-\eta) \mid\CS}&\ge \PR{c_n^{*}(\alpha)\ge c_n(\alpha-\eta),\Delta_n\le \eta\mid\CS} \\
	&=\PR{\Delta_n\le \eta\mid \CS} \qtext{a.s.}
\end{align*}
and
\begin{align*}
	\PR{c_n^{*}(\alpha)\le c_n(\alpha+2\eta)\mid\CS}&\ge \PR{c_n^{*}(\alpha)\le c_n(\alpha+2\eta),\Delta_n\le \eta\mid\CS} \\
	&=\PR{\Delta_n\le \eta\mid \CS} \qtext{a.s.}
\end{align*}
Using the last two inequalities we find that
\begin{align*}
	&\PR{c_n(\alpha)\wedge c_n^{*}(\alpha)<T_n\le c_n(\alpha)\vee c_n^{*}(\alpha)\mid \CS} \\
	&\qquad \le \PR{c_n(\alpha-\eta)<T_n\le c_n(\alpha+2\eta)\mid\CS} \\
	&\quadt +\PR{c_n(\alpha-\eta)>c_n^{*}(\alpha)\mid\CS}+\PR{c_n(\alpha+2\eta)<c_n^{*}(\alpha)\mid\CS} \\
	&\qquad\le \PR{c_n(\alpha-\eta)<S_n\le c_n(\alpha+2\eta)\mid\CS} \\
	&\quadt +2\PR{\Delta_n>\eta\mid\CS}+2\DK{T_n,S_n\mid \CS} \\
	&\qquad= 3\eta+2\PR{\Delta_n>\eta\mid\CS}+2\DK{T_n,S_n\mid \CS} \qtext{a.s.}
\end{align*}
and
\begin{align}
\label{eq:quantile_bound}
\begin{aligned}
	A_{n,\alpha}&\eqdef\abs{\PR{T_n\le c_n^{*}(\alpha)\mid \CS}-\alpha} \\
	&\phantom{:}\le 3\eta+2\PR{\Delta_n>\eta\mid\CS}+3\DK{T_n,S_n\mid \CS} \qtext{a.s.}
\end{aligned}
\end{align}
Finally, there exists a sequence $\seq{\alpha_k}$ such that $\esssup_{\alpha\in (0,1)} A_{n,\alpha}=\sup_k A_{n,\alpha_k}$ a.s.\ and the latter is a.s.\ bounded by the RHS of \eqref{eq:quantile_bound}. Therefore,
\[
	\limsup_{n\to\infty}\Big(\esssup_{\alpha\in (0,1)} A_{n,\alpha}\Big)\le 3\eta \qtext{a.s.}
\]
and the result follows by considering a sequence $\eta_m\searrow 0$.
\end{proof}

\medskip

\begin{proof}[\textbf{Proof of Lemma \ref{lemma:bootstrap_consistency2}}]
By the mean value theorem, we may write
\begin{align}
\label{eq:mvapprox}
\begin{aligned}
	T_n&=\nabla\phi(\tilde{\theta}_n)^{\top}\tau_n(\hat{\theta}_n-\theta_n) \qtext{and} \\
	T_n^{*}&=\nabla\phi(\tilde{\theta}_n^{*})^{\top}\tau_n^{*}(\hat{\theta}_n^{*}-\theta_n^{*}),
\end{aligned}
\end{align}
where $\tilde{\theta}_n$ and $\tilde{\theta}_n^{*}$ are such that $\normin{\tilde{\theta}_n-\theta_n}\le \normin{\hat{\theta}_n-\theta_n}$ and $\normin{\tilde{\theta}_n^{*}-\theta_n^{*}}\le \normin{\hat{\theta}_n^{*}-\theta_n^{*}}$. Then for any $r\in \R$ and $\epsilon>0$,
\begin{align*}
	&\abs{\PR{T_n^{*}\le r\mid \G_n}-\PR{T_n'^{*}\le r\mid \G_n}} \\
	&\qquad \le \PR{T_n'^{*}\le r+R_n^{*}\mid \G_n}-\PR{T_n'^{*}\le r-R_n^{*}\mid \G_n} \\
	&\qquad \le 2\DK{T_n'^{*},S_n'\mid \G_n,\CS}+\LC{S_n',2\epsilon\mid \CS}+\PR{R_n^{*}>\epsilon\mid \G_n} \qtext{a.s.},
\end{align*}
where
\[
	R_n^{*}\equiv \abs{(\nabla\phi(\tilde{\theta}_n^{*})-\nabla\phi(\theta_n^{*}))^{\top}\tau_n^{*}(\hat{\theta}_n^{*}-\theta_n^{*})}.
\]
Similarly, for any $r\in \R$ and $\epsilon>0$,
\begin{align*}
	&\abs{\PR{T_n\le r\mid \CS}-\PR{T_n'\le r\mid \CS}} \\
	&\qquad \le 2\DK{T_n',S_n'\mid \CS}+\LC{S_n',2\epsilon\mid \CS}+\PR{R_n>\epsilon\mid \CS} \qtext{a.s.},
\end{align*}
where
\[
	R_n\equiv \abs{(\nabla\phi(\tilde{\theta}_n)-\nabla\phi(\theta_n))^{\top}\tau_n(\hat{\theta}_n-\theta_n)}.
\]

By Lemma \ref{lemma:aux_asy_tightness} the sequence $\seq{\theta_n^{*}}$ is $\CS$-asymptotically tight. Therefore, using Lemma \ref{lemma:aux_CMT} together with the $\CS$-asymptotic tightness of $\tau_n^{*}(\hat{\theta}_n^{*}-\theta_n^{*})$ and $\tau_n(\hat{\theta}_n-\theta_n)$ it follows that $\PR{R_n^{*}>\epsilon\mid \CS}\to 0$ a.s.\ and $\PR{R_n>\epsilon\mid \CS}\to 0$ a.s. Consequently, for any $\nu>0$,
\begin{align*}
	&\limsup_{n\to\infty}\PR{\DK{T_n^{*},T_n'^{*}\mid \G_n}>\nu\mid \CS} \\
	&\qquad \le 3\nu^{-1}\essinf_{\epsilon>0}\limsup_{n\to\infty}\LC{S_n',\epsilon\mid \CS}=0 \qtext{a.s.}
\shortintertext{and}
	&\limsup_{n\to\infty}\DK{T_n,T_n'\mid \CS} \\
	&\qquad \le \essinf_{\epsilon>0}\limsup_{n\to\infty}\LC{S_n',\epsilon\mid \CS}=0 \qtext{a.s.}
\end{align*}
The result then follows from the triangle inequality.
\end{proof}

\medskip

\begin{proof}[\textbf{Proof of Theorem \ref{thm:bootstrap_consistency2}}]
Follows immediately from Lemma \ref{lemma:bootstrap_consistency2} and Theorem \ref{thm:bootstrap_consistency1}.
\end{proof}

\medskip

\begin{proof}[\textbf{Proof of Corollary \ref{corr:bootstrap_consistency2}}]
Consider Equation \eqref{eq:mvapprox} in the proof of Lemma \ref{lemma:bootstrap_consistency2}. By Lemma \ref{lemma:aux_weak_conv} $T_n$ converges $\CS$-weakly to $S'$ ($\because \tilde{\theta}_n\PRC{\CS}\theta$ a.s.\ and $x\mapsto\nabla\phi(x)$ is continuous). Hence, $\DK{T_n,S'\mid \CS}\to 0$ a.s.\ by Lemma \ref{lemma:aux_weak_eq}. Convergence of $\DK{T_n^{*},S'\mid \G_n,\CS}$ follows from arguments similar to those given in the proof of Lemma \ref{lemma:bootstrap_consistency2}. Finally, the result holds by Theorem \ref{thm:bootstrap_consistency1}.
\end{proof}

\medskip

\begin{proof}[\textbf{Proof of Proposition \ref{prop:BB_var}}]
Let $\zeta_{n,i}\eqdef \sum_{j\in B_{n,i}}Y_{n,i}'$, where $Y_{n,i}'\eqdef Y_{n,i}-\mu_{n,i}$, and let $\zeta_{n,i}^{*}$ be its resampling version. Then, using the conditional independence of the elements of $\{\zeta_{n,i}^{*}\}$ given $\G_n$, we find that
\begin{align}
	&\begin{aligned}
	\label{eq:BB_var1}
		\tilde{\Sigma}_n&\eqdef\Var\left(\frac{1}{\sqrt{n}}\sum_{k=1}^{K_n}\zeta_{n,k}^{*}\mid \G_n\right)=\frac{1}{n}\sum_{k=1}^{K_n}\Var(\zeta_{n,k}^{*}\mid \G_n) \\
		&=\frac{1}{\delta_n(s_n)}\left(\frac{1}{n}\sum_{i\in N_n}\zeta_{n,i}\zeta_{n,i}^{\top}-\bar{\zeta}_n\bar{\zeta}_n^{\top} \right) \\
		&=\Sigma_n^{*}-\frac{1}{n}\sum_{i\in N_n}(\omega_n(i)-1)(\zeta_{n,i}\mu_n^{\top}+\mu_n\zeta_{n,i}^{\top})-\frac{\Delta_n(s_n;2)}{\delta_n(s_n)}\mu_n\mu_n^{\top} \\
		&\equiv \Sigma_n^{*}-A_{n,1}+A_{n,2},
	\end{aligned}
\intertext{where $\bar{\zeta}_n\eqdef n^{-1}\sum_{i\in N_n}\zeta_{n,i}$. On the other hand, using the second line of \eqref{eq:BB_var1},}
	&\begin{aligned}
	\label{eq:BB_var2}
		\tilde{\Sigma}_n&=\frac{1}{n}\sum_{i,j\in N_n}\omega_n(i,j)Y_{n,i}'Y_{n,j}'^{\top} \\
		&\qquad-\delta_n(s_n)\times\frac{1}{n}\sum_{i\in N_n}\omega_n(i)Y_{n,i}'\times\frac{1}{n}\sum_{i\in N_n}\omega_n(i)Y_{n,i}'^{\top} \\
		&\equiv B_{n,1}+B_{n,2}.
	\end{aligned}
\end{align}

Let $y_{n,i}'\eqdef c^{\top}Y_{n,i}'$ and $\mu_n'\eqdef c^{\top}\mu_n$, where $c\in\R^v$ with $\norm{c}=1$ and note that by Lemma \ref{lemma:aux_matrix_conv} it suffices to show that $\E[\absin{c^{\top}(\Sigma_n^{*}-\Sigma_n)c}\mid \CS]\to 0$ a.s. Also by Lemma 2.1. in \citealias{Kojevnikov/Marmer/Song:20} the process $\seq{y_n'}$ is $(\L_1,\psi,\CS)$-weakly dependent with the weak dependence coefficients $\seq{\gamma_n}$. In the following let
\[
	\Xi_n\eqdef \sum_{s\ge 1}\delta_n^{\partial}(s)\gamma_{n,s}^{1-\frac{2}{r}}.
\]

\begin{claim}
$\E[\absin{c^{\top}(\tilde{\Sigma}_n-\Sigma_n)c}\mid \CS]\to 0$ a.s.
\end{claim}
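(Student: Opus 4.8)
The plan is to reduce the matrix statement to a scalar one and then split $c^\top(\tilde{\Sigma}_n-\Sigma_n)c$ into pieces matched one-to-one to Assumptions \ref{assu:BB1} and \ref{assu:BB2}. Since the target is a conditional $L^1$ bound, it suffices (by Lemma \ref{lemma:aux_matrix_conv}, already invoked) to fix $c\in\R^v$ with $\norm{c}=1$ and control $\E[\absin{c^\top(\tilde{\Sigma}_n-\Sigma_n)c}\mid\CS]$. Writing $y_{n,i}'\eqdef c^\top Y_{n,i}'$ and recalling that $\seq{y_n'}$ is $(\L_1,\psi,\CS)$-weakly dependent with coefficients $\seq{\gamma_n}$, I would use the representation $\tilde{\Sigma}_n = B_{n,1}+B_{n,2}$ from \eqref{eq:BB_var2} together with $c^\top\Sigma_n c = n^{-1}\sum_{i,j}\E[y_{n,i}'y_{n,j}'\mid\CS]$ to write
\[
	c^\top(\tilde{\Sigma}_n - \Sigma_n)c = \underbrace{\tfrac1n\sum_{i,j}\omega_n(i,j)\big(y_{n,i}'y_{n,j}' - \E[y_{n,i}'y_{n,j}'\mid\CS]\big)}_{(\mathrm{I})} + \underbrace{\tfrac1n\sum_{i,j}(\omega_n(i,j)-1)\E[y_{n,i}'y_{n,j}'\mid\CS]}_{(\mathrm{II})} + c^\top B_{n,2}c.
\]
The key observation is that, by Assumption \ref{assu:msble_dist}, every $\omega_n(i,j)$ is $\CS$-measurable, so $(\mathrm{II})$ is itself $\CS$-measurable and its conditional $L^1$ norm is simply $\abs{(\mathrm{II})}$, whereas $(\mathrm{I})$ and $c^\top B_{n,2}c$ must be handled through conditional second-moment bounds.

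For $(\mathrm{II})$ and $B_{n,2}$ I would argue as follows. The diagonal of $(\mathrm{II})$ is bounded by $\tilde{\mu}_2^2\,\Delta_n(s_n;1)/\delta_n(s_n)$, which vanishes by Assumption \ref{assu:BB1}(a) together with $\Delta_n(s_n;1)\le\sqrt{\Delta_n(s_n;2)}$; the off-diagonal part is grouped by distance shells $N_n^\partial(i;s)$, bounded using the conditional covariance inequality $\absin{\Cov(y_{n,i}',y_{n,j}'\mid\CS)}\le C\gamma_{n,s}$ obtained by the truncation device of the proof of Theorem \ref{thm:CWLLN}, and then controlled by Assumption \ref{assu:BB1}(c) for the shells inside the support of $\omega_n$ (radius $2s_n+1$) and by the tail of the summable series in Assumption \ref{assu:BB2}(a) for $s>2s_n$, where $\omega_n\equiv0$. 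For $c^\top B_{n,2}c = -\delta_n(s_n)\big(n^{-1}\sum_i\omega_n(i)y_{n,i}'\big)^2$, its conditional $L^1$ norm equals $\delta_n(s_n)$ times the conditional variance of the weighted average; the diagonal contribution reduces, via $\omega_n(i)\le D_n(s_n)/\delta_n(s_n)$ and the identity $\sum_i\omega_n(i)=n$, to a multiple of $D_n(s_n)/n$, and the off-diagonal contribution to a multiple of $D_n(s_n)^2\,\Xi_n/(n\delta_n(s_n))$; both vanish under Assumption \ref{assu:BB1}(a) (which gives $D_n(s_n)/\sqrt{\delta_n(s_n)n}\to0$) and the boundedness of $\Xi_n$ from Assumption \ref{assu:BB2}(a).

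The fluctuation term $(\mathrm{I})$ is the crux. Here I would bound $\E[\abs{(\mathrm{I})}\mid\CS]\le(\E[(\mathrm{I})^2\mid\CS])^{1/2}$ and expand the conditional variance as
\[
	\E[(\mathrm{I})^2\mid\CS] = \frac1{n^2}\sum_{i,j,k,l}\omega_n(i,j)\omega_n(k,l)\,\Cov\!\big(y_{n,i}'y_{n,j}',\,y_{n,k}'y_{n,l}'\mid\CS\big).
\]
Because $\omega_n(i,j)\ne0$ forces $j\in N_n(i;2s_n+1)$, only quadruples with $j$ near $i$ and $l$ near $k$ contribute; the fourth-order conditional covariance is bounded by applying \eqref{eq:cov_bound} to the node groups $\{i,j\}$ and $\{k,l\}$ after censoring (the censored product being Lipschitz with constant of order $\kappa$), which yields a bound of order $\gamma_{n,s}^{1-2/r}$ in the separation $s=\floor{d_n(\{i,j\},\{k,l\})}$ once the truncation level is optimized against the $2r$-th conditional moment $\tilde{\mu}_{2r}$. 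Summing over quadruples reorganizes the estimate into $C n^{-2}\sum_{s\ge0}\absin{H_n(s,2s_n+1)}\gamma_{n,s}^{1-2/r}$, which tends to $0$ by Assumption \ref{assu:BB2}(b).

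I expect this fourth-order covariance estimate—combining the product-censoring step with the combinatorics encoded by $H_n(s,2s_n+1)$—to be the main obstacle, since the covariance inequality \eqref{eq:cov_bound} is stated for single groups of bounded Lipschitz functions and must be upgraded to products of unbounded coordinates. A secondary but genuine subtlety is the passage, in the off-diagonal part of $(\mathrm{II})$, from the per-shell convergence guaranteed by Assumption \ref{assu:BB1}(c) to convergence of the full sum over the growing range $s\le2s_n$; this requires combining the pointwise-in-$s$ convergence with the summable domination supplied by Assumption \ref{assu:BB2}(a) and the vanishing of $\omega_n$ beyond radius $2s_n+1$.
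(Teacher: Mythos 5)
Your decomposition, tools, and assumption mapping are essentially the paper's (the paper splits $c^{\top}(\tilde{\Sigma}_n-\Sigma_n)c$ into $R_{n,0}+R_{n,1}+R_{n,2}+R_{n,3}$ plus $c^{\top}B_{n,2}c$, which is your $(\mathrm{I})+(\mathrm{II})+c^{\top}B_{n,2}c$ with the diagonal regrouped), but your regrouping of the diagonal creates a step that fails as written. You fold the pairs $i=j$ into the $\omega$-weighted fluctuation term $(\mathrm{I})$ and then claim that the quadruple expansion reorganizes into $C n^{-2}\sum_{s\ge 0}\abs{H_n(s,2s_n+1)}\gamma_{n,s}^{1-2/r}$ with a genuine constant $C$. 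That requires $\sup_n \max_{i,j}\omega_n(i,j)<\infty$ over \emph{all} pairs, whereas the paper only assumes $\tilde{\omega}\eqdef\sup_n\max_{i\ne j}\omega_n(i,j)<\infty$: the diagonal weight $\omega_n(i)=\abs{N_n(i;s_n+1)}/\delta_n(s_n)$ is bounded only by $D_n(s_n)/\delta_n(s_n)$, which may diverge under Assumption \ref{assu:BB1}(a) (e.g., $\delta_n(s_n)\asymp \log n$ and $D_n(s_n)\asymp n^{1/3}$ are compatible with $\Delta_n(s_n;2)/\delta_n(s_n)\to 0$ and $D_n(s_n)/\sqrt{\delta_n(s_n)n}\to 0$). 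Indeed, you implicitly concede this in your $B_{n,2}$ step, where you bound $\omega_n(i)$ by $D_n(s_n)/\delta_n(s_n)$ rather than by a constant. The paper's finer split is precisely the fix: on the diagonal it separates $(y_{n,i}'^2-\E[y_{n,i}'^2\mid\CS])$ (weight one, second moment $O(n^{-1}(1+\Xi_n))$ with $\Xi_n\eqdef\sum_{s\ge1}\delta_n^{\partial}(s)\gamma_{n,s}^{1-2/r}$) from $(\omega_n(i)-1)y_{n,i}'^2$, the latter controlled in conditional $L^1$ by $\tilde{\mu}_{2r}^2\,\Delta_n(s_n;1)/\delta_n(s_n)\to 0$; only off-diagonal pairs, where $\tilde{\omega}$ applies, ever enter an $\omega$-weighted second-moment computation. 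Your argument is repaired by performing exactly this further split of the diagonal of $(\mathrm{I})$ (and correspondingly of $(\mathrm{II})$, whose diagonal you already handle this way).

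Two smaller points on $(\mathrm{II})$. First, the truncation device yields $\abs{\Cov(y_{n,i}',y_{n,j}'\mid\CS)}\le C(\tilde{\mu}_{2r}^2\vee 1)\gamma_{n,s}^{1-2/r}$, not $C\gamma_{n,s}$ as you state; the exponent matters because it is what links the bound to the series in Assumption \ref{assu:BB2}(a). Second, your claim that shells with $s>2s_n$ are killed ``by the tail of the summable series in Assumption \ref{assu:BB2}(a)'' does not follow: that assumption gives $\limsup_n\Xi_n<\infty$, which does not force $\sum_{s>2s_n}\delta_n^{\partial}(s)\gamma_{n,s}^{1-2/r}\to 0$ (the mass could sit near $s\approx 3s_n$). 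The paper instead treats all shells simultaneously: the per-shell convergence of Assumption \ref{assu:BB1}(c), domination by $C(\tilde{\omega}+1)\Xi_n$, and a dominated-convergence argument over $s$. Your closing paragraph gestures at exactly this combination, so here the issue is one of stating the argument correctly rather than a missing idea.
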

\begin{subproof}
Consider the first term on the last line of \eqref{eq:BB_var2}. Write
\begin{align*}
	c^{\top}(B_{n,1}-\Sigma_n)c&=\frac{1}{n}\sum_{i\in N_n}(y_{n,i}'^{2}-\E[y_{n,i}'^{2}\mid \CS])+\frac{1}{n}\sum_{i\in N_n}(\omega_n(i)-1)y_{n,i}'^{2} \\
	&\quad +\frac{1}{n}\sum_{i\in N_n}\sum_{j\in N_n\setminus\{i\}}\omega_n(i,j)(y_{n,i}'y_{n,j}'-\E[y_{n,i}'y_{n,j}'\mid \CS]) \\
	&\quad +\frac{1}{n}\sum_{i\in N_n}\sum_{j\in N_n\setminus\{i\}}(\omega_n(i,j)-1)\E[y_{n,i}'y_{n,j}'\mid \CS] \\
	&\equiv R_{n,0}+R_{n,1}+R_{n,2}+R_{n,3}.
\end{align*}

Using the covariance inequalities established in \citealias{Kojevnikov/Marmer/Song:20},
\begin{equation}
\label{eq:BB_var3}
	\abs{R_{n,3}}\le C_3\sum_{s\ge 1}\gamma_{n,s}^{1-\frac{2}{r}} \times \frac{1}{n}\sum_{i\in N_n}\sum_{j\in N_n^{\partial}(i;s)}\abs{\omega_n(i,j)-1} \qtext{a.s.},
\end{equation}
where $C_3=C(\mu_{2r}^2\vee 1)$ for some constant $C\ge 1$. Since $\tilde{\omega}<\infty$ a.s., the RHS of \eqref{eq:BB_var3} is bounded by $C_3(\tilde{\omega}+1)\Xi_n<\infty$ a.s. Therefore, by the dominated convergence theorem $\abs{R_{n,3}}\to 0$ a.s. Also letting $w_{n,i,j}\eqdef y_{n,i}'y_{n,j}'-\E[y_{n,i}'y_{n,j}'\mid \CS]$, we find that
\begin{align*}
	\E[R_{n,2}^2\mid \CS]&\le \frac{\bar{\omega}^2}{n^2}\sum_{\substack{i,j\in N_n \\ 1\le d_n(i,j)< 2(s_n+1)}}\sum_{\substack{k,l\in N_n \\ 1\le d_n(k,l)< 2(s_n+1)}}\abs{\E[w_{n,i,j}w_{n,k,l}\mid \CS]} \\
	&\le \frac{C_2\bar{\omega}^2}{n^2}\sum_{s\ge 0}\abs{H_n(s,2s_n+1)}\gamma_{n,s}^{1-\frac{2}{r}}\to 0 \qtext{a.s.},
\end{align*}
where $C_2= C(\mu_{2r}^4\vee 1)$ for some constant $C\ge 1$. Finally,
\begin{align*}
	\E[\abs{R_{n,1}}\mid \CS]&\le \frac{\mu_{2r}^2}{n}\sum_{i\in N_n}\abs{\omega_n(i)-1}\to 0 \qtext{a.s.}
\shortintertext{and}
	\E[R_{n,0}^2\mid \CS]&\le \frac{1}{n^2}\sum_{i,j\in N_n}\abs{\Cov(y_{n,i}'^2,y_{n,j}'^2\mid \CS)} \\
	&\le \frac{C_0}{n}(1+\Xi_n)\to 0 \qtext{a.s.},
\end{align*}
where $C_0=C(\mu_{2r}^4\vee 1)$ for some constant $C\ge 1$.

As for the second term, note that $c^{\top}B_{n,2}c\ge 0$ a.s.\ and
\begin{align*}
	\E[c^{\top}B_{n,2}c\mid \CS]&\le \frac{(D_n(s_n))^2}{\delta_n(s_n)n^2}\sum_{i,j\in N_n}\abs{\E[y_{n,i}'y_{n,j}'\mid \CS]} \\
	&\le \frac{(D_n(s_n))^2C_3}{\delta_n(s_n)n}(1+\Xi_n)\to 0 \qtext{a.s.} \qedhere
\end{align*}
\end{subproof}

\medskip

Consider the last two terms on the last line of equation \eqref{eq:BB_var1}. Let
\[
	\alpha_{n,i}\eqdef \sum_{j\in B_{n,i}}(\omega_n(j)-1)\qtext{and}\quad \alpha_n\eqdef \max_{i\in N_n}\abs{\alpha_{n,i}}.
\]
Then, since
\[
	\sum_{i\in N_n}(\omega_n(i)-1)\zeta_{n,i}=\sum_{i\in N_n}\alpha_{n,i}Y_{n,i}',
\]
we have $c^{\top}A_{n,1}c=2n^{-1}\sum_{i\in N_n}\alpha_{n,i}y_{n,i}'\mu_n'$ and, therefore,
\[
	\E\left[(c^{\top}A_{n,1}c)^2\mid\CS\right]\le \frac{4C_2\alpha_n^2}{n}(1+\Xi_n)\to 0 \qtext{a.s.}
\]
Finally,
\[
	\norm{A_{n,2}}_{F}\le \mu_{2r}^2\times \frac{\Delta_n(s_n;2)}{\delta_n(s_n)}\to 0 \quad{a.s.} \qedhere
\]
\end{proof}

\medskip

\begin{proof}[\textbf{Proof of Proposition \ref{prop:DWB_var}}]
We use the notation from the proof of Proposition \ref{prop:local_lipshitz}. For a vector $c\in \R^{v}$ with $\norm{c}=1$ we have
\begin{align}
\label{eq:DWB_var1}
\begin{aligned}
c^{\top}(\Sigma_n^{*}-B_{n,1})c &=\bar{y}_n'^2\times\frac{\delta_n(s_n;2)}{\delta_n(s_n)}-\frac{2\bar{y}_n'}{n}\sum_{i\in N_n}y_{n,i}'\sum_{j\in N_n}\omega_n(i,j) \\
&\equiv Q_{n,1}+Q_{n,2},
\end{aligned}
\end{align}
where $\bar{y}_n\eqdef n^{-1}\sum_{i\in N_n}y_{n,i}'$. Consider the second term in the last line of \eqref{eq:DWB_var1}. Letting $\tau_{n,i}\eqdef \sum_{j\in N_n}\omega_n(i,j)$ and noticing that
\begin{align*}
	\max_{i\in N_n}\tau_{n,i}&\le\max_{i\in N_n}\left(\omega_n(i)+\tilde{\omega}\abs{N_n(i;s_n+1)}\right) \\
	&\le (\tilde{\omega}+1)D_n(s_n)\equiv \tau_n,
\end{align*}
it follows that
\[
	\abs{Q_{n,2}}\le 2\abs{\sqrt{\tau_n}\bar{y}_n'}\times \frac{1}{n\sqrt{\tau_n}}\abs{\sum_{i\in N_n}\tau_{n,i}y_{n,i}'}\equiv 2\abs{\sqrt{\tau_n}\bar{y}_n'}\times Q_{n,3}.
\]
Similarly to the proof of Proposition \ref{prop:BB_var},
\[
	\E[Q_{n,3}^2\mid\CS]\le \frac{C_3\tau_n}{n}(1+\Xi_n)\to 0 \qtext{a.s.}
\]
Finally, $\E[\tau_n\bar{y}_n'^2\mid \CS]$ is bounded the same quantity and since $\delta_n(s_n;2)/\delta_n(s_n)\le \tau_n$,
\[
	\E[Q_{n,1}\mid \CS]\le \E[\tau_n\bar{y}_n'^2\mid \CS]\to 0 \qtext{a.s.} \qedhere
\]
\end{proof}

\medskip

\begin{proof}[\textbf{Proof of Proposition \ref{prop:BB_consistency}}]
Consider $T_{1,n}^{*}$ first. Let $\tilde{Y}_{n,i}\eqdef Y_{n,i}-\mu_n$, $\tilde{Z}_{n,i}\eqdef \sum_{k\in B_{n,i}}\tilde{Y}_{n,k}$ and let $\tilde{Z}_{n,i}^{*}$ be the bootstrap version of the letter. Also define $W_{n,i}^{*}\eqdef \tilde{Z}_{n,i}^{*}-\E[\tilde{Z}_{n,i}^{*}\mid \G_n]$. Conditionally on $\G_n$, $\arr{\xi_{n,i}^{*}}$ are row-wise i.i.d.\ random vectors with $\E[W_{n,1}^{*}\mid \G_n]=0$ and $\Var(W_{n,1}^{*}\mid \G_n)=\delta_n(s_n)\Sigma_n^{*}$ a.s. Write
\[
	\sqrt{n}(\tilde{Y}_n^{*}-\mu_n^{*})=\frac{1}{\sqrt{K_n}}\sum_{k=1}^{K_n}(\delta_n(s_n))^{-1/2}W_{n,k}^{*}.
\]
Then, letting $\underline{\lambda}(A)$ denote the minimal eigenvalue of a square matrix $A$, by Corollary \ref{corr:aux_CLT1},
\[
	\DK{T_{1,n}^{*},S_{1,n}^{*}\mid \G_n}\le \frac{C_v}{(\underline{\lambda}(\Sigma)/2)^{3/8}}\left(\frac{\E[\normin{W_{n,1}^{*}}^3\mid \G_n]}{\sqrt{n}\delta_n(s_n)}\right)^{1/4}
\]
a.s.\ on $\{\underline{\lambda}(\Sigma_n^{*})\ge \underline{\lambda}(\Sigma)/2\}$, where $S_{1,n}^{*}=\norm{Q_n}$ and $Q_n$ is conditionally normal given $\G_n$ with zero mean and variance $\Sigma_n^{*}$.

\begin{claim}
$\E[\normin{W_{n,i}^{*}}^3\mid \CS]/(\sqrt{n}\delta_n(s_n))\to 0$ a.s.
\end{claim}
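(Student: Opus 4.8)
The plan is to make the resampling randomness explicit and reduce the claim to a third-moment bound for the block sums of the original process. Conditionally on $\G_n$ the resampled blocks are i.i.d.\ uniform draws from $\{\tilde Z_{n,1},\dots,\tilde Z_{n,n}\}$, so writing $\bar Z_n\eqdef\E[\tilde Z_{n,1}^{*}\mid\G_n]=n^{-1}\sum_{l\in N_n}\tilde Z_{n,l}$ we have the exact identity
\[
	\E[\normin{W_{n,i}^{*}}^3\mid\CS]=\frac1n\sum_{l\in N_n}\E[\normin{\tilde Z_{n,l}-\bar Z_n}^3\mid\CS] \qtext{a.s.}
\]
Since $\normin{a-b}^3\le 4(\normin{a}^3+\normin{b}^3)$, it suffices to control $n^{-1}\sum_{l}\E[\normin{\tilde Z_{n,l}}^3\mid\CS]$: the remaining contribution $\E[\normin{\bar Z_n}^3\mid\CS]$ is of smaller order because $\bar Z_n=n^{-1}\sum_{k\in N_n}\absin{N_n(k;s_n+1)}\tilde Y_{n,k}$ is a weighted sample mean whose third conditional moment, by the moment inequalities for weakly dependent sums, is of order $\delta_n^3(s_n)n^{-3/2}$, so that after division by $\sqrt n\,\delta_n(s_n)$ it vanishes a.s.\ (note $\delta_n(s_n)=o(n)$ by Assumption \ref{assu:BB1}(a)).

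For each block, $\tilde Z_{n,l}=\sum_{k\in B_{n,l}}\tilde Y_{n,k}$ is a sum of conditionally centered, $(\L_v,\psi,\CS)$-weakly dependent terms with $\tilde\mu_{2p}<\infty$ a.s.\ ($p>2$, Assumption \ref{assu:BB4}). I would bound its third conditional moment by Cauchy--Schwarz,
\[
	\E[\normin{\tilde Z_{n,l}}^3\mid\CS]\le\big(\E[\normin{\tilde Z_{n,l}}^2\mid\CS]\big)^{1/2}\big(\E[\normin{\tilde Z_{n,l}}^4\mid\CS]\big)^{1/2},
\]
and then apply the second- and fourth-order conditional moment inequalities for weakly dependent network sums from \citealias{Kojevnikov/Marmer/Song:20}, converting the conditional covariances of the unbounded $\tilde Y_{n,k}$ into $\gamma_{n,s}^{1-2/p}$ through the accompanying moment-truncation bound. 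Restricting every covariance and cumulant configuration to $B_{n,l}=N_n(l;s_n+1)$ and invoking the definitions of the local denseness measures, this yields, with $\mathcal{S}_{1,n}\eqdef\sum_{s\ge0}\delta_{loc,n}^{\partial}(s,s_n)\gamma_{n,s}^{1-2/p}$ and $\mathcal{S}_{2,n}\eqdef\sum_{s\ge0}h_{loc,n}(s,s_n)\gamma_{n,s}^{1-2/p}$,
\[
	\E[\normin{\tilde Z_{n,l}}^2\mid\CS]\le C\absin{B_{n,l}}\mathcal{S}_{1,n}, \qquad \E[\normin{\tilde Z_{n,l}}^4\mid\CS]\le C\big(\absin{B_{n,l}}^2\mathcal{S}_{1,n}^2+\absin{B_{n,l}}^3\mathcal{S}_{2,n}\big),
\]
so that $\E[\normin{\tilde Z_{n,l}}^3\mid\CS]\le C(\absin{B_{n,l}}^{3/2}\mathcal{S}_{1,n}^{3/2}+\absin{B_{n,l}}^2(\mathcal{S}_{1,n}\mathcal{S}_{2,n})^{1/2})$ a.s.

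It remains to average over $l$ and divide by $\sqrt n\,\delta_n(s_n)$, for which the per-block factors must be replaced by the average block size. By Cauchy--Schwarz $n^{-1}\sum_l\absin{B_{n,l}}^{3/2}\le(\delta_n(s_n)\,\delta_n(s_n;2))^{1/2}$ and $n^{-1}\sum_l\absin{B_{n,l}}^2=\delta_n(s_n;2)$, while Assumption \ref{assu:BB1}(a), which forces $\Delta_n(s_n;2)/\delta_n(s_n)\to0$ and hence $\delta_n(s_n;2)=\delta_n^2(s_n)+o(\delta_n(s_n))$, lets me substitute $\delta_n^2(s_n)$ for $\delta_n(s_n;2)$ up to negligible terms. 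The resulting bound is
\[
	\frac{\E[\normin{W_{n,i}^{*}}^3\mid\CS]}{\sqrt n\,\delta_n(s_n)}\le C\Big(\big(\tfrac{\delta_n(s_n)}{n}\big)^{1/2}\mathcal{S}_{1,n}^{3/2}+\tfrac{\delta_n(s_n)}{\sqrt n}(\mathcal{S}_{1,n}\mathcal{S}_{2,n})^{1/2}\Big)+o(1) \qtext{a.s.}
\]
The first summand is the $3/2$-power of the first term in Assumption \ref{assu:BB4}, and the square of the second summand equals the product of the two terms there, since $\delta_n^2(s_n)n^{-1}\mathcal{S}_{1,n}\mathcal{S}_{2,n}=(\delta_n(s_n)/n)^{1/3}\mathcal{S}_{1,n}\cdot(\delta_n^{5/2}(s_n)/n)^{2/3}\mathcal{S}_{2,n}$; as that sum converges to $0$ a.s.\ by Assumption \ref{assu:BB4}, both summands vanish and the claim follows.

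The step I expect to be the main obstacle is the fourth-moment inequality together with the bookkeeping that identifies the network-denseness quantities it generates with the two terms of Assumption \ref{assu:BB4}. One has to restrict the covariance and cumulant configurations to a single block, recognise them as $\delta_{loc,n}^{\partial}$ and $h_{loc,n}$ carrying the correct powers of $\absin{B_{n,l}}$, and then combine the block-size concentration of Assumption \ref{assu:BB1}(a) with Cauchy--Schwarz to turn the per-block factors $\absin{B_{n,l}}^{3/2}$ and $\absin{B_{n,l}}^2$ into the clean powers $\delta_n^{3/2}(s_n)$ and $\delta_n^2(s_n)$ for which the exponents in Assumption \ref{assu:BB4} line up exactly.
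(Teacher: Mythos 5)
Your proof is correct and, in its essentials, follows the same route as the paper: reduce the claim to third conditional moments of the block sums $\tilde Z_{n,l}$, bound these through the fourth-moment decomposition into a cumulant-type term of order $\absin{B_{n,l}}^3\mathcal{S}_{2,n}$ (via $h_{loc,n}$) and a squared-covariance term of order $\absin{B_{n,l}}^2\mathcal{S}_{1,n}^2$ (via $\delta_{loc,n}^{\partial}$), and then match the powers of $\delta_n(s_n)$ and $n$ with Assumption \ref{assu:BB4}. Your use of Cauchy--Schwarz where the paper uses Lyapunov's inequality $\E[\absin{c^{\top}\tilde Z_{n,l}}^3\mid\CS]\le(\E[\absin{c^{\top}\tilde Z_{n,l}}^4\mid\CS])^{3/4}$ together with $(A+C^2)^{3/4}\le A^{3/4}+C^{3/2}$ is immaterial: by Young's inequality, $\absin{B_{n,l}}^2(\mathcal{S}_{1,n}\mathcal{S}_{2,n})^{1/2}\le\tfrac13\absin{B_{n,l}}^{3/2}\mathcal{S}_{1,n}^{3/2}+\tfrac23\absin{B_{n,l}}^{9/4}\mathcal{S}_{2,n}^{3/4}$, so the two bounds agree up to constants, and your observation that the square of the cross term equals the product of the two terms of Assumption \ref{assu:BB4} closes the argument just as well; your explicit averaging over blocks via $n^{-1}\sum_l\absin{B_{n,l}}^{3/2}\le(\delta_n(s_n)\delta_n(s_n;2))^{1/2}$ and Assumption \ref{assu:BB1}(a) is the bookkeeping the paper leaves implicit.

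The one step that does not hold as written is your treatment of the centering term $\bar Z_n=n^{-1}\sum_{k}\absin{N_n(k;s_n+1)}\tilde Y_{n,k}$. The claimed order $\delta_n^3(s_n)n^{-3/2}$ for $\E[\normin{\bar Z_n}^3\mid\CS]$ does not follow from the moment inequalities you invoke: they would have to be applied to a \emph{weighted} sum whose weights are bounded only by $D_n(s_n)$, not by $\delta_n(s_n)$, and the required global fourth-moment computation (with all four indices ranging over $N_n$) is not covered by Assumptions \ref{assu:BB2} and \ref{assu:BB4}, which control configurations restricted to pairs within distance $2(s_n+1)$ or to single blocks. Fortunately the detour is unnecessary: since $\bar Z_n=\E[\tilde Z_{n,1}^{*}\mid\G_n]$, conditional Jensen gives $\normin{\bar Z_n}^3\le n^{-1}\sum_{l\in N_n}\normin{\tilde Z_{n,l}}^3$ a.s., so the centering contributes at most a constant multiple of the average you already control---this is precisely the role of the $c_r$-inequality step $\E[\absin{c^{\top}W_{n,1}^{*}}^3\mid\G_n]\le 8\,\E[\absin{c^{\top}\tilde Z_{n,1}^{*}}^3\mid\G_n]$ in the paper's proof. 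With that one-line replacement your argument is complete.
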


\begin{subproof}
It suffices to show that $\E[\absin{c^{\top}W_{n,i}}^3\mid \CS]/(\sqrt{n}\delta_n(s_n))\to 0$ a.s.\ for any $c\in \R^v$ such that $\norm{c}=1$. By the $c_r$-inequality,\footnote{
	We consider the conditional versions of all inequalities used in this proof.
}
\[
	\E[\absin{c^{\top}W_{n,1}^{*}}^3\mid \G_n]\le 8\E[\absin{c^{\top}\tilde{Z}_{n,1}^{*}}^3\mid \G_n]=\frac{8}{n}\sum_{i\in N_n}\absin{c^{\top}\tilde{Z}_{n,i}}^3 \qtext{a.s.}
\]
Let $\tilde{y}_{n,i}\eqdef c^{\top}\tilde{Y}_{n,i}$. Then
\begin{align*}
	\E[\absin{c^{\top}\tilde{Z}_{n,i}}^4\mid \CS]&\le \sum_{j_1,j_2,j_3,j_4\in B_{n,i}}\abs{\Cov(\tilde{y}_{n,j_1}\tilde{y}_{n,j_2},\tilde{y}_{n,j_3}\tilde{y}_{n,j_4}\mid \CS)} \\
	&+\left(\sum_{j_1,j_2\in B_{n,i}}\abs{\Cov(\tilde{y}_{n,j_1},\tilde{y}_{n,j_2}\mid \CS)}\right)^2\equiv A_{n,i}+C_{n,i}^2 \qtext{a.s.}
\end{align*}
Similarly to the proof of Proposition \ref{prop:BB_var}, we find that w.p.1,
\begin{align*}
	A_{n,i}&\le C_1(\tilde{\mu}_{2p}^4\vee 1)\abs{B_{n,i}}^3\sum_{s\ge 0}h_{loc,n}(s,s_n)\gamma_{n,s}^{1-\frac{2}{p}} \qtext{and} \\
	C_{n,i}&\le C_2(\tilde{\mu}_{2p}^2\vee 1)\abs{B_{n,i}}\sum_{s\ge 0}\delta_{loc,n}^{\partial}(s,s_n)\gamma_{n,s}^{1-\frac{2}{p}},
\end{align*}
where $C_1$ and $C_2$ are some positive constants. The result then follows by noticing that $\E[\absin{c^{\top}\tilde{Z}_{n,i}}^3\mid \CS]\le (\E[\absin{c^{\top}\tilde{Z}_{n,i}}^4\mid \CS])^{3/4}$ a.s.\ and the fact that $(A_{n,i}+C_{n,i}^2)^{3/4}\le A_{n,i}^{3/4}+C_{n,i}^{3/2}$.
\end{subproof}

\medskip

Using Jensen's inequality, we find that for any $\epsilon>0$,
\begin{align*}
	&\PR{\DK{T_{1,n}^{*},S_{1,n}^{*}\mid \G_n}>\epsilon\mid \CS} \\
	&\qquad\le \frac{C_v}{(\underline{\lambda}(\Sigma)/2)^{3/8}}\left(\frac{\E[\normin{W_{n,1}^{*}}^3\mid \CS]}{\sqrt{n}\delta_n(s_n)}\right)^{1/4}+\PR{\underline{\lambda}(\Sigma_n^{*})< \underline{\lambda}(\Sigma)/2\mid \CS} \\
	&\qquad \to 0 \qtext{a.s.},
\end{align*}
where the convergence of $\PR{\underline{\lambda}(\Sigma_n^{*})< \underline{\lambda}(\Sigma)/2\mid \CS}$ follows from the fact that the eigenvalues of a matrix depend continuously on the entries of the matrix \citep[see, e.g.,][Theorem~2.11]{Zhang:11:MatrixTheory} so that $\lambda_j(\Sigma_n^{*})\PRC{\CS}\lambda_j(\Sigma)$ a.s.\ for all $1\le j\le d$.

Since the eigenvalues of $\Sigma_n^{*}$ converge to the eigenvalues of $\Sigma$ and the latter are a.s.\ positive, it follows from Lemma \ref{lemma:aux_normal_approx1} that
\[
	\DK{S_{1,n}^{*},\normin{\Sigma^{1/2}\eta}\mid \G_n,\CS}\PRC{\CS}0 \qtext{a.s.}
\]
Finally, by Lemma \ref{lemma:aux_weak_conv}, $T_{1, n}\to \normin{\Sigma^{1/2}\eta}$ \, $\CS$-weakly and, hence, the result follows from Corollary \ref{corr:bootstrap_consistency1}.

\medskip

Consider the second assertion. First, for any $c\in \R^v$ such that $\norm{c}=1$ and $\epsilon>0$, we get
\begin{align*}
	\PR{\absin{c^{\top}(\tilde{Y}_n-\mu_n^{*})}>\epsilon\mid \CS}&\le \frac{1}{(K_n\epsilon)^2}\E\left[\left(\sum_{k=1}^{K_n} c^{\top}\tilde{Z}_{n,k}^{*}\right)^2\mid \CS\right] \\
	&=\frac{1}{K_n\epsilon^2}\E[c^{\top}\Sigma_n^{*}c\mid \CS]\to 0 \qtext{a.s.},
\end{align*}
where the convergence follows from the consistency of $\Sigma_n^{*}$. Therefore, $\tilde{Y}_n^{*}-\mu_n^{*}\PRC{\CS}0$ a.s. Also since the $\CS$-asymptotic tightness of a vector follows from that of its elements, $\sqrt{n}(\tilde{Y}_n^{*}-\mu_n^{*})$ is $\CS$-asymptotically tight due to the same reason (i.e., the convergence of $\Sigma_n^{*}$). Write
\[
	\nabla\phi(\mu_n^{*})^{\top}\sqrt{n}(\tilde{Y}_n^*-\mu_n^{*})=\frac{1}{\sqrt{K_n}}\sum_{k=1}^{K_n}(\delta_n(s_n))^{-1/2}\nabla\phi(\mu_n^{*})^{\top}W_{n,k}^{*}.
\]
Then by Lemma \ref{lemma:aux_CLT2}, letting $T_{2,n}'^{*}=\nabla\phi(\mu_n^{*})^{\top}\sqrt{n}(\tilde{Y}_n^{*}-\mu_n^{*})$ and $S_{2,n}'^{*}=\nabla\phi(\mu_n^{*})^{\top}Q_n$, we have
\[
	\DK{T_{2,n}'^{*},S_{2,n}'^{*}\mid \G_n}\le \frac{C}{(\sigma/2)^3}\times\frac{\norm{\nabla\phi(\mu_n^{*})}^3\E[\normin{W_{n,1}^{*}}^3\mid \G_n]}{\sqrt{n}\delta_n(s_n)}
\]
a.s.\ on $\{\sigma_n^{*}\ge \sigma/2\}$, where $\sigma_n^{*2}=\nabla\phi(\mu_n^{*})^{\top}\Sigma_n^{*}\nabla\phi(\mu_n^{*})$ and $\sigma^{2}=\nabla\phi(\mu)^{\top}\Sigma\nabla\phi(\mu)$. Since $x\mapsto\nabla\phi(x)$ is continuous and $\mu_n^{*}$ is a consistent estimator of $\mu$, $\nabla\phi(\mu_n^{*})\PRC{\CS}\nabla\phi(\mu)$ and $\sigma_n^{*}\PRC{\CS}\sigma$ a.s. Consequently, as in the previous case,
\[
	\DK{T_{2,n}'^{*},S_{2,n}'^{*}\mid \G_n}\PRC{\CS}0 \qtext{a.s.}
\]
and by Lemma \ref{lemma:aux_normal_approx2},
\[
	\DK{S_{2,n}'^{*},\nabla\phi(\mu)^{\top}\Sigma^{1/2}\eta\mid \G_n,\CS}\PRC{\CS}0 \qtext{a.s.}
\]
The result then follows from Corollary \ref{corr:bootstrap_consistency2}.
\end{proof}

\medskip

\begin{proof}[\textbf{Proof of Proposition \ref{prop:DWB_consistency1}}]
The proof is similar to one for Proposition \ref{prop:BB_consistency}, and so is omitted.
\end{proof}

\medskip

\begin{proof}[\textbf{Proof of Proposition \ref{prop:DWB_consistency2}}]
By Lemma \ref{lemma:aux_normal_approx3},
\begin{align*}
	&\DK{\norm{T_{1,n}^{*}},\norm{S_{1,n}^{*}}\mid \G_n}\PRC{\CS}0 \qtext{a.s.\ and} \\
	&\DK{T_{2,n}'^{*},S_{2,n}'^{*}\mid \G_n}\PRC{\CS}0 \qtext{a.s.},
\end{align*}
where $S_{1,n}^{*}=\norm{Q_n}$, $S_{2,n}'^{*}=\nabla\phi(\bar{Y}_n)^{\top}Q_n$ and $Q_n$ is conditionally normal given $\G_n$ with zero mean and variance $\Sigma_n^{*}$. The rest is similar to the proof of Proposition \ref{prop:BB_consistency}.
\end{proof}

\section{Network HAC Estimator}
\label{section:app_HAC}

Although the HAC estimator \eqref{eq:HAC} is consistent in the sense that $\hat{\Sigma}_n-\Sigma_n\PRC{\CS}0$ a.s., it does not necessarily yields positive semi-definite covariance matrix. There exist a number methods of approximating a symmetric matrix by a positive definite matrix \citep[see, e.g.,][]{Higham:88,Higham:02}. Borrowing some ideas from that literature we suggest a simple way of obtaining a positive definite estimate.

Let $Q_n\Lambda_nQ_n^{\top}$ be the eigendecomposition of $\hat{\Sigma}_n$ (since $\hat{\Sigma}_n$ is symmetric all its eigenvalues are real). Also let $\underline{\lambda}(A)$ denote the smallest eigenvalue of $A$, e.g., $\underline{\lambda}(\hat{\Sigma}_n)=\min_{1\le k\le v}\Lambda_n$. Consider a sequence of small positive real numbers $c_n\searrow 0$. We approximate $\hat{\Sigma}_n$ by
\[
	\hat{V}_n^{+}\eqdef Q_n\left(\Lambda_n \vee c_nI_v\right)Q_n^{\top},
\]
where the maximum is taken element-wise. By construction, the matrix $\hat{\Sigma}_n^{+}$ is positive definite. Moreover, in the case when the smallest eigenvalue of $\Sigma_n$ is bounded from below by some positive constant, it is also a consistent estimator of the true variance as follows from the next result.

\begin{prop}
\label{prop:aux_hac1}
Suppose that $\hat{\Sigma}_n-\Sigma_n\PRC{\CS}0$ a.s.\ and there exists a constant $c>0$ such that $\PR{\underline{\lambda}(\Sigma_n)\ge c \text{ ev.}}=1$. Then
\[
	\hat{\Sigma}_n^{+}-\Sigma_n\PRC{\CS}0\qtext{a.s.}
\]
\end{prop}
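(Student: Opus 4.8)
The plan is to show that the eigenvalue flooring that distinguishes $\hat{\Sigma}_n^{+}$ from $\hat{\Sigma}_n$ is active only on an event whose conditional probability vanishes a.s., so that $\hat{\Sigma}_n^{+}=\hat{\Sigma}_n$ with conditional probability tending to $1$, and then to transfer the assumed consistency of $\hat{\Sigma}_n$ to $\hat{\Sigma}_n^{+}$. First I would record the deterministic identity
\[
	\{\hat{\Sigma}_n^{+}\ne \hat{\Sigma}_n\}=\{\underline{\lambda}(\hat{\Sigma}_n)<c_n\}.
\]
Indeed, since $\hat{\Sigma}_n^{+}=Q_n(\Lambda_n\vee c_nI_v)Q_n^{\top}$ and $Q_n$ is orthogonal, no flooring occurs precisely when every eigenvalue of $\hat{\Sigma}_n$ is at least $c_n$. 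Next I would invoke Weyl's perturbation inequality together with the bound of the spectral norm by the Frobenius norm to get
\[
	\underline{\lambda}(\hat{\Sigma}_n)\ge \underline{\lambda}(\Sigma_n)-\norm{\hat{\Sigma}_n-\Sigma_n}_F \qtext{a.s.}
\]

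By hypothesis there is an a.s.\ finite index beyond which $\underline{\lambda}(\Sigma_n)\ge c$, and since $c_n\searrow 0$ there is also an a.s.\ finite index beyond which $c_n<c/2$; on the intersection of these two eventual events one has $\{\underline{\lambda}(\hat{\Sigma}_n)<c_n\}\subseteq\{\norm{\hat{\Sigma}_n-\Sigma_n}_F>c/2\}$. Consequently, for all large $n$,
\[
	\PR{\hat{\Sigma}_n^{+}\ne \hat{\Sigma}_n\mid \CS}\le \PR{\norm{\hat{\Sigma}_n-\Sigma_n}_F>c/2\mid \CS}\to 0 \qtext{a.s.},
\]
where the convergence is exactly the hypothesis $\hat{\Sigma}_n-\Sigma_n\PRC{\CS}0$ a.s. Finally, for an arbitrary $\epsilon>0$ I would split on the flooring event,
\[
	\PR{\norm{\hat{\Sigma}_n^{+}-\Sigma_n}_F>\epsilon\mid \CS}\le \PR{\norm{\hat{\Sigma}_n-\Sigma_n}_F>\epsilon\mid \CS}+\PR{\hat{\Sigma}_n^{+}\ne \hat{\Sigma}_n\mid \CS},
\]
and both terms tend to $0$ a.s., which yields $\hat{\Sigma}_n^{+}-\Sigma_n\PRC{\CS}0$ a.s.

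The only delicate point is that $\hat{\Sigma}_n$ need not be positive semi-definite, so its negative eigenvalues can make the flooring error $\norm{\hat{\Sigma}_n^{+}-\hat{\Sigma}_n}_F$ arbitrarily large; this rules out a purely deterministic bound on that error. The argument sidesteps the issue by controlling the flooring \emph{event} probabilistically rather than bounding its \emph{magnitude}: the uniform lower bound on $\underline{\lambda}(\Sigma_n)$ forces $\underline{\lambda}(\hat{\Sigma}_n)$ to exceed $c_n$ with conditional probability approaching one, so the potentially large flooring correction simply does not occur on an event of conditional probability tending to $1$. The remaining bookkeeping, namely realizing the two ``eventually'' statements on a common full-probability set, is routine since each holds a.s.
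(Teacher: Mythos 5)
Your proof is correct and follows essentially the same route as the paper's: both split the conditional probability $\PM\bigl(\|\hat{\Sigma}_n^{+}-\Sigma_n\|>\epsilon\mid \CS\bigr)$ on the flooring event $\{\underline{\lambda}(\hat{\Sigma}_n)<c_n\}$ and then kill that event using a Weyl/Rayleigh-quotient lower bound $\underline{\lambda}(\hat{\Sigma}_n)\ge \underline{\lambda}(\Sigma_n)-\|\hat{\Sigma}_n-\Sigma_n\|$ together with the hypotheses $\underline{\lambda}(\Sigma_n)\ge c$ eventually a.s.\ and $c_n\searrow 0$. The only cosmetic difference is that the paper absorbs the ``eventually'' condition into an indicator term $\ind\{\underline{\lambda}(\Sigma_n)<c\}$ inside the bound, whereas you realize it on a common full-probability set; these are interchangeable.
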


\begin{proof}
Fix $\epsilon>0$. Then
\begin{align}
\label{eq:approx_pr_bound}
	\begin{aligned}
		\PR{\normin{\hat{\Sigma}_n^{+}-\Sigma_n}>\epsilon\mid \CS}&\le \PR{\normin{\hat{\Sigma}_n-\Sigma_n}>\epsilon\mid \CS} \\
		&\qquad +\PR{\underline{\lambda}(\hat{\Sigma}_n)<c_n\mid \CS} \qtext{a.s.}
	\end{aligned}
\end{align}
The first term on the RHS of \eqref{eq:approx_pr_bound} trivially converges to $0$ a.s. As for the second term, using the properties of the Rayleigh quotient,
\[
	\underline{\lambda}(\hat{\Sigma}_n)=\min_{x:\norm{x}=1}x^{\top}\hat{\Sigma}_n x\ge \underline{\lambda}(\Sigma_n)+\underline{\lambda}(\hat{\Sigma}_n-\Sigma_n).
\]
Therefore, noticing that $\abs{\underline{\lambda}(A)}\le \normin{A}$,
\begin{align*}
	\PR{\underline{\lambda}(\hat{\Sigma}_n)<c_n\mid \CS}&\le \PR{\underline{\lambda}(\hat{\Sigma}_n-\Sigma_n)<c_n-c\mid \CS} \\
	&\qquad +\ind\{\underline{\lambda}(\Sigma_n)< c\}\to 0\qtext{a.s.} \qedhere
\end{align*}

\end{proof}

\medskip

If $\Sigma_n$ converges a.s.\ to a positive definite matrix $\Sigma$, then we may relax the assumptions of the preceding result.

\begin{prop}
\label{prop:aux_hac2}
Suppose that $\hat{\Sigma}_n-\Sigma\PRC{\CS}0$ a.s., where $\Sigma$ is positive definite. Then
\[
	\hat{\Sigma}_n^{+}-\Sigma\PRC{\CS}0\qtext{a.s.}
\]
\end{prop}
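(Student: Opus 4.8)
The plan is to reproduce the two-term decomposition used in the proof of Proposition \ref{prop:aux_hac1}, but with the deterministic eigenvalue bound $c$ replaced by the a.s.\ positive, $\CS$-measurable quantity $\underline{\lambda}(\Sigma)$. First I would note that the censoring in the definition of $\hat{\Sigma}_n^{+}$ is inactive whenever $\underline{\lambda}(\hat{\Sigma}_n)\ge c_n$: on this event $\Lambda_n\vee c_nI_v=\Lambda_n$, so $\hat{\Sigma}_n^{+}=\hat{\Sigma}_n$ and hence $\normin{\hat{\Sigma}_n^{+}-\Sigma}=\normin{\hat{\Sigma}_n-\Sigma}$. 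Consequently, for any $\epsilon>0$,
\begin{align*}
	\PR{\normin{\hat{\Sigma}_n^{+}-\Sigma}>\epsilon\mid \CS}&\le \PR{\normin{\hat{\Sigma}_n-\Sigma}>\epsilon\mid \CS} \\
	&\quad +\PR{\underline{\lambda}(\hat{\Sigma}_n)<c_n\mid \CS}\qtext{a.s.},
\end{align*}
and the first term converges to $0$ a.s.\ directly by the hypothesis $\hat{\Sigma}_n-\Sigma\PRC{\CS}0$ a.s.

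For the second term I would bound the smallest eigenvalue of $\hat{\Sigma}_n$ from below through the Rayleigh quotient, exactly as in Proposition \ref{prop:aux_hac1}: $\underline{\lambda}(\hat{\Sigma}_n)\ge \underline{\lambda}(\Sigma)+\underline{\lambda}(\hat{\Sigma}_n-\Sigma)\ge \underline{\lambda}(\Sigma)-\normin{\hat{\Sigma}_n-\Sigma}$. Since the convergence $\hat{\Sigma}_n-\Sigma\PRC{\CS}0$ a.s.\ requires $\Sigma$ to be $\CS$-measurable (Definition \ref{def:CCP}) and $\Sigma$ is positive definite, $\underline{\lambda}(\Sigma)$ is an a.s.\ positive $\CS$-measurable random variable; because $c_n\searrow 0$ we have $\underline{\lambda}(\Sigma)-c_n>\underline{\lambda}(\Sigma)/2$ for all $n$ large enough, a.s. Hence, eventually,
\[
	\{\underline{\lambda}(\hat{\Sigma}_n)<c_n\}\subseteq \big\{\normin{\hat{\Sigma}_n-\Sigma}>\underline{\lambda}(\Sigma)/2\big\}\qtext{a.s.}
\]

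The decisive step — and the only place this proof departs from Proposition \ref{prop:aux_hac1} — is that the threshold $\underline{\lambda}(\Sigma)/2$ is now a genuinely random ($\CS$-measurable) level rather than a fixed constant, so the elementary bound $\PR{\cdot\mid\CS}\to 0$ for a constant level no longer applies verbatim. To close the argument I would invoke the strengthening of conditional convergence in probability recorded in the footnote to Definition \ref{def:CCP}: if $Z_n\PRC{\CS}Z$ a.s., then $\PR{\normin{Z_n-Z}>\nu\mid \CS}\to 0$ a.s.\ for every a.s.\ positive $\CS$-measurable $\nu$. Applying this with $Z_n=\hat{\Sigma}_n$, $Z=\Sigma$ and $\nu=\underline{\lambda}(\Sigma)/2$ gives $\PR{\normin{\hat{\Sigma}_n-\Sigma}>\underline{\lambda}(\Sigma)/2\mid \CS}\to 0$ a.s., so the second term vanishes a.s.\ as well, and combining the two terms yields $\PR{\normin{\hat{\Sigma}_n^{+}-\Sigma}>\epsilon\mid \CS}\to 0$ a.s., which is the claim. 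I do not expect any real obstacle beyond correctly handling this random threshold; the eigenvalue perturbation inequality and the censoring identity are routine.
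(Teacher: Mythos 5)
Your proof is correct and follows essentially the same route as the paper: the two-term decomposition and the Rayleigh-quotient eigenvalue bound are lifted directly from Proposition \ref{prop:aux_hac1}, and the positive definiteness of $\Sigma$ is what kills the eigenvalue term. The only (cosmetic) difference is the closing step: you invoke the random-threshold version of conditional convergence in probability from the footnote to Definition \ref{def:CCP} with $\nu=\underline{\lambda}(\Sigma)/2$, whereas the paper keeps a constant threshold $c>0$, obtains $\limsup_{n\to\infty}\PR{\normin{\hat{\Sigma}_n^{+}-\Sigma}>\epsilon\mid \CS}\le \ind\{\underline{\lambda}(\Sigma)<c\}$ a.s.\ for each such $c$, and finishes with $\essinf_{c>0}\ind\{\underline{\lambda}(\Sigma)<c\}=0$ a.s.\ --- which is precisely the argument that proves the footnote fact you cite.
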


\begin{proof}
As in the proof of Proposition \ref{prop:aux_hac1} for any $\epsilon>0$,
\[
	\limsup_{n\to\infty}\PR{\normin{\hat{\Sigma}_n^{+}-\Sigma}>\epsilon\mid \CS}\le \essinf_{c>0}\ind\{\underline{\lambda}(\Sigma)< c\}= 0\qtext{a.s.} \qedhere
\]

\end{proof}

\section{Auxiliary Results}
\label{sec:app_aux}

In the following we assume that all random elements are defined on a common probability space $(\Omega,\PM,\H)$. Also for a vector $x\in \R^v$ let $\norm{x}$ denote the Euclidean norm of $x$ and let $\norm{\csdot}_{e,p}$ be the element-wise $p$-norm in $\R^{a\times b}$, i.e., $\norm{A}_{e,p}\eqdef \norm{\vecm(A)}_p$.

\medskip

\begin{lemma}
\label{lemma:aux_matrix_conv}
Let $A_n$ be a sequence of symmetric matrices in $\R^{v\times v}$ and $\F\subset \H$. Then the following are equivalent:
\begin{itemize}[leftmargin=1.65em]
	\item[\tn{(a)}] $\E[\norm{A_n}_{e,1}\mid \CS]\to 0$ a.s.
	\item[\tn{(b)}] $\E[\norm{A_n}_F\mid \CS]\to 0$ a.s.
	\item[\tn{(c)}] $\E[\absin{c^{\top}A_n c}\mid \CS]\to 0$ a.s.\ for any $c\in \R^v$ such that $\norm{c}=1$.
\end{itemize}
\end{lemma}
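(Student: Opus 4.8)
The plan is to exploit that, for each fixed matrix argument, the three quantities are comparable (semi)norms on the finite-dimensional space of symmetric $v\times v$ matrices, and that conditional expectation preserves the relevant inequalities almost surely. I would organize the argument as \tn{(a)}$\Leftrightarrow$\tn{(b)}, then \tn{(a)}$\Rightarrow$\tn{(c)}, and finally the only substantive implication \tn{(c)}$\Rightarrow$\tn{(a)}. For \tn{(a)}$\Leftrightarrow$\tn{(b)} I would note that $\norm{A}_{e,1}=\norm{\vecm(A)}_1$ and $\norm{A}_F=\norm{\vecm(A)}_2$ are the $\ell_1$ and $\ell_2$ norms of the same vector in $\R^{v^2}$, so the standard equivalence $\norm{x}_2\le\norm{x}_1\le v\norm{x}_2$ gives $\norm{A}_F\le\norm{A}_{e,1}\le v\norm{A}_F$ for every symmetric $A$. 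Taking conditional expectations preserves these bounds, whence
\[
	\E[\norm{A_n}_F\mid\CS]\le\E[\norm{A_n}_{e,1}\mid\CS]\le v\,\E[\norm{A_n}_F\mid\CS]\qtext{a.s.},
\]
from which \tn{(a)} and \tn{(b)} are immediately equivalent.

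For \tn{(a)}$\Rightarrow$\tn{(c)} I would use that for any unit vector $c$ the quadratic form satisfies $\absin{c^\top A c}\le\sum_{i,j}\abs{c_i}\abs{c_j}\abs{A_{ij}}\le\norm{A}_{e,1}$, since $\abs{c_i}\le 1$ when $\norm{c}=1$. Monotonicity of conditional expectation then yields $\E[\absin{c^\top A_n c}\mid\CS]\le\E[\norm{A_n}_{e,1}\mid\CS]\to 0$ a.s.

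The substantive step is \tn{(c)}$\Rightarrow$\tn{(a)}, and here the one real difficulty is the almost-sure qualifier: condition \tn{(c)} delivers convergence for each fixed $c$ only off a $c$-dependent null set, and there are uncountably many unit vectors, so one cannot naively combine them. The resolution I would use is that, because $A_n$ is symmetric and $v$ is finite, the entries are recovered from finitely many quadratic forms by polarization. Writing $e_i$ for the standard basis and $\hat u_{ij}\eqdef(e_i+e_j)/\sqrt2$, symmetry gives $A_{ii}=e_i^\top A e_i$ and $A_{ij}=\hat u_{ij}^\top A\hat u_{ij}-\tfrac12(e_i^\top A e_i+e_j^\top A e_j)$, so each $\abs{A_{ij}}$ is dominated by a fixed linear combination of $\absin{c^\top A c}$ over the finite set $V\eqdef\{e_1,\dots,e_v\}\cup\{\hat u_{ij}:1\le i<j\le v\}$ of unit vectors. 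Summing over $i,j$ produces $\norm{A}_{e,1}\le C_v\sum_{c\in V}\absin{c^\top A c}$ with $C_v$ depending only on $v$, and hence
\[
	\E[\norm{A_n}_{e,1}\mid\CS]\le C_v\sum_{c\in V}\E[\absin{c^\top A_n c}\mid\CS]\qtext{a.s.}
\]
Applying \tn{(c)} to each of the finitely many $c\in V$ and intersecting the corresponding full-measure sets yields a single set of full measure on which the right-hand side tends to $0$, which is \tn{(a)}. The finiteness of $V$ is exactly what makes the union of the exceptional null sets negligible, and this is the only point in the argument that genuinely requires care.
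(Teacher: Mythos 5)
Your proof is correct and follows essentially the same route as the paper's: the $\ell_1$/$\ell_2$ norm equivalence on $\R^{v^2}$ for (a)$\Leftrightarrow$(b), the bound $\absin{c^{\top}A_nc}\le\norm{A_n}_{e,1}$ for unit $c$ for (a)$\Rightarrow$(c), and a polarization identity over finitely many unit vectors for (c)$\Rightarrow$(a). The only difference is cosmetic---the paper polarizes with $z_{ij}^{\pm}=(e_i\pm e_j)/\sqrt{2}$ via $2A_{ij}=z_{ij}^{+\top}A z_{ij}^{+}-z_{ij}^{-\top}A z_{ij}^{-}$, while you use $\hat u_{ij}=(e_i+e_j)/\sqrt{2}$ together with the diagonal forms $e_i^{\top}Ae_i$---and your explicit remark that the finiteness of the test set is what lets one intersect the $c$-dependent null sets is exactly the point implicit in the paper's argument.
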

\begin{proof}
(a) is equivalent to $(b)$ because
\[
	\norm{A_n}_F\le \norm{A_n}_{e,1}\le v^2\norm{A_n}_F.
\]
The equivalence of (a) and (c) follows from the next inequalities:
\[
	\absin{c^{\top}A_n c}\le \norm{c}_{\infty}^2\norm{A_n}_{e,1}.
\]
and, letting $z_{ij}^{+}=(e_i+e_j)/\sqrt{2}$ and $z_{ij}^{-}=(e_i-e_j)/\sqrt{2}$, where $\{e_1,\ldots,e_v\}$ is a standard basis for $\R^v$,
\[
	\norm{A_n}_{e,1}\le\frac{1}{2}\sum_{i,j=1}^v\left(\absin{z_{ij}^{+\top}A_nz_{ij}^{+}}+\absin{z_{ij}^{-\top}A_nz_{ij}^{-}}\right). \qedhere
\]
\end{proof}

\medskip

The following is a simple extension of Lemma A.3. in \cite{Crimaldi:09} to the multidimensional case. For a random vector $X\in \R^v$ and $\F\subset\H$ let $\QM_X^{\F}$ denote the regular conditional distribution of $X$ given $\F$ and let $\hat{\varphi}_X$ be the corresponding characteristic functions, i.e., for $t\in\R^v$,
\[
	\hat{\varphi}_X(\omega,t)=\int \exp(it^{\top}x)\QM_X^{\F}(\omega,dx).
\]
Also the conditional characteristic function of $X$ given $\F$ is given by
\[
	\varphi_X(t\mid \F)\eqdef \E[\exp(it^{\top}X)\mid \F]
\]
and for a fixed $t\in\R^v$ and almost all $\omega\in\Omega$, $\hat{\varphi}_X(\omega,t)=\varphi_X(t\mid\F)(\omega)$.

\medskip

\begin{lemma}
\label{lemma:aux_ch_fun}
Let $\seq{X_n}$ be a sequence of random vectors in $\R^v$ and $\F\subset \H$. Then $X_n\to X$\, $\F$-weakly, i.e., for almost all $\omega\in\Omega$, $\QM_{X_n}^{\F}(\omega,\csdot)\to \QM_X^{\F}(\omega,\csdot)$ weakly, iff for every $t\in \R^v$, $\hat{\varphi}_{X_n}(\csdot,t)\to \hat{\varphi}_X(\csdot,t)$ a.s.
\end{lemma}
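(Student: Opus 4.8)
The plan is to reduce both implications, for almost every fixed $\omega$, to the classical Lévy continuity theorem applied to the genuine probability measures $\mu_n(\omega)\eqdef\QM_{X_n}^{\F}(\omega,\csdot)$ and $\mu(\omega)\eqdef\QM_X^{\F}(\omega,\csdot)$, whose characteristic functions are precisely $\hat{\varphi}_{X_n}(\omega,\csdot)$ and $\hat{\varphi}_X(\omega,\csdot)$. The forward direction is immediate: if $X_n\to X$\, $\F$-weakly, then by Definition \ref{def:CWC} there is a null set $N$ off which $\mu_n(\omega)\to\mu(\omega)$ weakly, and since $x\mapsto\exp(it^{\top}x)$ is bounded and continuous this yields $\hat{\varphi}_{X_n}(\omega,t)\to\hat{\varphi}_X(\omega,t)$ for every $t$ and every $\omega\notin N$; in particular the convergence holds a.s.\ for each fixed $t$.

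The substantive direction is the converse, and the key obstacle is that the hypothesis supplies, for each $t$, a null exceptional set $N_t$, and these cannot be combined directly over the uncountable index $t\in\R^v$. I would resolve this with Fubini. First I record joint measurability: for fixed $t$ the map $\hat{\varphi}_{X_n}(\csdot,t)$ is a version of $\varphi_{X_n}(t\mid\F)$, hence $\F$-measurable, while for fixed $\omega$ it is continuous in $t$; a map measurable in one argument and continuous in the other is jointly $\F\otimes\B(\R^v)$-measurable, and the same holds for the pointwise $\limsup$. Thus the bad set $B\eqdef\{(\omega,t):\hat{\varphi}_{X_n}(\omega,t)\not\to\hat{\varphi}_X(\omega,t)\}$ is measurable, and by hypothesis each $t$-section $\{\omega:(\omega,t)\in B\}$ is $\PM$-null. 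By Tonelli, $(\PM\otimes\mathrm{Leb})(B)=0$, so for $\PM$-a.e.\ $\omega$ the $\omega$-section $\{t:(\omega,t)\in B\}$ is Lebesgue-null. In other words, for almost every $\omega$ we have $\hat{\varphi}_{X_n}(\omega,t)\to\hat{\varphi}_X(\omega,t)$ for Lebesgue-a.e.\ $t$.

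I then fix such a good $\omega$ (also discarding the null set on which $\mu_n(\omega),\mu(\omega)$ fail to be probability measures) and finish by the standard continuity-theorem argument for the fixed sequence $\hat{\mu}_n\eqdef\hat{\varphi}_{X_n}(\omega,\csdot)$, which satisfies $|\hat{\mu}_n|\le 1$ and converges Lebesgue-a.e.\ to the continuous characteristic function $\hat{\mu}\eqdef\hat{\varphi}_X(\omega,\csdot)$ of $\mu(\omega)$. Tightness of $\{\mu_n(\omega)\}$ follows coordinate-wise from the inequality
\[
	\mu_n(\omega)\{x:|x_j|>2/u\}\le \frac{1}{u}\int_{-u}^{u}\bigl(1-\mathrm{Re}\,\hat{\mu}_n(se_j)\bigr)\,ds,
\]
whose right-hand side converges, by bounded convergence using the a.e.-$t$ convergence, to the corresponding integral for $\hat{\mu}$, which is small for small $u$ by continuity of $\hat{\mu}$ at $0$. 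By Prokhorov every subsequence of $\{\mu_n(\omega)\}$ has a weakly convergent sub-subsequence with limit $\nu$; by the forward direction the characteristic function of $\nu$ agrees with $\hat{\mu}$ Lebesgue-a.e., and since both are continuous they coincide everywhere, so $\nu=\mu(\omega)$ by uniqueness. Hence $\mu_n(\omega)\Rightarrow\mu(\omega)$, and as this holds for almost every $\omega$ we conclude $X_n\to X$\, $\F$-weakly. The only delicate point is the measurability-plus-Fubini step that upgrades ``for each $t$, a.s.'' to ``a.s., for a.e.\ $t$''; once that is in hand the remainder is the textbook theorem applied pathwise in $\omega$.
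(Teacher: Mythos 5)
The paper offers no proof of this lemma at all --- it is stated as a ``simple extension'' of Lemma A.3 in Crimaldi (2009) to the multidimensional case --- so your attempt has to stand on its own. Its architecture is the right one, and most of it is correct: the forward direction via bounded continuous test functions; the observation that $(\omega,t)\mapsto\hat{\varphi}_{X_n}(\omega,t)$ is measurable in $\omega$ and continuous in $t$, hence jointly measurable; the Tonelli upgrade from ``for each $t$, a.s.'' to ``a.s., for Lebesgue-a.e.\ $t$'' (this is exactly the step that cannot be done by intersecting null sets over a countable dense set of $t$'s); and the identification of subsequential limits through continuity of characteristic functions and uniqueness.

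There is, however, a genuine gap in the tightness step, and it sits precisely at the point where the one-dimensional argument must be extended to $\R^v$. Your truncation inequality bounds $\mu_n(\omega)\{x:\abs{x_j}>2/u\}$ by $u^{-1}\int_{-u}^{u}(1-\mathrm{Re}\,\hat{\mu}_n(se_j))\,ds$, an integral along the coordinate line $\{se_j:s\in\R\}$. For $v\ge 2$ that line is a Lebesgue-null subset of $\R^v$, so the statement you extracted from Tonelli --- convergence for Lebesgue-a.e.\ $t\in\R^v$ --- carries no information about the integrand on that line, and the appeal to bounded convergence is unjustified. The issue is not pedantic: convergence of characteristic functions outside a prescribed null (or merely countable dense) set of arguments genuinely fails to control tightness; e.g., taking $\mu_n=\delta_{a_n}$ with integers $a_n\to\infty$ chosen by simultaneous Diophantine approximation, one gets $\hat{\mu}_n\to 1$ on any fixed countable set while all mass escapes to infinity. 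The repair is easy within your own framework: (i) run the same Tonelli argument again on $\Omega\times\R$ for each of the $v$ maps $(\omega,s)\mapsto\hat{\varphi}_{X_n}(\omega,se_j)$ (your hypothesis applies to every $t=se_j$), and intersect the finitely many exceptional null sets, so that for a.e.\ $\omega$ the convergence also holds for a.e.\ $s$ on each coordinate line; or (ii) replace the coordinate-wise bound by the $v$-dimensional truncation inequality
\[
	\mu_n(\omega)\left\{x:\max_{1\le j\le v}\abs{x_j}\ge 2/u\right\}\le \frac{2}{(2u)^v}\int_{[-u,u]^v}\left(1-\mathrm{Re}\,\hat{\mu}_n(t)\right)dt,
\]
whose right-hand side is an integral over a set of positive $v$-dimensional measure, where your a.e.-$t$ convergence and bounded convergence do apply; or (iii) prove the lemma for $v=1$ first (where your argument is complete) and deduce tightness in $\R^v$ from tightness of the $v$ coordinate marginals. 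With any of these, the remainder of your proof goes through as written.
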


\medskip

The next lemma provides a number of useful properties of the almost sure conditional convergence which are typical of the usual weak convergence.

\begin{lemma}
\label{lemma:aux_weak_conv}
Let $\seq{X_n}$ and $\seq{Y_n}$ be sequences of random vectors in $\R^v$ and $\R^w$, respectively, and $\F\subset \H$. Then
\begin{itemize}[leftmargin=1.65em]
	\item[\tn{(a)}] If $X_n\to X$\, $\F$-weakly and $g:\R^v\to \R^d$ is continuous, then $g(X_n)\to g(X)$\, $\F$-weakly.
	\item[\tn{(b)}] $X_n\to X$\, $\F$-weakly iff $s^{\top}X_n\to t^{\top}X$\, $\F$-weakly for all $s\in \R^v$.
	\item[\tn{(c)}] If $Y_n\PRC{\F}Y$ a.s., where $Y$ is $\F$-measurable, then $Y_n\to Y$\, $\F$-weakly.
	\item[\tn{(d)}] Let $v=w$. If $X_n\to X$\, $\F$-weakly and $X_n-Y_n\PRC{\F}0$ a.s., then $Y_n\to X$\, $\F$-weakly.
	\item[\tn{(e)}] If $X_n\to X$\, $\F$-weakly, $Y_n\PRC{\F}Y$ a.s., where $Y$ is $\F$-measurable, then $(X_n^{\top},Y_n^{\top})\to (X^{\top},Y^{\top})$\, $\F$-weakly.
\end{itemize}
\end{lemma}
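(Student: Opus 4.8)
The unifying tool is Lemma~\ref{lemma:aux_ch_fun}, which recasts $\F$-weak convergence as the a.s.\ convergence $\hat{\varphi}_{X_n}(\cdot,t)\to\hat{\varphi}_X(\cdot,t)$ for each fixed $t$; this is convenient because it trades the awkward ``a.s.\ weak convergence of measures'' for a family of scalar a.s.\ limits whose exceptional null sets may depend on $t$. The plan is to prove (a) directly at the level of conditional laws, and then to deduce (b)--(e) from Lemma~\ref{lemma:aux_ch_fun} together with one elementary estimate for conditional characteristic functions.

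For (a) I would first record the disintegration identity $\QM_{g(X_n)}^{\F}=g_{*}\QM_{X_n}^{\F}$ a.s.: for every Borel $A$ we have $\PR{g(X_n)\in A\mid\F}=\PR{X_n\in g^{-1}(A)\mid\F}=\QM_{X_n}^{\F}(\cdot,g^{-1}(A))$ a.s., and since $\omega\mapsto g_{*}\QM_{X_n}^{\F}(\omega,\cdot)$ is a measurable family of probability measures it is a version of the regular conditional law of $g(X_n)$. Fixing a full-measure set on which $\QM_{X_n}^{\F}(\omega,\cdot)\to\QM_{X}^{\F}(\omega,\cdot)$ weakly, the ordinary continuous mapping theorem applied $\omega$-by-$\omega$ gives $g_{*}\QM_{X_n}^{\F}(\omega,\cdot)\to g_{*}\QM_{X}^{\F}(\omega,\cdot)$ weakly, which is exactly $g(X_n)\to g(X)$ $\F$-weakly.

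Part (b) is then the Cram\'er--Wold device: the forward direction is (a) applied to the linear map $x\mapsto s^{\top}x$, while for the converse I would note $\hat{\varphi}_{X_n}(\cdot,t)=\hat{\varphi}_{t^{\top}X_n}(\cdot,1)$, so that scalar $\F$-weak convergence of $t^{\top}X_n$ supplies, via Lemma~\ref{lemma:aux_ch_fun}, the pointwise a.s.\ limit $\hat{\varphi}_{X_n}(\cdot,t)\to\hat{\varphi}_X(\cdot,t)$ for each $t$, and a second appeal to Lemma~\ref{lemma:aux_ch_fun} finishes. The engine for (c) and (d) is the bound
\[
	\abs{\hat{\varphi}_{Y_n}(\cdot,t)-\hat{\varphi}_{X_n}(\cdot,t)}\le\E[\absin{e^{it^{\top}(Y_n-X_n)}-1}\mid\F]\le\norm{t}\epsilon+2\PR{\norm{Y_n-X_n}>\epsilon\mid\F}\qtext{a.s.},
\]
obtained by splitting on $\{\norm{Y_n-X_n}\le\epsilon\}$. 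Under $Y_n-X_n\PRC{\F}0$ a.s.\ the last term vanishes, so letting $\epsilon\downarrow0$ yields $\hat{\varphi}_{Y_n}(\cdot,t)-\hat{\varphi}_{X_n}(\cdot,t)\to0$ a.s.; combined with $\hat{\varphi}_{X_n}(\cdot,t)\to\hat{\varphi}_X(\cdot,t)$ this gives (d) through Lemma~\ref{lemma:aux_ch_fun}, and (c) is the special case $X_n\equiv Y$ (here $\hat{\varphi}_Y(\cdot,t)=e^{it^{\top}Y}$ because $Y$ is $\F$-measurable).

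Finally, for (e) I would exploit that $Y$ is $\F$-measurable, so conditionally on $\F$ it is degenerate and the usual Slutsky obstruction---that marginal convergence need not be joint---disappears. Concretely, $e^{ir^{\top}Y}$ factors out of the conditional expectation, giving $\hat{\varphi}_{(X_n,Y)}(\cdot,(s,r))=e^{ir^{\top}Y}\hat{\varphi}_{X_n}(\cdot,s)\to e^{ir^{\top}Y}\hat{\varphi}_X(\cdot,s)=\hat{\varphi}_{(X,Y)}(\cdot,(s,r))$ a.s., so $(X_n,Y)\to(X,Y)$ $\F$-weakly by Lemma~\ref{lemma:aux_ch_fun}. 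Since $(X_n,Y_n)-(X_n,Y)=(0,Y_n-Y)\PRC{\F}0$ a.s., part (d) in $\R^{v+w}$ upgrades this to $(X_n,Y_n)\to(X,Y)$ $\F$-weakly. I expect the main obstacle to be the bookkeeping in (a): one must verify the pushforward identification $\QM_{g(X_n)}^{\F}=g_{*}\QM_{X_n}^{\F}$ and arrange the continuous mapping argument on a single common full-measure set, whereas (b)--(e) reduce cleanly to the scalar characteristic-function estimate above.
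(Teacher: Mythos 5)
Your proposal is correct and follows essentially the same route as the paper: Lemma \ref{lemma:aux_ch_fun} is the central tool, part (b) uses the Cram\'er--Wold identity $\hat{\varphi}_{X_n}(\cdot,t)=\hat{\varphi}_{t^{\top}X_n}(\cdot,1)$, parts (c)--(d) rest on the bound $\E[\absin{e^{it^{\top}(Y_n-X_n)}-1}\mid\F]$ split on $\{\norm{Y_n-X_n}\le\epsilon\}$, and part (e) factors $e^{ir^{\top}Y}$ out of the conditional characteristic function and then invokes (d). The only cosmetic deviations are that you prove (a) by pushing the conditional laws forward $\omega$-wise (rather than testing against the bounded continuous function $x\mapsto\exp(it^{\top}g(x))$ as the paper does) and obtain (c) as the special case $X_n\equiv Y$ of (d) instead of proving it first; neither changes the substance of the argument.
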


\begin{proof}
$(a)$ This follows from Lemma \ref{lemma:aux_ch_fun} and the fact that $x\mapsto \exp(it^{\top}g(x))$ is a bounded, continuous function.

\noindent (b) The sufficiency follow from part (a) because $x\mapsto s^{\top}x$ is continuous. For the necessity, suppose that all linear combinations converge $\F$-weakly. Then
\[
	\varphi_{X_n}(t\mid \F)=\varphi_{t^{\top}X_n}(1\mid \F)\to\varphi_{t^{\top}X}(1\mid \F)=\varphi_{X}(t\mid \F) \qtext{a.s.}
\]
and the result follows from Lemma \ref{lemma:aux_ch_fun}.

\noindent (c) Since for any $t\in \R^w$ and $\epsilon>0$, $\absin{e^{it^{\top}(Y_n-Y)}-1}\le \epsilon$ on $\absin{t^{\top}(Y_n-Y)}\le \epsilon$, we have
\begin{align*}
	\abs{\varphi_{Y_n}(t\mid \F)-\varphi_Y(t\mid \F)}&\le \E\left[\abs{e^{it^{\top}(Y_n-Y)}-1}\mid \F\right] \\
	&\le \epsilon+\PR{\absin{t^{\top}(Y_n-Y)}>\epsilon\mid \F} \qtext{a.s.}
\end{align*}
Therefore,
\[
	\limsup_{n\ge 1}\abs{\varphi_{Y_n}(t\mid \F)-\varphi_Y(t\mid \F)}\le \epsilon \qtext{a.s.}
\]
The result follows by considering a sequence $\epsilon_m\searrow 0$ and Lemma \ref{lemma:aux_ch_fun}.

\noindent (d) Similarly to part (c), for any $t\in \R^v$,
\[
	\abs{\varphi_{Y_n}(t\mid \F)-\varphi_X(t\mid \F)}\le \E\left[\abs{e^{it^{\top}(Y_n-X_n)}-1}\mid \F\right]\to 0 \qtext{a.s.}
\]

\noindent (e) Since $(X_n^{\top},Y^{\top})\to (X^{\top},Y^{\top})$\, $\F$-weakly, the result follows from part (d).
\end{proof}

\medskip

\begin{lemma}
\label{lemma:aux_weak_eq}
Let $\seq{X_n}$ be a sequence of random variables, $\F\subset \H$, and let $X$ be a random variable with \tn{(}a.s.\tn{)} continuous conditional cdf given $\F$ \tn{(}i.e., the map $t\mapsto \FD{X}{\F}(\omega,t)$ is continuous for \tn{(}almost\tn{)} all $\omega\in \Omega$\tn{)}. Then $X_n\to X$ $\F$-weakly iff $\DK{X_n,X\mid \F}\to 0$ a.s.
\end{lemma}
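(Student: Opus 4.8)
The plan is to recognize the statement as the conditional, pathwise version of P\'olya's uniform convergence theorem, and to establish each implication by reducing it to a deterministic fact holding for almost every fixed $\omega$. Note first that $\DK{X_n,X\mid\F}$ is $\F$-measurable (as already observed in the footnote to its definition, it is a supremum of a right-continuous process), so the almost-sure statements below make sense. For the ``if'' direction, suppose $\DK{X_n,X\mid\F}\to 0$ a.s. Then on a set $\Omega_0$ of full measure one has $\sup_x\abs{\FD{X_n}{\F}(\omega,x)-\FD{X}{\F}(\omega,x)}\to 0$, so in particular $\FD{X_n}{\F}(\omega,x)\to\FD{X}{\F}(\omega,x)$ at every $x\in\R$. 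Since pointwise convergence of distribution functions at the continuity points of the limit is equivalent to weak convergence of the associated measures, $\QM_{X_n}^{\F}(\omega,\csdot)\to\QM_X^{\F}(\omega,\csdot)$ weakly for all $\omega\in\Omega_0$, which by Definition \ref{def:CWC} is exactly $X_n\to X$ $\F$-weakly. (Continuity of the limit is not even needed here.)

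The substantive ``only if'' direction is where P\'olya's argument enters. Assuming $X_n\to X$ $\F$-weakly, I would let $\Omega_1$ be the full-measure set on which both the weak convergence $\QM_{X_n}^{\F}(\omega,\csdot)\to\QM_X^{\F}(\omega,\csdot)$ holds and $x\mapsto\FD{X}{\F}(\omega,x)$ is continuous. Fixing $\omega\in\Omega_1$ and abbreviating $F_n\eqdef\FD{X_n}{\F}(\omega,\csdot)$ and $F\eqdef\FD{X}{\F}(\omega,\csdot)$, these are ordinary distribution functions with $F_n\to F$ weakly and $F$ continuous. Given $\epsilon>0$, continuity of $F$ together with the boundary values $F(-\infty)=0$, $F(+\infty)=1$ permits a finite grid $-\infty=t_0<t_1<\cdots<t_k=+\infty$ with $F(t_j)-F(t_{j-1})<\epsilon$ for each $j$. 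For $x\in[t_{j-1},t_j)$, monotonicity of both $F_n$ and $F$ gives $F_n(t_{j-1})-F(t_j)\le F_n(x)-F(x)\le F_n(t_j)-F(t_{j-1})$, whence $\sup_x\abs{F_n(x)-F(x)}\le\max_{0\le j\le k}\abs{F_n(t_j)-F(t_j)}+\epsilon$. Each interior $t_j$ is a continuity point of $F$ (the endpoints contribute nothing, as $F_n(\pm\infty)=F(\pm\infty)$), so $F_n(t_j)\to F(t_j)$ and the finite maximum tends to $0$; hence $\limsup_n\sup_x\abs{F_n(x)-F(x)}\le\epsilon$, and letting $\epsilon\downarrow 0$ yields $\DK{X_n,X\mid\F}(\omega)\to 0$. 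Since this holds for every $\omega\in\Omega_1$, the convergence is almost sure.

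I do not anticipate a genuine obstacle: the lemma is essentially a measure-theoretic repackaging of P\'olya's theorem, and the only care required is the bookkeeping of null sets. The exceptional set is just the union of the null set where $\F$-weak convergence fails and the null set where the limiting conditional cdf is non-continuous. Because $n$ ranges over a countable index and P\'olya's argument is run separately for each fixed $\omega$, I need neither uniform-in-$\omega$ rates nor a measurable selection of the grid $\{t_j\}$, so the passage from the deterministic statement to the almost-sure conclusion is immediate.
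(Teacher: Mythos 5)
Your proposal is correct and follows the same route as the paper's proof: the paper handles the direction $\DK{X_n,X\mid\F}\to 0$ a.s.\ $\Rightarrow$ $\F$-weak convergence by citing the standard fact that pointwise convergence of cdfs implies weak convergence (Shiryaev), and the converse by an $\omega$-wise application of P\'olya's theorem, exactly as you do. The only difference is that you spell out the grid argument behind P\'olya's theorem rather than citing it, and your bookkeeping of the null sets matches the paper's implicit treatment.
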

\begin{proof}
The necessity holds by Theorem 3.1.2 in \cite{Shiryaev:16:Probability} because $\DK{X_n,X\mid \F}\to 0$ a.s.\ implies that for almost all $\omega\in \Omega$ the regular conditional cdfs converge and the sufficiency follows from the $\omega$-wise application of P\'olya's theorem \citep[e.g.,][Theorem~9.1.4]{Athreya:2006:MTP}.
\end{proof}

\medskip

\begin{lemma}
\label{lemma:aux_CMT}
Suppose that $f:\R^v\to\R^w$ is continuous and $\seq{X_n}$ and $\seq{Y_n}$ are sequences of random vectors in $\R^v$ such that $Y_n-X_n\PRC{\F}0$ a.s.\ for some $\F\subset\H$ and $\seq{X_n}$ is $\F$-asymptotically tight. Then
\[
	f(Y_n)-f(X_n)\PRC{\F}0 \qtext{a.s.}
\]
\end{lemma}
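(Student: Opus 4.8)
The plan is to prove this as a conditional version of the continuous mapping / converging-together theorem, combining the uniform continuity of $f$ on compact sets with the $\F$-asymptotic tightness of $\seq{X_n}$ to confine $X_n$ to a fixed ball and then transferring the closeness of $Y_n$ to $X_n$ through $f$. Fix $\epsilon>0$; the goal is to show that $\limsup_{n\to\infty}\PR{\norm{f(Y_n)-f(X_n)}>\epsilon\mid\F}=0$ a.s., which by Definition \ref{def:CCP} is exactly the asserted convergence.

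First I would fix a deterministic threshold $K>0$. Since $f$ is continuous, it is uniformly continuous on the compact ball $\{x:\norm{x}\le K+1\}$, so there is a deterministic $\delta=\delta(\epsilon,K)\in(0,1]$ with $\norm{f(x)-f(y)}<\epsilon$ whenever $\norm{x},\norm{y}\le K+1$ and $\norm{x-y}<\delta$. On the event $\{\norm{X_n}\le K\}\cap\{\norm{Y_n-X_n}<\delta\}$ both $X_n$ and $Y_n$ lie in that ball (as $\delta\le 1$), hence $\norm{f(Y_n)-f(X_n)}<\epsilon$ there. This yields the deterministic inclusion
\[
	\{\norm{f(Y_n)-f(X_n)}>\epsilon\}\subset\{\norm{X_n}>K\}\cup\{\norm{Y_n-X_n}\ge\delta\},
\]
and therefore the a.s.\ bound
\[
	\PR{\norm{f(Y_n)-f(X_n)}>\epsilon\mid\F}\le\PR{\norm{X_n}>K\mid\F}+\PR{\norm{Y_n-X_n}\ge\delta\mid\F}.
\]
Because $\delta$ is a fixed positive constant and $Y_n-X_n\PRC{\F}0$ a.s., the second term tends to $0$ a.s.\ directly from Definition \ref{def:CCP}; taking $\limsup_n$ then leaves $\limsup_n\PR{\norm{f(Y_n)-f(X_n)}>\epsilon\mid\F}\le\limsup_n\PR{\norm{X_n}>K\mid\F}$ a.s.

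Finally I would remove the first term using tightness. The $\F$-asymptotic tightness of $\seq{X_n}$ says precisely that $\essinf_{K}\limsup_n\PR{\norm{X_n}>K\mid\F}=0$ a.s., and by the footnote on the essential infimum this essinf is attained along a countable family of thresholds $\{K_m\}$. Applying the preceding display with each $K_m$ and intersecting the countably many null exception sets, the inequality $\limsup_n\PR{\norm{f(Y_n)-f(X_n)}>\epsilon\mid\F}\le\limsup_n\PR{\norm{X_n}>K_m\mid\F}$ holds simultaneously for all $m$ outside a single null set; taking the infimum over $m$ forces the left-hand side to $0$ a.s. I expect the only genuine subtlety — rather than a true obstacle — to be this last bookkeeping with the essinf: one must argue along a countable determining family of thresholds and control the null sets uniformly in $m$, since each conditional probability is defined only up to a null set. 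The analytic core, uniform continuity on compacts, and the vanishing of the second term are routine.
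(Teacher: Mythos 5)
Your proposal is correct and follows essentially the same route as the paper's proof: uniform continuity of $f$ on closed balls, a split of the conditional probability into a tightness term for $X_n$ and a term killed by $Y_n-X_n\PRC{\F}0$ a.s., and the countable-attainment property of the essential infimum to handle the null sets. The only (cosmetic) difference is that your choice $\delta\le 1$ with the enlarged ball of radius $K+1$ lets you skip the paper's intermediate step of bounding $\PR{\norm{Y_n}>z\mid \F}$ by $\PR{\norm{Y_n-X_n}>z/2\mid \F}+\PR{\norm{X_n}>z/2\mid \F}$, since on your good event $Y_n$ is automatically confined to the larger ball.
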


\begin{proof}
For any $z>0$, the restriction $f|_{\overline{B(0,z)}}$ is uniformly continuous, i.e., $\forall\epsilon>0$, $\exists\delta_{\epsilon}>0$ such that for all $x,y\in \overline{B(0,z)}$, $\norm{f(x)-f(y)}<\epsilon$ whenever $\norm{x-y}<\delta_{\epsilon}$. Fix $\epsilon>0$. Then
\begin{align*}
	\PR{\norm{f(Y_n)-f(X_n)}>\epsilon\mid \F}&\le\PR{\norm{Y_n-X_n}>\delta_{\epsilon}\mid \F} \\
	&\quad+\PR{\norm{Y_n}>x\mid \F}+\PR{\norm{X_n}>z\mid \F} \\
	&\le 2\PR{\norm{Y_n-X_n}>\delta_{\epsilon}\wedge z/2\mid \F} \\
	&\quad +2\PR{\norm{X_n}>z/2\mid \F} \qtext{a.s.}
\end{align*}
Therefore,
\begin{align*}
	&\limsup_{n\to\infty}\PR{\norm{f(Y_n)-f(X_n)}>\epsilon\mid \F} \\
	&\qquad\le 2\essinf_{z>0}\limsup_{n\to\infty}\PR{\norm{X_n}>z\mid \F}=0 \qtext{a.s.} \qedhere
\end{align*}
\end{proof}

\medskip

\begin{lemma}
\label{lemma:aux_asy_tightness}
Suppose that $\seq{X_n}$ and $\seq{Y_n}$ are sequences of random vectors in $\R^v$ such that $X_n$ is $\F$-measurable for all $n\ge 1$ and some $\F\subset\H$, $\sup_{n}\norm{X_n}<\infty$ a.s., and $Y_n-X_n\PRC{\F}0$ a.s. Then $\seq{Y_n}$ is $\F$-asymptotically tight.
\end{lemma}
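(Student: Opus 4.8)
The plan is to bound the conditional tail probability of $Y_n$ by splitting $Y_n=X_n+(Y_n-X_n)$, controlling the two pieces separately, and then to exploit monotonicity in the truncation level to handle the essential infimum that defines asymptotic negligibility.

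First I would fix $y>0$ and use the triangle inequality $\norm{Y_n}\le\norm{X_n}+\norm{Y_n-X_n}$ to obtain the inclusion $\{\norm{Y_n}>y\}\subseteq\{\norm{X_n}>y/2\}\cup\{\norm{Y_n-X_n}>y/2\}$, whence
\[
	\PR{\norm{Y_n}>y\mid\F}\le\PR{\norm{X_n}>y/2\mid\F}+\PR{\norm{Y_n-X_n}>y/2\mid\F}\qtext{a.s.}
\]
Since $X_n$ is $\F$-measurable, the first term equals $\ind\{\norm{X_n}>y/2\}$ a.s., and since $Y_n-X_n\PRC{\F}0$ a.s., the second term tends to $0$ a.s.\ as $n\to\infty$ for this fixed $y$ by Definition \ref{def:CCP}. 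The key observation is that the indicator term does not vanish for a generic fixed $y$; it vanishes for all $n$ only once $y/2$ exceeds the (random but a.s.\ finite) quantity $M\eqdef\sup_n\norm{X_n}$.

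Set $g(y,\omega)\eqdef\limsup_{n\to\infty}\PR{\norm{Y_n}>y\mid\F}(\omega)$. Next I would take the deterministic levels $y_m\eqdef 2m$ and the increasing events $E_m\eqdef\{\sup_n\norm{X_n}<m\}$. On $E_m$ one has $\ind\{\norm{X_n}>m\}=0$ for every $n$, so the display above yields $g(y_m,\cdot)=0$ a.s.\ on $E_m$. Because $M<\infty$ a.s., the events $E_m$ increase to $\Omega$ up to a null set, and hence $\inf_m g(y_m,\cdot)=0$ a.s. Finally, using that $\essinf_{y>0}g(y,\cdot)\le g(y_m,\cdot)$ a.s.\ for each $m$, so that $\essinf_{y>0}g\le\inf_m g(y_m,\cdot)$ a.s., together with $g\ge 0$, I would conclude $\essinf_{y>0}\limsup_{n\to\infty}\PR{\norm{Y_n}>y\mid\F}=0$ a.s., which is exactly the $\F$-asymptotic tightness of $\seq{Y_n}$.

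The only delicate point is the interplay between the uncountable family indexing the essential infimum and the random cutoff $M$: one cannot simply send $y\to\infty$ inside a single a.s.\ statement, so I would instead reduce to the countable sequence $\{y_m\}$ and the exhausting events $\{E_m\}$, relying on the monotonicity of $y\mapsto\PR{\norm{Y_n}>y\mid\F}$ to pass from the countable infimum back to the essential infimum over all $y>0$. Everything else is a routine application of the conditional convergence in probability and the $\F$-measurability of $X_n$.
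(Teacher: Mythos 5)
Your proof is correct and takes essentially the same route as the paper's: the same triangle-inequality split, the reduction of the conditional probability of the $X_n$ term to an indicator via $\F$-measurability, and the treatment of the essential infimum through countable truncation levels exploiting $\sup_n\norm{X_n}<\infty$ a.s. The only difference is expository: the paper bounds the indicator uniformly in $n$ by $\ind\{\sup_n\norm{X_n}>y/2\}$ at the outset, which compresses your $E_m$-exhaustion argument into a single line.
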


\begin{proof}
For any $y>0$,
\begin{align*}
	\PR{\norm{Y_n}>y\mid \F}&\le \PR{\norm{Y_n-X_n}>y/2\mid \F} \\
	&\quad+\ind\big\{\sup_{n}\norm{X_n}>y/2\big\} \qtext{a.s.}
\end{align*}
Therefore,
\begin{align*}
	&\essinf_{y>0}\limsup_{n\to\infty}\PR{\norm{Y_n}>y\mid \F} \\
	&\qquad\le \essinf_{y>0}\ind\big\{\sup_{n}\norm{X_n}>y\big\}=0 \qtext{a.s.} \qedhere
\end{align*}
\end{proof}

\medskip

In the following, for $r,\epsilon\ge 0$ let
\[
	S_{r,\epsilon}\eqdef \{x\in \R^v:r\le \norm{x}\le r+\epsilon\}.
\]

\medskip

\begin{lemma}
\label{lemma:chi_bound}
Suppose that $Z$ is a standard normal random vector in $\R^v$ with $v\ge 2$ and $\lambda_1\ge \lambda_2\ge \cdots\ge \lambda_v>0$ are constants. Let $\Lambda\eqdef\diag(\lambda_1,\dots,\lambda_v)$ and $N=\Lambda^{1/2}Z$. Then for all $\epsilon\ge 0,r\ge 0$,
\begin{align*}
	\LC{\epsilon,\normin{N}}&=\sup_{r\ge 0}\PR{N\in S_{r,\epsilon}^1}\le \frac{C_d\epsilon}{\sqrt{\lambda_2}},
\end{align*}
where $C_v\equiv\sqrt{v-1}$.
\end{lemma}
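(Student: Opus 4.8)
The plan is to bound the shell probability by the supremum of the Lebesgue density of $\normin{N}$ and then to estimate that density. Since the coordinates $N_i=\sqrt{\lambda_i}Z_i$ are independent with $N_i\sim\ND{0}{\lambda_i}$, the displayed equality $\LC{\epsilon,\normin{N}}=\sup_{r\ge0}\PR{N\in S_{r,\epsilon}}$ is just the definition of the L\'evy concentration function specialized to $\normin{N}$, together with the fact that $\normin{N}$ has no atoms for $v\ge2$. As $\normin{N}$ has a density $f$ on $[0,\infty)$, we have $\PR{N\in S_{r,\epsilon}}=\int_r^{r+\epsilon}f\le\epsilon\sup_{\rho}f(\rho)$, so it suffices to show $\sup_{\rho}f(\rho)\le\sqrt{v-1}/\sqrt{\lambda_2}$.

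First I would peel off the largest-variance coordinate. Writing $\normin{N}^2=\lambda_1 Z_1^2+\normin{N'}^2$, where $N'\eqdef(N_2,\dots,N_v)$ is independent of $Z_1$ with covariance $\diag(\lambda_2,\dots,\lambda_v)$, and conditioning on $N'$ gives
\[
	f(r)=\frac{2r}{\sqrt{2\pi\lambda_1}}\,\E\!\left[\frac{\exp\!\big(-(r^2-\normin{N'}^2)/(2\lambda_1)\big)}{\sqrt{r^2-\normin{N'}^2}}\,\ind\{\normin{N'}<r\}\right].
\]
Bounding the exponential by $1$ and using $\lambda_1\ge\lambda_2$ reduces the claim to the inequality
\[
	\frac{2r}{\sqrt{2\pi}}\int_0^r\frac{h(s)}{\sqrt{r^2-s^2}}\,ds\le\sqrt{v-1}\qquad\text{for all }r>0,
\]
where $h$ is the density of $\normin{N'}$ in $\R^{v-1}$.

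The decisive step is to insert the polar representation $h(s)=s^{v-2}(2\pi)^{-(v-1)/2}(\prod_{i\ge2}\lambda_i)^{-1/2}\int_{S^{v-2}}\exp(-\tfrac{s^2}{2}q(\omega))\,d\omega$, with $q(\omega)=\sum_{i\ge2}\omega_i^2/\lambda_i$, and to interchange the order of integration. After $s=r\sin\psi$ the radial integral becomes $r^{v-1}\int_0^{\pi/2}\sin^{v-2}\psi\,\exp(-\tfrac{r^2 q(\omega)}{2}\sin^2\psi)\,d\psi$, and the angular integral over $S^{v-2}$ is evaluated through the classical identity $\int_{S^{v-2}}q(\omega)^{-(v-1)/2}\,d\omega=(2\pi)^{(v-1)/2}(\prod_{i\ge2}\lambda_i)^{1/2}/(2^{(v-3)/2}\Gamma(\tfrac{v-1}{2}))$, which I would obtain by passing to polar coordinates in $\int_{\R^{v-1}}\exp(-\tfrac12 q(x))\,dx$. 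The key point is that the Gaussian-moment behaviour $\int_0^\infty\psi^{v-2}e^{-a\psi^2}\,d\psi\propto a^{-(v-1)/2}$ produces exactly the power $q(\omega)^{-(v-1)/2}$ that the identity absorbs, and simultaneously cancels the leading factor $r$; thus the product $(\prod_{i\ge2}\lambda_i)^{-1/2}$ cancels identically and one is left with a bound depending only on $v$. This cancellation is what yields both the clean dimensional constant and the denominator $\sqrt{\lambda_2}$ (the latter entering only through the peeled coordinate via $\lambda_1\ge\lambda_2$).

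The main obstacle is the resulting one-dimensional estimate: one must show that the dimensional constant left after the cancellation is at most $\sqrt{v-1}$. Here the naive bound $\sin\psi\ge(2/\pi)\psi$ is far too lossy, since it inflates the estimate by a factor growing like $(\pi/2)^{v}$; instead I would carry out a Laplace-type analysis of $\int_0^{\pi/2}\sin^{v-2}\psi\,e^{-a\sin^2\psi}\,d\psi$, exploiting that at the maximizing radius the integrand concentrates near $\psi=0$, where $\sin\psi\approx\psi$ and the integral is asymptotic to $\tfrac12 a^{-(v-1)/2}\Gamma(\tfrac{v-1}{2})$. Comparison with the exact Gaussian moment shows that the true value of the left-hand side tends to $\sqrt{2/\pi}<1$ as $r\to\infty$, so there is ample room to absorb the lower-order corrections and conclude the bound $\sqrt{v-1}$; the one genuinely technical part is controlling these corrections uniformly in $v$.
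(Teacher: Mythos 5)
Your reduction to a density bound, the conditioning formula for $f(r)$ after peeling off the first coordinate, the polar representation of $h$, and the sphere identity
\[
	\int_{S^{v-2}}q(\omega)^{-(v-1)/2}\,d\omega=\frac{(2\pi)^{(v-1)/2}\bigl(\prod_{i\ge 2}\lambda_i\bigr)^{1/2}}{2^{(v-3)/2}\Gamma\bigl(\tfrac{v-1}{2}\bigr)}
\]
are all correct, and $\sqrt{2/\pi}$ is indeed the $r\to\infty$ limit of the reduced quantity. The genuine gap is that the lemma requires a bound valid for \emph{every} $r$, and your argument stops exactly where that work begins. Concretely, writing $J_m(a)\eqdef\int_0^{\pi/2}\sin^m\psi\,e^{-a\sin^2\psi}\,d\psi$, your scheme needs (in one form or another) an inequality $J_{v-2}(a)\le K_v\,\tfrac12\Gamma\bigl(\tfrac{v-1}{2}\bigr)a^{-(v-1)/2}$ for \emph{all} $a>0$, with $K_v\le\sqrt{\pi(v-1)/2}$, so that after the sphere identity the total is $K_v\sqrt{2/\pi}\le\sqrt{v-1}$. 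This is not implied by the Laplace asymptotics, because the asymptotic value is approached \emph{from above}: for $v=2$ one has $J_0(a)=\tfrac{\pi}{2}e^{-a/2}I_0(a/2)$, hence $\sqrt{a}\,J_0(a)=\tfrac{\sqrt{\pi}}{2}\bigl(1+\tfrac{1}{4a}+O(a^{-2})\bigr)$, which exceeds its limit at every finite $a$; numerically $\sup_a\sqrt{a}\,J_0(a)\approx 1.18\cdot\tfrac{\sqrt{\pi}}{2}$ while the admissible constant is $\sqrt{\pi/2}\approx 1.25$. So for $v=2$ the inequality you need is true by a margin of only a few percent, and your claim of ``ample room to absorb the lower-order corrections'' is not accurate: any crude comparison (splitting the integral, replacing $\sin\psi$ by $\psi$ on part of the range, etc.) overshoots this margin, and a sharp, uniform-in-$a$ and uniform-in-$v$ quantitative version of the Laplace analysis is precisely the missing content of the proof.

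For contrast, the paper's proof avoids this entirely by peeling off the \emph{two} largest-variance coordinates rather than one. With $X\eqdef N_1^2+N_2^2$ and $Y\eqdef\sum_{i\ge 3}N_i^2$, the convolution giving the density of $X$ produces a Beta$(1/2,1/2)$ integral, so the two $z^{-1/2}$-type singularities cancel and $f_X(x)\le\frac{B(1/2,1/2)}{2\pi\sqrt{\lambda_1\lambda_2}}\,e^{-x/(2\lambda_1)}$ is bounded; moreover, the Gaussian exponential is \emph{kept} (not discarded as in your step) because it is exactly what controls the Jacobian factor $2r$, yielding the uniform density bound $f_{\sqrt{X+y}}(r)\le\sqrt{y+\lambda_1}/\sqrt{\lambda_1\lambda_2}$ for all $r$. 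Averaging over $y=Y$ and applying Jensen's inequality then gives $\E[Y+\lambda_1]^{1/2}/\sqrt{\lambda_1}\le\sqrt{v-1}$, which is the stated constant, with no asymptotic analysis at all. If you want to complete your one-coordinate route you must actually prove the uniform bound on $J_{v-2}$ above; the two-coordinate decomposition is the inexpensive way around it.
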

\begin{proof}
Let $X\eqdef\sum_{i=1}^2 N_i^2$, $Y\eqdef\sum_{i=3}^d N_i^2$, and note that $\normin{N}=\sqrt{X+Y}$. Then letting $f_X$ denote the density of $X$ we have
\begin{align}
\label{eq:aux_pdf2_bound}
\begin{aligned}
	f_X(x)&=\frac{1}{2\pi\sqrt{\lambda_1\lambda_2}}\int_0^x e^{-\frac{z}{2\lambda_2}-\frac{x-z}{2\lambda_1}}(z(x-z))^{-1/2}dz \\
	&\le\frac{1}{2\pi\sqrt{\lambda_1\lambda_2}}B(1/2,1/2)e^{-\frac{x}{2\lambda_1}}.
\end{aligned}
\end{align}
For $y\ge 0$ the density of $\sqrt{X+y}$ is zero on $(-\infty,\sqrt{y})$ and using \eqref{eq:aux_pdf2_bound} it can be bounded on $[\sqrt{y},\infty)$ by
\[
	f_{\sqrt{X+y}}(x)=2xf_X(x^2-y)\le\frac{\sqrt{y+\lambda_1}}{\sqrt{\lambda_1\lambda_2}},
\]
so that for all $r\ge 0$,
\[
	g(y)=\PR{r\le\sqrt{X+y}\le r+\epsilon}\le \frac{\sqrt{y+\lambda_1}}{\sqrt{\lambda_1\lambda_2}}\epsilon.
\]
Hence, for $d\ge 3$, noticing that $X$ is independent of $Y$, we find that
\begin{align*}
	&\PR{r\le \sqrt{X+Y}\le r+\epsilon}=\E[g(Y)] \\
	&\qquad\le \frac{\epsilon}{\sqrt{\lambda_1\lambda_2}}\E[Y+\lambda_1]^{1/2}
	\le \frac{\epsilon}{\sqrt{\lambda_2}}\left(\sum_{i=3}^d\frac{\lambda_i}{\lambda_1}+1\right)^{1/2},
\end{align*}
which proves the result.
\end{proof}

\medskip

Let $\phi(w)\eqdef \norm{w}$. This function is trice continuously differentiable on $\R^v\setminus \{0\}$ and the following bounds on the derivatives of $\phi$ hold:
\begin{align}
\label{eq:norm_drv_bounds}
\begin{aligned}
	\abs{\phi'(w)(x)}&\le \norm{x} \\
	\abs{\phi''(w)(x,y)}&\le 2\norm{w}^{-1}\norm{x}\norm{y} \\
	\abs{\phi'''(w)(x,y,z)}&\le 5\norm{w}^{-2}\norm{x}\norm{y}\norm{z}.
\end{aligned}
\end{align}
For a real symmetric matrix $B$ we denote the $j$-th order statistic of its eigenvalues by $\lambda_{(j)}(B)$. Finally, we say that a random vector $X$ is conditionally normal given $\F\subset\H$ with zero mean and the conditional covariance matrix $V$, denoted by $X\mid\F\sim \ND{0}{V}$, if $V$ is $\F$-measurable, a.s.\ \textit{finite and positive semi-definite} and the conditional characteristic function of $X$ is given by
\[
	\E[e^{it^{\top}X}\mid \F]=\exp\left(-\frac{1}{2}t^{\top}Vt\right) \qtext{a.s.}
\]

\medskip

\begin{theorem}
\label{thm:aux_CLT1}
Let $X_1,\ldots,X_n$ be random vectors in $\R^v$ that are conditionally independent given $\F\subset\H$ with $\E[X_i\mid \F]=0$ and $\E[\norm{X_i}^3\mid \F]<\infty$ a.s. Let $T\eqdef \sum_{i=1}^n X_i$ and let $N$ be a random vector in $\R^v$ such that $N\mid \F\sim \ND{0}{V}$, where $V=\E[TT^{\top}\mid \F]$ a.s. Then, assuming that $\upsilon\equiv\lambda_{(d\vee 2-1)}(V)>0$ a.s.,
\begin{equation*}
	\DK{\norm{T},\norm{N}\mid \F}\le C_d\left(\upsilon^{-3/2}\sum_{i=1}^n\E[\norm{X_i}^3\mid \F]\right)^{1/4} \qtext{a.s.},
\end{equation*}
where $C_d>0$ is a constant depending only on $d$.
\end{theorem}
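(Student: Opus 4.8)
The plan is to run a \emph{conditional Lindeberg swapping} argument comparing smooth test functions of $T$ and of a matching conditional Gaussian $N$, control the resulting third–moment error by $\sum_i\E[\norm{X_i}^3\mid\F]$, and then convert this smooth comparison into a bound on the conditional Kolmogorov distance by smoothing the indicator $\ind\{\norm{\csdot}\le x\}$ and using the anti-concentration estimate of Lemma \ref{lemma:chi_bound}.

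First I would introduce conditionally independent Gaussian vectors $Y_1,\dots,Y_n$, independent of the $X_j$ given $\F$, with $Y_i\mid\F\sim\ND{0}{V_i}$, $V_i\eqdef\E[X_iX_i^\top\mid\F]$, so that $N\eqdef\sum_iY_i$ satisfies $N\mid\F\sim\ND{0}{V}$ since $V=\sum_iV_i$ a.s. For any $f\in C^3(\R^v)$, telescoping from $T$ to $N$ one summand at a time and Taylor expanding to second order, the zeroth, first and second order terms cancel in conditional expectation: each increment is conditionally independent of the corresponding partial sum given $\F$, $\E[X_i\mid\F]=\E[Y_i\mid\F]=0$, and the second conditional moments match ($\E[X_iX_i^\top\mid\F]=V_i=\E[Y_iY_i^\top\mid\F]$). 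This leaves
\[
\absin{\E[f(T)\mid\F]-\E[f(N)\mid\F]}\le \tfrac16\,M_3(f)\sum_{i=1}^n\big(\E[\norm{X_i}^3\mid\F]+\E[\norm{Y_i}^3\mid\F]\big)\quad\text{a.s.},
\]
where $M_3(f)$ is the supremum over $w$ of the trilinear operator norm of $D^3f(w)$. A conditional Gaussian moment bound gives $\E[\norm{Y_i}^3\mid\F]\le C(\mathrm{tr}\,V_i)^{3/2}=C(\E[\norm{X_i}^2\mid\F])^{3/2}\le C\,\E[\norm{X_i}^3\mid\F]$ by conditional Jensen, so the error is at most $C\,M_3(f)\,L_n$ with $L_n\eqdef\sum_i\E[\norm{X_i}^3\mid\F]$.

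Next I would convert this into a bound on $\DK{\norm{T},\norm{N}\mid\F}$ by smoothing $\ind\{\norm{\csdot}\le x\}$ from above and below. For a threshold $x$ and width $\epsilon>0$, take $f_x^\epsilon(w)=\chi((\norm{w}-x)/\epsilon)$ for a fixed smooth cutoff $\chi$; the chain rule together with the derivative bounds \eqref{eq:norm_drv_bounds} yields $M_3(f_x^\epsilon)\le C(\epsilon^{-3}+\epsilon^{-2}\underline r^{-1}+\epsilon^{-1}\underline r^{-2})$, where $\underline r$ is the least value of $\norm{w}$ on the support of the derivatives of $\chi$. Since that support forces $\norm{w}\ge x-\epsilon$, for $x$ bounded away from $0$ one gets $M_3(f_x^\epsilon)\le C\epsilon^{-3}$; combining the smooth comparison with the anti-concentration estimate then gives $\absin{F_T(x)-F_N(x)}\le C\big(\epsilon^{-3}L_n+\LC{\epsilon,\norm{N}\mid\F}\big)$, where $F_T,F_N$ are the conditional cdfs of $\norm{T},\norm{N}$. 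For the Gaussian side I would diagonalise the $\F$-measurable matrix $V$ (the norm being rotation invariant) and apply Lemma \ref{lemma:chi_bound} conditionally, pointwise in $\omega$, to obtain $\LC{\epsilon,\norm{N}\mid\F}\le C_d\,\epsilon/\sqrt{\upsilon}$ a.s.

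The anticipated obstacle is the non-smoothness of $w\mapsto\norm{w}$ at the origin: the bounds \eqref{eq:norm_drv_bounds} blow up like $\norm{w}^{-2}$, so for thresholds $x<\epsilon$ the clean estimate $M_3(f_x^\epsilon)\le C\epsilon^{-3}$ fails. I would treat this regime separately, bounding both tails directly by small-ball probabilities that the concentration function controls: $F_N(x)\le F_N(\epsilon)\le\LC{\epsilon,\norm{N}\mid\F}$, while $F_T(x)\le\PR{\norm{T}\le\epsilon\mid\F}$ is compared to $\PR{\norm{N}\le 2\epsilon\mid\F}$ through a single cutoff equal to $1$ on $\{\norm{w}\le\epsilon\}$ and $0$ on $\{\norm{w}\ge2\epsilon\}$, whose derivatives are supported on $\{\norm{w}\ge\epsilon\}$ and hence again satisfy $M_3\le C\epsilon^{-3}$. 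This gives $\absin{F_T(x)-F_N(x)}\le C\big(\epsilon^{-3}L_n+\epsilon/\sqrt{\upsilon}\big)$ uniformly in $x\ge0$. Taking the supremum over $x$ and optimising $\epsilon=(\sqrt{\upsilon}\,L_n)^{1/4}$ balances the two terms and yields the stated rate $\DK{\norm{T},\norm{N}\mid\F}\le C_d\,(\upsilon^{-3/2}L_n)^{1/4}$ a.s.; throughout, every inequality must be kept in its almost-sure conditional form, and the $\F$-measurability of $V$ is what lets the conditional Gaussian construction and Lemma \ref{lemma:chi_bound} be applied $\omega$-wise.
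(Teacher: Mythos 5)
Your proposal is correct and follows the same core route as the paper's proof of Theorem \ref{thm:aux_CLT1}: a conditional Lindeberg swap against conditionally independent Gaussians $Y_i=V_i^{1/2}Z_i$ matching the first two conditional moments, a smoothed-indicator sandwich for the conditional Kolmogorov distance, anti-concentration via Lemma \ref{lemma:chi_bound} (with the direct density bound when $d=1$), and the same final choice $\epsilon=\bigl(\sqrt{\upsilon}\sum_{i}\E[\norm{X_i}^3\mid\F]\bigr)^{1/4}$. The one genuine difference is the treatment of the singularity of \eqref{eq:norm_drv_bounds} at the origin. You split the threshold range --- for $x$ bounded away from zero the derivative support keeps $\norm{w}$ of order $\epsilon$, and for small $x$ you compare small-ball probabilities through a cutoff supported on $\{\norm{w}\ge\epsilon\}$ --- whereas the paper removes the problem once and for all by choosing the one-dimensional cutoff $f$ so that $\abs{f^{(j)}(y)}\le D\epsilon^{-3}y^{3-j}$ on $(0,\epsilon)$; this is exactly what the inequality $\absin{f^{(j)}(x-q)}\le D\epsilon^{-3}x^{3-j}\ind_{(q,q+\epsilon)}(x)$ in its proof encodes, and it cancels each $\norm{w}^{-(3-j)}$ factor from \eqref{eq:norm_drv_bounds} on the support, giving the $C\epsilon^{-3}$ bound on the third-order Lindeberg remainder uniformly over all radii $q>0$ without any case distinction. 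Your version is more transparent but needs the regimes stitched together correctly: with one-sided smoothings the clean bound requires $x\ge 2\epsilon$, so the small-ball regime should be $x<2\epsilon$ rather than $x<\epsilon$ (a harmless constant adjustment). Two smaller remarks: your bound $\E[\norm{Y_i}^3\mid\F]\le C(\E[\norm{X_i}^2\mid\F])^{3/2}\le C\,\E[\norm{X_i}^3\mid\F]$ via Gaussian moment comparison and conditional Lyapunov replaces the paper's appeal to Lemma 4 of Rhee and Talagrand (1986) and is equally valid; and since the optimized $\epsilon$ is random ($\F$-measurable), you should add the paper's explicit observation that an a.s.\ inequality established for every fixed $\epsilon>0$ extends to random $\epsilon$ --- the exceptional null set must be made independent of $\epsilon$, e.g.\ by working along rational $\epsilon$ and using continuity or monotonicity in $\epsilon$.
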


\begin{proof}
Let $f$ be a trice continuously differential function, such that $f(x)=1$ if $x\le 0$, $f=0$ if $x\ge\epsilon>0$, and $\abs{f^{(j)}(x)}\le D\epsilon^{-j}\ind_{(0,\epsilon)}(x)$ for some constant $D>0$ and $1\le j\le 3$. Also define
\[
	g_r(s)\eqdef f(\norm{s}-r).
\]
First,
\begin{align*}
	&\PR{\norm{T}\le r\mid \F}\le \E[g_r(T)\mid \F] \\
	&\qquad \le\PR{\norm{N}\le r+\epsilon\mid \F}+\E[g_r(T)-g_r(N)\mid \F]
\shortintertext{and}
	&\PR{\norm{T}> r\mid \F}\le 1-\E[g_{r-\epsilon}(T)\mid \F] \\
	&\qquad \le\PR{\norm{N}> r-\epsilon\mid \F}+\E[g_{r-\epsilon}(N)-g_{r-\epsilon}(T)\mid \F]
\end{align*}
a.s.\ for all $r\ge 0$ and $\epsilon>0$. Therefore, w.p.1,
\begin{align}
\label{eq:aux_CLT1}
\begin{aligned}
	&\DK{\norm{T},\norm{N}\mid \F} \\
	&\qquad =\sup_{q\in \Q_{\ge 0}}\abs{\PR{\norm{T}\le q\mid \F}-\PR{\norm{N}\le q\mid \F}} \\
	&\qquad \le \sup_{q\in \Q_{> 0}}\abs{\E[g_q(T)-g_q(N)\mid \F]}+\sup_{q\in \Q_{\ge 0}}\PR{N\in S_{q,\epsilon}\mid \F},
\end{aligned}
\end{align}

Consider the first term on the third line of \eqref{eq:aux_CLT1}.

\begin{claim}
There exists a constant $B>0$ such that for any $q>0$,
\[
	\abs{\E[g_q(T)-g_q(N)\mid \F]}\le \frac{B}{\epsilon^3}\sum_{i=1}^n \E[\norm{X_i}^3\mid \F] \qtext{a.s.}
\]
\end{claim}
\begin{subproof}
Let $Z_1,\ldots,Z_n$ be i.i.d.\ standard normal random vectors in $\R^v$ independent of $X_1,\ldots,X_n$ and $\F$ and let $Y_i\eqdef V_i^{1/2}Z_i$, where $V_i$ is a version of $\E[X_iX_i^{\top}\mid \F]$. Define
\begin{align*}
	U_i&\eqdef \sum_{k=1}^{i-1} X_k+\sum_{k=i+1}^n Y_k \\
\shortintertext{and}
	W_i&\eqdef g_q\left(U_i+X_i\right)-g_q\left(U_i+Y_i\right).
\end{align*}
Then $g_q(T)-g_q(N)=\sum_{i=1}^n W_i$ and
\[
	\abs{\E[g_q(T)-g_q(N)\mid \F]}\le \sum_{i=1}^n\abs{\E[W_i\mid \F]} \qtext{a.s.}
\]
Let $\G_{i}\eqdef \F\bigvee \sigma(X_1,\ldots,X_{i-1},Z_{i+1},\ldots,Z_n)$ and let $h_{i1}(\lambda)\eqdef g_q\left(U_i+\lambda X_i\right)$ and $h_{i2}(\lambda)\eqdef g_q\left(U_i+\lambda Y_i\right)$. Using Taylor expansion up to the third order, we find that
\[
	W_i=\sum_{j=0}^2 \frac{1}{j!}\left(h_{i1}^{(j)}(0)-h_{i2}^{(j)}(0)\right)+\frac{1}{3!}\left(h_{i1}^{(3)}(\lambda_1)-h_{i2}^{(3)}(\lambda_2)\right),
\]
where $\abs{\lambda_1},\abs{\lambda_2}<1$. The tower property of conditional expectations and the fact that $X_i$ and $Y_i$ are conditionally independent of $\G_i$ given $\F$ imply that
\[
	\E[h_{i1}^{(j)}(0)-h_{i2}^{(j)}(0)\mid \F]=0 \qtext{a.s.}
\]
for $j=1,2$. Finally, using the bounds in \eqref{eq:norm_drv_bounds} and noticing that $\absin{f^{(j)}(x-q)}\le D\epsilon^{-3}x^{3-j}\times \ind_{(q,q+\epsilon)}(x)$ for $1\le j\le 3$, we get
\[
	\absin{\E[h_{i1}^{(3)}-h_{i2}^{(3)}\mid \F]}\le \frac{B}{\epsilon^3}\left(\E[\norm{X_i}^3\mid \F]+\E[\norm{Y_i}^3\mid \F]\right) \qtext{a.s.}
\]
for some constant $B>0$. The result then follows from Lemma 4 in \cite{Rhee/Talagrand:86}, i.e., there is a constant $M>0$ such that $\E[\norm{Y_i}^{3}\mid \F]\le M\E[\norm{X_i}^{3}\mid \F]$ a.s.
\end{subproof}

\medskip

Using Lemma \ref{lemma:chi_bound} when $d\ge 2$ it follows that
\begin{equation}
\label{eq:aux_CLT2}
	\DK{\norm{T},\norm{N}\mid \G}\le \frac{B}{\epsilon^3}\sum_{i=1}^n \E[\norm{X_i}^3\mid \F]+\frac{C_d'}{\sqrt{\upsilon}}\epsilon \qtext{a.s.}
\end{equation}
For $d=1$ we have $\PR{N\in S_{q,\epsilon}\mid \F}\le \epsilon/\sqrt{2\pi \upsilon}$ and the same bound holds. Finally, since \eqref{eq:aux_CLT2} holds for any $\epsilon>0$ a.s., it holds for random $\epsilon$ a.s.\ on $\{\epsilon\in (0,\infty)\}$. Then the result follows by taking $\epsilon=\left(\sqrt{\upsilon}\sum_{i=1}^n\E[\norm{X_i}^3\mid \F]\right)^{1/4}$.
\end{proof}

\medskip

\begin{corollary}
\label{corr:aux_CLT1}
Let $X_1,\ldots,X_n$ be conditionally i.i.d.\ given $\F\subset \H$ with $\E[X_1\mid \F]=0$ and $\E[\norm{X_1}^3\mid \F]<\infty$ a.s. Let $T\eqdef n^{-1/2}\sum_{i=1}^n X_i$ and let $N\mid \F\sim \ND{0}{V}$, where $V=\E[X_1X_1^{\top}\mid \F]$ a.s. Then, assuming that $\upsilon\equiv \lambda_{(d\vee 2-1)}(V)>0$ a.s.
\[
	\DK{\norm{T},\norm{N}\mid \F}\le C_d\left(\frac{\E[\norm{X_1}^3\mid \F]}{\upsilon^{3/2}\sqrt{n}}\right)^{1/4} \qtext{a.s.},
\]
where $C_d>0$ is a constant depending only on $d$.
\end{corollary}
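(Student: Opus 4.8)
The plan is to derive the corollary directly from Theorem \ref{thm:aux_CLT1} by a rescaling argument. First I would set $\tilde{X}_i\eqdef n^{-1/2}X_i$ for $1\le i\le n$, so that $\sum_{i=1}^n\tilde{X}_i=T$. The variables $\tilde{X}_1,\ldots,\tilde{X}_n$ remain conditionally independent given $\F$, satisfy $\E[\tilde{X}_i\mid\F]=n^{-1/2}\E[X_i\mid\F]=0$ a.s., and inherit the a.s.\ finiteness of the conditional third moments. Hence the hypotheses of Theorem \ref{thm:aux_CLT1} hold for $\{\tilde{X}_i\}$, and it remains only to identify the two quantities entering its bound.

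The key step is to verify that the conditional covariance matrix of $T=\sum_i\tilde{X}_i$ coincides with $V=\E[X_1X_1^{\top}\mid\F]$. Expanding,
\[
	\E[TT^{\top}\mid\F]=\frac{1}{n}\sum_{i,j=1}^n\E[X_iX_j^{\top}\mid\F],
\]
and by conditional independence together with $\E[X_i\mid\F]=0$ a.s., every off-diagonal term vanishes a.s. The diagonal terms are all equal to $\E[X_1X_1^{\top}\mid\F]$ since the $X_i$ are conditionally identically distributed, so $\E[TT^{\top}\mid\F]=V$ a.s. In particular, the quantity $\upsilon=\lambda_{(d\vee 2-1)}(V)$ appearing in the corollary is exactly the one that enters the theorem's bound, and $N\mid\F\sim\ND{0}{V}$ is the admissible Gaussian reference.

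Next I would compute the sum of conditional third moments. Since $\norm{\tilde{X}_i}^3=n^{-3/2}\norm{X_i}^3$ and the conditional distributions agree,
\[
	\sum_{i=1}^n\E[\norm{\tilde{X}_i}^3\mid\F]=n^{-3/2}\sum_{i=1}^n\E[\norm{X_i}^3\mid\F]=\frac{\E[\norm{X_1}^3\mid\F]}{\sqrt{n}}\qtext{a.s.}
\]
Substituting this expression and $\upsilon$ into the bound of Theorem \ref{thm:aux_CLT1} yields
\[
	\DK{\norm{T},\norm{N}\mid\F}\le C_d\left(\upsilon^{-3/2}\cdot\frac{\E[\norm{X_1}^3\mid\F]}{\sqrt{n}}\right)^{1/4}\qtext{a.s.},
\]
which is precisely the claimed inequality. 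There is no serious obstacle here: the argument is entirely a bookkeeping reduction, and the only point requiring mild care is confirming a.s.\ that the conditional cross-covariances vanish and that the identically-distributed hypothesis collapses the diagonal sum, both of which follow directly from the conditional i.i.d.\ assumption.
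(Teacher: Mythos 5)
Your proposal is correct and is exactly the derivation the paper intends: Corollary \ref{corr:aux_CLT1} is stated without proof precisely because it follows from Theorem \ref{thm:aux_CLT1} applied to the rescaled variables $n^{-1/2}X_i$, with the conditional i.i.d.\ assumption collapsing $\E[TT^{\top}\mid \F]$ to $\E[X_1X_1^{\top}\mid\F]$ and the third-moment sum to $n^{-1/2}\E[\norm{X_1}^3\mid\F]$. Your bookkeeping of the vanishing cross-covariances and the moment scaling is accurate, so nothing is missing.
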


\medskip

\begin{lemma}
\label{lemma:aux_CLT2}
Let $X_1,\ldots,X_n$ be random variables that are conditionally i.i.d.\ given $\F\subset\H$ with $\E[X_1\mid \F]=0$ and $\E[\abs{X_1}^3\mid \F]<\infty$ a.s. Let $T\eqdef n^{-1/2}\sum_{i=1}^n X_i$ and $N\mid \F\sim \ND{0}{\sigma^2}$, where $\sigma^2=\Var(X_1\mid \F)$ a.s. Then, assuming that $\sigma>0$ a.s.,
\begin{equation*}
	\DK{T,N\mid \F}\le C\frac{\E[\abs{X_1}^3\mid \F]}{\sigma^3\sqrt{n}} \qtext{a.s.},
\end{equation*}
where $C>0$ is a constant.
\end{lemma}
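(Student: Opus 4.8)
The plan is to reduce the statement to the classical (unconditional) Berry--Esseen theorem by disintegrating with respect to $\F$ and applying the one-dimensional bound $\omega$-by-$\omega$. Since we are in dimension one and want the sharp $n^{-1/2}$ rate with the third absolute moment, the Lindeberg-swapping argument used in the proof of Theorem~\ref{thm:aux_CLT1} (which only yields a power $1/4$ because of the Lévy-concentration term) is not precise enough. Instead I would invoke the classical theorem directly through the regular conditional distribution $\QM_{X_1}^{\F}$.

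First I would fix a version of the regular conditional distribution $\QM_{X_1}^{\F}(\omega,\cdot)$ of $X_1$ given $\F$. Because $X_1,\dots,X_n$ are conditionally i.i.d.\ given $\F$, the regular conditional distribution of the vector $(X_1,\dots,X_n)$ factorizes, so for almost all $\omega$ it equals the product measure $\QM_{X_1}^{\F}(\omega,\cdot)^{\otimes n}$. Consequently, for almost all $\omega$ the conditional distribution $\FD{T}{\F}(\omega,\cdot)$ of $T=n^{-1/2}\sum_i X_i$ is exactly the law of $n^{-1/2}\sum_{i=1}^n \xi_i$, where $\xi_1,\dots,\xi_n$ are genuine i.i.d.\ draws from $\QM_{X_1}^{\F}(\omega,\cdot)$ on an auxiliary space. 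Next I would collect the moment information on a single full-measure event $\Omega_0$ (the complement of the finite union of the null sets on which the hypotheses fail): on $\Omega_0$ we have simultaneously $\int x\,\QM_{X_1}^{\F}(\omega,dx)=\E[X_1\mid\F](\omega)=0$, $\int x^2\,\QM_{X_1}^{\F}(\omega,dx)=\sigma^2(\omega)\in(0,\infty)$, and $\int \abs{x}^3\,\QM_{X_1}^{\F}(\omega,dx)=\E[\abs{X_1}^3\mid\F](\omega)<\infty$.

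For each $\omega\in\Omega_0$ the auxiliary i.i.d.\ sequence $\{\xi_i\}$ then has mean zero, positive variance $\sigma^2(\omega)$ and finite third absolute moment, so the classical Berry--Esseen theorem applies with the \emph{universal} constant $C$ that does not depend on $\omega$ and gives
\[
	\sup_{x\in\R}\abs{\FD{T}{\F}(\omega,x)-\Phi\!\left(\frac{x}{\sigma(\omega)}\right)}\le C\,\frac{\E[\abs{X_1}^3\mid\F](\omega)}{\sigma(\omega)^3\sqrt{n}}.
\]
Finally I would identify the left-hand side with $\DK{T,N\mid\F}(\omega)$: since $N\mid\F\sim\ND{0}{\sigma^2}$, its conditional cdf given $\F$ is $x\mapsto\Phi(x/\sigma(\omega))$ for almost all $\omega$, so the supremum above is precisely the conditional Kolmogorov distance. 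As $\Omega_0$ has full measure, the claimed inequality holds a.s.

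The main obstacle is bookkeeping rather than analysis: making the disintegration rigorous. Specifically I must verify that the conditional i.i.d.\ assumption really yields the product structure of the regular conditional law a.s., that the conditional moments agree with the integrals against $\QM_{X_1}^{\F}$ outside one common null set, and that the universality of the Berry--Esseen constant lets me apply the bound simultaneously for almost every $\omega$ without introducing measurability problems (this last point is automatic, as $\DK{T,N\mid\F}$ is already $\F$-measurable). An alternative that avoids constructing $\QM_{X_1}^{\F}$ is to run Esseen's smoothing inequality conditionally, bounding $\DK{T,N\mid\F}$ in terms of an integral of $\abs{\varphi_T(t\mid\F)-e^{-t^2\sigma^2/2}}/\abs{t}$ and using $\varphi_T(t\mid\F)=\varphi_{X_1}(t/\sqrt{n}\mid\F)^n$ a.s.\ (valid by conditional independence, cf.\ Lemma~\ref{lemma:aux_ch_fun}); this reproduces the classical argument step by step but requires carrying the a.s.\ qualifier through every Taylor expansion and integral bound.
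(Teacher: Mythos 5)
Your proposal is correct, but it takes a genuinely different route from the paper. The paper's proof is a one-line reference: it adapts the classical smoothing-inequality proof of the Berry--Esseen theorem (Theorem~11.4.1 in Athreya and Lahiri) to the conditional setting --- which is precisely the \emph{alternative} you sketch in your last sentence: factorize the conditional characteristic function, $\varphi_T(t\mid\F)=\varphi_{X_1}(t/\sqrt{n}\mid\F)^n$ a.s., and carry the a.s.\ qualifiers through Esseen's inequality and the Taylor expansions. Your primary argument instead treats the classical Berry--Esseen theorem as a black box and applies it $\omega$-by-$\omega$ through the regular conditional distribution. This is legitimate: conditional i.i.d.-ness does yield the product structure $\QM_{(X_1,\ldots,X_n)}^{\F}(\omega,\cdot)=\QM_{X_1}^{\F}(\omega,\cdot)^{\otimes n}$ for almost every $\omega$ (verify equality on the countably many rational rectangles, then fix $\omega$ outside the resulting null set and apply a $\pi$-system argument), the conditional moments coincide with the integrals against $\QM_{X_1}^{\F}(\omega,\cdot)$ off a single null set, the conditional cdf of $T$ is then the law of a normalized sum of genuine i.i.d.\ draws from $\QM_{X_1}^{\F}(\omega,\cdot)$, and the universality of the Berry--Esseen constant is exactly what turns the $\omega$-wise bounds into one a.s.\ inequality. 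What your route buys is that no analysis needs to be redone --- the burden is purely measure-theoretic bookkeeping --- and the constant is the sharp classical one; what the paper's route buys is that it never constructs or factorizes regular conditional distributions, working only with conditional characteristic functions (in the spirit of Lemma~\ref{lemma:aux_ch_fun}), which is why the paper can plausibly dismiss the proof as ``similar to the unconditional case.'' Either way, the statement stands, and your write-up is arguably the more self-contained of the two.
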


\begin{proof}
The proof is similar to the proof of Theorem 11.4.1 in \cite{Athreya:2006:MTP} (\textit{for the unconditional case}) and so is omitted.
\end{proof}

\medskip

\begin{lemma}
\label{lemma:aux_normal_approx1}
Suppose that $\G$ and $\F$ are $\sigma$-fields such that $\F\subset \G\subset\H$, $X$ and $Y$ are random vectors in $\R^d$ such that $X\mid \G\sim \ND{0}{\Sigma_X}$ and $Y\mid \F\sim \ND{0}{\Sigma_Y}$. Then, assuming that $\upsilon\equiv\lambda_{(d\vee 2-1)}(\Sigma_Y)>0$ a.s.,
\begin{align}
\label{eq:aux_normal_approx_bound}
\begin{aligned}
	&\DK{\norm{X},\norm{Y}\mid \G,\F}\le C_d\left(\upsilon^{-1}\norm{\Lambda_X-\Lambda_Y}_{e,\infty}\right)^{1/3} \qtext{a.s.},
\end{aligned}
\end{align}
where $C_d$ is a constant depending only on $d$ and $\Lambda_{(\csdot)}$ is the matrix of eigenvalues corresponding to $\Sigma_{(\csdot)}$.
\end{lemma}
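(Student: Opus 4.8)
The plan is to reduce the conditional assertion to a deterministic inequality about two centered Gaussian vectors and then mimic the smoothing argument of Theorem \ref{thm:aux_CLT1}. Since $\Sigma_X$ is $\G$-measurable and $\Sigma_Y$ is $\F$-measurable, a regular version of the conditional cdf of $\norm{X}$ given $\G$ equals, for a.e.\ $\omega$, the unconditional cdf of $\norm{\Sigma_X(\omega)^{1/2}Z}$ with $Z\sim\ND{0}{I_d}$ independent; by orthogonal invariance of the Euclidean norm this law depends only on the eigenvalues, hence coincides with the law of $\norm{\Lambda_X(\omega)^{1/2}Z}$, and similarly for $\norm{Y}$ given $\F$. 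Thus $\DK{\norm{X},\norm{Y}\mid\G,\F}$ equals, for a.e.\ $\omega$, the deterministic quantity $\sup_x\abs{\PR{\norm{\Lambda_X^{1/2}Z}\le x}-\PR{\norm{\Lambda_Y^{1/2}Z}\le x}}$ evaluated at $\Lambda_X(\omega),\Lambda_Y(\omega)$, so it suffices to bound this quantity for fixed diagonal positive semidefinite $\Lambda_X,\Lambda_Y$ (sorted decreasingly) in terms of $\mu\eqdef\norm{\Lambda_X-\Lambda_Y}_{e,\infty}$ and $\upsilon=\lambda_{(d\vee2-1)}(\Lambda_Y)$.

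For fixed $\epsilon>0$ I would introduce the same smooth cutoff $f$ as in the proof of Theorem \ref{thm:aux_CLT1}, set $g_q(s)\eqdef f(\norm{s}-q)$, and obtain, exactly as there,
\[
	\sup_x\abs{\PR{\norm{\Lambda_X^{1/2}Z}\le x}-\PR{\norm{\Lambda_Y^{1/2}Z}\le x}}\le \sup_{q>0}\abs{\E[g_q(\Lambda_X^{1/2}Z)-g_q(\Lambda_Y^{1/2}Z)]}+\sup_{q\ge0}\PR{\Lambda_Y^{1/2}Z\in S_{q,\epsilon}}.
\]
The second (concentration) term is handled directly by Lemma \ref{lemma:chi_bound}, whose proof uses only the two largest eigenvalues of $\Lambda_Y$ and therefore applies whenever $\upsilon>0$ (for $d=1$ one uses the explicit Gaussian density instead), yielding a bound of order $C_d\epsilon/\sqrt{\upsilon}$.

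For the first (swapping) term I would use a Gaussian interpolation rather than Lindeberg replacement: with $W_t\eqdef\Sigma_t^{1/2}Z$ and $\Sigma_t\eqdef(1-t)\Lambda_Y+t\Lambda_X$ (diagonal, with increments $w_i\eqdef[\Lambda_X]_{ii}-[\Lambda_Y]_{ii}$), differentiating $\norm{W_t}^2=\sum_i\sigma_{t,i}^2Z_i^2$ gives $\frac{d}{dt}\E[g_q(W_t)]=\E\bigl[f'(\norm{W_t}-q)\,(2\norm{W_t})^{-1}\sum_i w_iZ_i^2\bigr]$, legitimate by dominated convergence. Bounding $\abs{\sum_i w_iZ_i^2}\le\mu\norm{Z}^2$ and choosing $f$ flat enough at the origin that $\abs{f'(y)}\le D\epsilon^{-2}y$ on $(0,\epsilon)$, the factor $\abs{f'(\norm{W_t}-q)}/\norm{W_t}\le D\epsilon^{-2}$ stays bounded and the radial singularity cancels; integrating over $t\in[0,1]$ and using $\E\norm{Z}^2=d$ yields a swapping bound of order $C_d\mu\epsilon^{-2}$, uniformly in $q$. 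This cancellation is the crux: it is precisely the device (a cutoff whose derivatives vanish at the inner radius) already exploited in Theorem \ref{thm:aux_CLT1}, and it lets me avoid any lower bound on the possibly degenerate intermediate covariances $\Sigma_t$.

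Combining the two estimates gives, for every deterministic $\epsilon>0$, a bound of order $C_d(\mu\epsilon^{-2}+\epsilon\upsilon^{-1/2})$; since the inequality is deterministic in the eigenvalues, I may insert the minimizing choice $\epsilon\propto(\mu\sqrt{\upsilon})^{1/3}$ (the case $\mu=0$ being trivial, as then $\Lambda_X=\Lambda_Y$), which produces the claimed $C_d(\upsilon^{-1}\mu)^{1/3}$, and evaluating at $\omega$ delivers the a.s.\ statement. The main obstacle I anticipate is the careful cancellation of the $1/\norm{W_t}$ singularity uniformly in the quantile $q$ and in $t$, so that no control of the small eigenvalues of $\Sigma_t$ is required; the measurability step turning the conditional distance into a pathwise functional of $\Lambda_X,\Lambda_Y$ is routine but should be stated explicitly.
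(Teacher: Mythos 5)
Your proposal is correct and follows the paper's strategy in all essentials: the same smooth cutoff $g_q(s)=f(\norm{s}-q)$ with the crucial flatness property $\abs{f'(y)}\le D\epsilon^{-2}y$ on $(0,\epsilon)$, the same split into a swapping term plus a concentration term handled by Lemma \ref{lemma:chi_bound} (with the explicit Gaussian density when $d=1$), the same reduction of the conditional distance to a deterministic functional of the eigenvalue matrices, and the same optimization $\epsilon\propto\left(\sqrt{\upsilon}\,\norm{\Lambda_X-\Lambda_Y}_{e,\infty}\right)^{1/3}$ producing the exponent $1/3$. The one genuine difference is local but sits at the technical heart of the lemma: the paper interpolates two \emph{independent} Gaussians, $Z(t)=\sqrt{t}\,\Lambda_X^{1/2}Z_1+\sqrt{1-t}\,\Lambda_Y^{1/2}Z_2$, and applies Gaussian integration by parts, so that $\phi'(t)$ is expressed through the Hessian $\nabla^2 g_q$ and all $d^2$ of its entries must be bounded (each carrying the $1/\norm{s}$ singularity killed by the flatness of $f$); you instead couple the entire path through a single $Z$, taking $W_t=\Sigma_t^{1/2}Z$ with $\Sigma_t=(1-t)\Lambda_Y+t\Lambda_X$ --- the same covariance path, hence the same marginals --- and differentiate pathwise through the norm, which involves only $f'$ and the single factor $1/\norm{W_t}$. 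Your computation is marginally more elementary (no integration-by-parts formula, no second derivatives), and the domination $\abs{f'(\norm{W_t}-q)}/\norm{W_t}\le D\epsilon^{-2}$ together with $\abs{\sum_i w_iZ_i^2}\le\norm{\Lambda_X-\Lambda_Y}_{e,\infty}\norm{Z}^2$ and $\E[\norm{Z}^2]=d$ gives the same $O\!\left(d\,\epsilon^{-2}\norm{\Lambda_X-\Lambda_Y}_{e,\infty}\right)$ swapping bound (the paper's constant is of order $d^2$; immaterial). The points you flag as delicate are exactly the right ones, and both are sound: pathwise differentiability holds for every $q>0$ because $f'$ vanishes outside $(0,\epsilon)$ and $t\mapsto\norm{W_t}$ is differentiable wherever $\norm{W_t}>0$ (the integrand being locally constant where $\norm{W_t}<q$), the dominated-convergence interchange is justified by the bound $D\epsilon^{-2}\norm{\Lambda_X-\Lambda_Y}_{e,\infty}\norm{Z}^2/2$, and your rotation-invariance reduction is the same ``freezing'' step the paper performs via $h_{q,j}(\lambda)=\E g_q(\lambda^{1/2}Z_j)$.
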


\begin{proof}
Let $f$ be a twice continuously differential function such that $f(x)=1$ if $x\le 0$, $f(x)=0$ if $x\ge \epsilon>0$ and $\abs{f^{(j)}}\le D\epsilon^{-j}\ind_{(0,\epsilon)}(x)$ for some constant $D>0$ and $1\le j\le 2$. Further, set
\[
	g_r(s)\eqdef f(\norm{s}-r).
\]
As in the proof of Theorem \ref{thm:aux_CLT1} for any $\epsilon>0$ w.p.1,
\begin{align*}
	&\DK{\norm{X},\norm{Y}\mid \G,\F} \\
	&\qquad\le \sup_{q\in \Q_{> 0}}\abs{\E[g_q(X)\mid \G]-\E[g_q(Y)\mid \F]}+\sup_{q\in \Q_{\ge 0}}\PR{Y\in S_{q,\epsilon}\mid \F}.
\end{align*}
Let $Z_1$ and $Z_2$ be independent standard normal random vectors in $\R^d$ that are independent of $\G$ and $\F$, respectively. Then
\begin{align*}
	\E[g_q(X)\mid \G]-\E[g_q(Y)\mid \F]&=\E[g_q(\Lambda_X^{1/2}Z_1)\mid \G]-\E[g_q(\Lambda_Y^{1/2}Z_2)\mid \F] \\
	&=h_{q,1}(\Lambda_X)-h_{q,2}(\Lambda_Y) \qtext{a.s.},
\end{align*}
where $h_{q,1}(\lambda)\eqdef\E g_q(\lambda^{1/2}Z_1)$ and $h_{q,2}(\lambda)\eqdef\E g_q(\lambda^{1/2}Z_2)$ \citep[see, e.g.,][Lemma 6.2.1]{Durrett:10:Prob}.

\begin{claim}
There exists a constant $B_d$ depending only on $d$ such that for any $q>0$,
\[
	\abs{h_{q,1}(\lambda_X)-h_{q,2}(\lambda_Y)}\le \frac{B_d}{\epsilon^2}\normin{\lambda_X-\lambda_Y}_{e,\infty}.
\]
\end{claim}
\begin{subproof}
For $t\in [0,1]$ let $Z(t)\eqdef\sqrt{t}\lambda_X^{1/2}Z_1+\sqrt{1-t}\lambda_Y^{1/2}Z_2$ and $\phi(t)\eqdef \E g_q(Z(t))$. Then
\[
	h_{q,1}(\lambda_X)-h_{q,2}(\lambda_Y)=\phi(1)-\phi(0)=\int_0^1 \phi'(t)dt.
\]
Using the integration by parts formula \citep[see Equation A.17 in][Section~A.6]{Talagrand:11:SpinGlasses} for $t\in (0,1)$,
\begin{align*}
	\phi'(t)&=\frac{1}{2}\E\left[\left(\lambda_X^{1/2}Z_1/\sqrt{t}-\lambda_Y^{1/2}Z_2/\sqrt{1-t}\right)^{\top} \nabla g_q(Z(t))\right] \\
	&=\frac{1}{2}\E\left[\vec{i}^{\top}(\lambda_X-\lambda_Y)\circ \nabla^2 g_q(Z(t))\vec{i}\right],
\end{align*}
where $\vec{i}$ is the vector of ones. Therefore,
\[
	\abs{\int_0^1 \phi'(t)dt}\le \normin{\lambda_X-\lambda_Y}_{e,\infty}\int_0^1 \E\abs{\vec{i}^{\top}\nabla^2 g_q(Z(t))\vec{i}}dt.
\]
Since $\absin{f^{(j)}(x-q)}\le D\epsilon^{-2}x^{2-j}\times \ind_{(q,q+\epsilon)}(x)$ for $1\le j\le 2$, the $(k,l)$-th element of the Hessian of $g_q$ is bounded by $D'\epsilon^{-2}$ for some constant $D'>0$. Therefore,
\[
	\abs{h_{q,1}(\lambda_X)-h_{q,2}(\lambda_Y)}\le \frac{D'd^2}{\epsilon^2}\normin{\lambda_X-\lambda_Y}_{e,\infty}. \qedhere
\]
\end{subproof}

\medskip

Using Lemma \ref{lemma:chi_bound} when $d\ge 2$ it follows that
\begin{equation}
\label{eq:aux_normal_approx_bound1}
	\DK{\norm{X},\norm{Y}\mid \G}\le \frac{B_d}{\epsilon^2}\normin{\Lambda_X-\Lambda_Y}_{e,\infty}+\frac{C_d'}{\sqrt{\upsilon}}\epsilon \qtext{a.s.}
\end{equation}
For $d=1$, $\PR{N\in S_{q,\epsilon}\mid \F}\le \epsilon/\sqrt{2\pi \upsilon}$ a.s., so that the same bound holds. Finally, since \eqref{eq:aux_normal_approx_bound1} holds for any $\epsilon>0$ a.s., it holds for random $\epsilon$ a.s.\ on $\{\epsilon\in (0,\infty)\}$. Consequently, the result follows by taking $\epsilon=(\sqrt{\upsilon}\norm{\Lambda_X-\Lambda_Y}_{e,\infty})^{1/3}$ and noticing that \eqref{eq:aux_normal_approx_bound} holds trivially on $\{\norm{\Lambda_X-\Lambda_Y}_{e,\infty}=0\}$.
\end{proof}

\medskip

\begin{lemma}
\label{lemma:aux_normal_approx2}
Suppose that $\G$ and $\F$ are $\sigma$-fields such that $\F\subset \G\subset\H$ and let $X\mid \G\sim \ND{0}{\sigma_X^2}$ and $Y\mid \F\sim \ND{0}{\sigma_Y^2}$. Then, assuming that $\sigma_Y>0$ a.s.,
\[
	\DK{X,Y\mid \G,\F}\le C\abs{\sigma_X^2/\sigma_Y^2-1}^{1/3} \qtext{a.s.},
\]
where $C>0$ is a constant.
\end{lemma}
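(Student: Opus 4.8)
The plan is to specialize the argument of Lemma~\ref{lemma:aux_normal_approx1} to the scalar case, where the conditional cdfs are explicit. Since $X\mid\G\sim\ND{0}{\sigma_X^2}$ and $Y\mid\F\sim\ND{0}{\sigma_Y^2}$, the regular conditional cdfs are $\FD{X}{\G}(\csdot,x)=\Phi(x/\sigma_X)$ and $\FD{Y}{\F}(\csdot,x)=\Phi(x/\sigma_Y)$ a.s., where $\Phi$ is the standard normal cdf. As in the proof of Theorem~\ref{thm:aux_CLT1}, I would fix a twice continuously differentiable $f$ with $f(x)=1$ for $x\le 0$, $f(x)=0$ for $x\ge\epsilon$, and $\abs{f^{(j)}(x)}\le D\epsilon^{-j}\ind_{(0,\epsilon)}(x)$ for $1\le j\le 2$, and set $g_q(s)\eqdef f(s-q)$, so that $\ind\{s\le q\}\le g_q(s)\le \ind\{s\le q+\epsilon\}$. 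Sandwiching the indicator between shifted copies of $g_q$ and taking suprema over $q$ then yields, for every $\epsilon>0$ and a.s.,
\[
	\DK{X,Y\mid\G,\F}\le \sup_{q}\abs{\E[g_q(X)\mid\G]-\E[g_q(Y)\mid\F]}+\sup_{q}\PR{q<Y\le q+\epsilon\mid\F}.
\]

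For the first term I would write both conditional expectations as a common deterministic function of the variance: since $\sigma_X$ is $\G$-measurable and $\sigma_Y$ is $\F$-measurable, $\E[g_q(X)\mid\G]=H(\sigma_X^2)$ and $\E[g_q(Y)\mid\F]=H(\sigma_Y^2)$ a.s., where $H(v)\eqdef\E[g_q(\sqrt{v}Z)]$ with $Z\sim\ND{0}{1}$. The one-dimensional instance of the Gaussian interpolation used in Lemma~\ref{lemma:aux_normal_approx1} (equivalently, the heat-equation property of the Gaussian density) gives the identity $H'(v)=\frac{1}{2}\E[g_q''(\sqrt{v}Z)]$, whence $\abs{H'(v)}\le D/(2\epsilon^2)$ uniformly in $v\ge 0$ and $q$ because $\abs{g_q''}\le D\epsilon^{-2}\ind_{(q,q+\epsilon)}$. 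By the fundamental theorem of calculus this bounds the first term by $\frac{D}{2\epsilon^2}\abs{\sigma_X^2-\sigma_Y^2}$ uniformly in $q$. For the concentration term, the normal density is bounded by $(\sqrt{2\pi}\sigma_Y)^{-1}$, so $\sup_q\PR{q<Y\le q+\epsilon\mid\F}\le \epsilon/(\sqrt{2\pi}\sigma_Y)$ a.s., exactly as in the $d=1$ case of Lemma~\ref{lemma:aux_normal_approx1}; this is the only place the hypothesis $\sigma_Y>0$ is used.

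Combining the two estimates gives, for every $\epsilon>0$ a.s., $\DK{X,Y\mid\G,\F}\le \frac{D}{2\epsilon^2}\abs{\sigma_X^2-\sigma_Y^2}+\epsilon/(\sqrt{2\pi}\sigma_Y)$. Since this holds for each fixed $\epsilon$ a.s., it holds for a $\G$-measurable random $\epsilon$ on $\{\epsilon\in(0,\infty)\}$, and I would optimize by taking $\epsilon$ proportional to $\big(\sigma_Y\abs{\sigma_X^2-\sigma_Y^2}\big)^{1/3}$; balancing the $\epsilon^{-2}$ and $\epsilon$ contributions produces a bound of order $\big(\abs{\sigma_X^2-\sigma_Y^2}/\sigma_Y^2\big)^{1/3}=\abs{\sigma_X^2/\sigma_Y^2-1}^{1/3}$, as claimed. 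The case $\sigma_X=\sigma_Y$ is handled by letting $\epsilon\downarrow 0$, and the bound is trivial when $\sigma_X=0$. The only substantive step is controlling the smoothed difference by $\epsilon^{-2}$ times the variance gap and trading it against the $O(\epsilon)$ concentration estimate; the cube-root exponent is forced by this trade-off and, as with Lemma~\ref{lemma:aux_normal_approx1}, cannot be improved by this smoothing method.
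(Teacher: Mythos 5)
Your proposal is correct and follows essentially the same route as the paper: the paper omits this proof, noting only that it is ``similar to Lemma \ref{lemma:aux_normal_approx1}'', and your argument is precisely that scalar specialization --- smoothed indicators $g_q$ with $\abs{g_q''}\le D\epsilon^{-2}$, a variance-derivative (heat-kernel) bound on the smoothed difference, the Gaussian concentration bound $\epsilon/(\sqrt{2\pi}\sigma_Y)$, and optimization at $\epsilon\propto(\sigma_Y\abs{\sigma_X^2-\sigma_Y^2})^{1/3}$, matching the paper's choice $\epsilon=(\sqrt{\upsilon}\norm{\Lambda_X-\Lambda_Y}_{e,\infty})^{1/3}$ with $\upsilon=\sigma_Y^2$. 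Your use of the one-dimensional heat-equation identity $H'(v)=\tfrac12\E[g_q''(\sqrt{v}Z)]$ is equivalent to the paper's two-point Gaussian interpolation in this scalar setting, so there is no substantive difference.
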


\begin{proof}
The proof is similar to one for Lemma \ref{lemma:aux_normal_approx1}, and so is omitted.
\end{proof}

\medskip

\begin{lemma}
\label{lemma:aux_normal_approx3}
Let $(G,(Y,X))$ be a network dependent process in $\R\times \R^d$ and let $\F$ be a sub-$\sigma$-field of $\H$ such that:
\begin{itemize}[leftmargin=1.65em]
	\item[\tn{(a)}] $Y$ is conditionally independent of $X$ given $\F$;
	\item[\tn{(b)}] $Y_i$ and $Y_j$ are conditionally independent given $\F$ if $j\notin B_i\eqdef N(i;s)$ for some $s>0$;
	\item[\tn{(c)}] $\D(G)$ is $\F$-measurable.
\end{itemize}
Let $\G\eqdef \sigma(\F\cup \sigma(X))$, $T\eqdef\sum_{i\in N}Y_iX_i$, and $Z\mid \G\sim \ND{0}{V}$, where $V=\E[TT^{\top}\mid \G]$ a.s. Then, assuming that $\upsilon\equiv\lambda_{(d\vee 2-1)}(V)>0$ a.s.,
\[
	\DK{\norm{T},\norm{Z}\mid \G}\le C_d\left(\upsilon^{-3/2}\beta\right)^{1/4} \qtext{a.s.},
\]
where $C_d>0$ is a constant depending only on $d$ and
\[
	\beta\eqdef\sum_{i\in N}\sum_{j\in B_i}\sum_{k\in B_i\cup B_j}\prod_{l\in\{i,j,k\}}\norm{Y_l}_{\F,3}\norm{X_l}_{\infty}.
\]
In addition, when $d=1$,
\[
	\DK{T,Z\mid \G}\le C_1\left(\upsilon^{-3/2}\beta\right)^{1/4} \qtext{a.s.}
\]
\end{lemma}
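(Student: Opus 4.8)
The plan is to mirror the smoothing scheme from the proof of Theorem \ref{thm:aux_CLT1} and to replace its term-by-term Lindeberg swap with a Gaussian interpolation that respects the $s$-dependence encoded by the neighborhoods $B_i$. First I would pass to the conditional world given $\G$: by (c) the distance matrix, hence every $B_i$ and the matrix $V$, is $\G$-measurable, and by (a) the conditional law of $Y$ given $\G$ agrees with its law given $\F$, so the $X_i$ may be treated as frozen and the summands $\xi_i\eqdef Y_iX_i$ become conditionally centered (using $\E[Y_i\mid\F]=0$), $s$-dependent vectors. Interpreting (b) as the conditional independence of $\xi_i$ from the exterior block $\{\xi_k:k\notin B_i\}$ given $\F$, as is standard for $s$-dependence, we get $\E[\xi_i\xi_j^{\top}\mid\G]=0$ for $j\notin B_i$ and hence the key identity $V=\E[TT^{\top}\mid\G]=\sum_i\sum_{j\in B_i}\E[\xi_i\xi_j^{\top}\mid\G]$. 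Taking the thrice-differentiable $f$ and $g_q(s)\eqdef f(\norm{s}-q)$ exactly as in Theorem \ref{thm:aux_CLT1}, I would bound
\[
	\DK{\norm{T},\norm{Z}\mid\G}\le\sup_{q>0}\abs{\E[g_q(T)-g_q(Z)\mid\G]}+\sup_{q\ge0}\PR{Z\in S_{q,\epsilon}\mid\G}\qtext{a.s.},
\]
and control the concentration term by $C_d'\epsilon/\sqrt{\upsilon}$ through Lemma \ref{lemma:chi_bound}, where $\upsilon=\lambda_{(d\vee2-1)}(V)>0$; for $d=1$ the same bound holds with the Gaussian density.

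The core is the smooth-function difference. Let $Z\mid\G\sim\ND{0}{V}$ be conditionally independent of $Y$, set $T(t)\eqdef\sqrt{t}\,T+\sqrt{1-t}\,Z$, and write $\E[g_q(T)-g_q(Z)\mid\G]=\int_0^1\psi'(t)\,dt$ with $\psi(t)\eqdef\E[g_q(T(t))\mid\G]$. Differentiating and applying conditional Gaussian integration by parts to the $Z$-part turns it into $-\tfrac12\E[\langle V,\nabla^2 g_q(T(t))\rangle\mid\G]$, while the $T$-part equals $\tfrac1{2\sqrt{t}}\sum_i\E[\nabla g_q(T(t))^{\top}\xi_i\mid\G]$. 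Here I would exploit a two-layer neighborhood expansion. Expanding $\nabla g_q(T(t))$ around $T^{(i)}(t)\eqdef T(t)-\sqrt{t}\sum_{j\in B_i}\xi_j$ kills the zeroth-order term, since $\xi_i$ is conditionally independent of $T^{(i)}(t)$ given $\F$ and centered. A second expansion of $\nabla^2 g_q(T^{(i)}(t))$ around $T^{(ij)}(t)\eqdef T(t)-\sqrt{t}\sum_{k\in B_i\cup B_j}\xi_k$, which \emph{is} conditionally independent of $(\xi_i,\xi_j)$ given $\F$, yields after replacing $T^{(ij)}(t)$ by $T(t)$ exactly $\tfrac12\sum_i\sum_{j\in B_i}\E[\langle\E[\xi_i\xi_j^{\top}\mid\G],\nabla^2 g_q(T(t))\rangle\mid\G]=\tfrac12\E[\langle V,\nabla^2 g_q(T(t))\rangle\mid\G]$. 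This cancels the $Z$-part, so $\psi'(t)$ is left with third-order remainders only.

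Every remainder carries one factor of $\nabla^3 g_q$ and three summand factors indexed by $i$, some $j\in B_i$, and some $k\in B_i\cup B_j$. Using the combined bound $\abs{f^{(m)}(x-q)}\le D\epsilon^{-3}x^{3-m}\ind_{(q,q+\epsilon)}(x)$ for $1\le m\le3$ together with \eqref{eq:norm_drv_bounds}, the third derivative of $g_q$ is $O(\epsilon^{-3})$ uniformly, and the factors $\sqrt{t},\sqrt{1-t}$ integrate to a constant over $[0,1]$. Bounding $\E[\abs{Y_iY_jY_k}\mid\F]\le\prod_{l\in\{i,j,k\}}\norm{Y_l}_{\F,3}$ by conditional H\"older and the accompanying $X$-factors by $\prod_{l}\norm{X_l}_{\infty}$, each remainder is dominated by a constant times $\epsilon^{-3}\beta$, so $\abs{\E[g_q(T)-g_q(Z)\mid\G]}\le B\epsilon^{-3}\beta$ a.s. Combining with the concentration estimate gives $\DK{\norm{T},\norm{Z}\mid\G}\le B\epsilon^{-3}\beta+C_d'\epsilon/\sqrt{\upsilon}$ for every $\epsilon>0$ a.s.; since this holds for random $\epsilon$ on $\{\epsilon>0\}$, optimizing with $\epsilon=(\sqrt{\upsilon}\,\beta)^{1/4}$ yields the claimed $C_d(\upsilon^{-3/2}\beta)^{1/4}$. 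The $d=1$ statement follows identically, as $\norm{T}=\abs{T}$ and $\norm{Z}=\abs{Z}$ are handled by the same computation with the univariate concentration bound.

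The main obstacle I anticipate is the second paragraph: arranging the two nested Taylor expansions so the first- and second-order contributions of the dependent summands match the Gaussian term. The delicate point is that $T^{(i)}(t)$ is \emph{not} conditionally independent of the block $\{\xi_j:j\in B_i\}$, because neighbors of $i$ correlate with nodes just outside $B_i$; this is precisely why the second expansion must be taken around $T^{(ij)}(t)$ with the enlarged neighborhood $B_i\cup B_j$. Getting this bookkeeping right, and verifying that swapping $T^{(ij)}(t)$ for $T(t)$ in the matched second-order term produces only a third-order residual, is what forces the triple sum over $i$, $j\in B_i$, $k\in B_i\cup B_j$ in $\beta$. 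The reduction via (a)–(c) and the concentration estimate are routine given Lemma \ref{lemma:chi_bound}.
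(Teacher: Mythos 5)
Your proposal is correct and arrives at the paper's bound, but the core comparison step is done by a genuinely different route. The paper performs no interpolation at all: after the same smoothing and concentration reduction (Lemma \ref{lemma:chi_bound}), it introduces $Y'\mid \F\sim \ND{0}{\Var(Y\mid\F)}$, notes $\E[g_q(Z)\mid\G]=\E[g_q(Z')\mid\G]$ for $Z'\eqdef\sum_i Y_i'X_i$, freezes $X$ $\omega$-wise via regular conditional distributions (Kallenberg, Theorem~5.4), and then invokes Theorem~3.4 of \cite{Rollin:13} as a black box for sums of locally dependent variables applied to $\phi(y)=g_q\left(\sum_i y_iX_i(\omega)\right)$; the only work left is the derivative bound $\norm{\phi_{ijk}}_{\infty}\le B_d'\epsilon^{-3}\prod_{l\in\{i,j,k\}}\norm{X_l(\omega)}_{\infty}$, and the triple sum over $i$, $j\in B_i$, $k\in B_i\cup B_j$ defining $\beta$ comes out of R\"ollin's theorem rather than from bookkeeping done in the paper. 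Your Gaussian interpolation with the two-layer expansions around $T^{(i)}(t)$ and $T^{(ij)}(t)$ is, in effect, a self-contained re-derivation of the relevant case of that theorem, and your accounting of the remainders (all third-order, all indexed by the triples appearing in $\beta$, optimization $\epsilon=(\sqrt{\upsilon}\,\beta)^{1/4}$) is sound. What each approach buys: yours is self-contained and makes transparent both why the enlarged neighborhood $B_i\cup B_j$ is forced and that condition (b) must be read as \emph{block} local dependence ($(\xi_i,\xi_j)$ conditionally independent of $\{\xi_k:k\notin B_i\cup B_j\}$) rather than the literal pairwise statement --- an interpretation the paper's citation of R\"ollin requires just as much; the paper's route is far shorter, outsourcing the combinatorics, at the price of the $\omega$-wise regular-conditional-distribution maneuver needed to apply an unconditional theorem conditionally, which you instead replace with conditional Stein/Gaussian-integration-by-parts identities. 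Two minor points: both arguments need $\E[Y_i\mid\F]=0$ (implicit in the lemma, explicit in your write-up, and satisfied in its application to $W_n$); and for the $d=1$ assertion the claim concerns $T$ and $Z$ themselves rather than $\abs{T}$ and $\abs{Z}$, so one runs the same argument with $g_q(s)=f(s-q)$ in place of $f(\norm{s}-q)$ --- a trivial change, but your phrasing conflates the two.
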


\begin{proof}
We use the notation from the proof of Theorem \ref{thm:aux_CLT1}. First, for any $\epsilon>0$ w.p.1,
\begin{align*}
	&\DK{\norm{T},\norm{Z}\mid \G} \\
	&\qquad \le \sup_{q\in Q_{>0}}\abs{\E[g_q(T)-g_q(Z)\mid \G]}+\sup_{q\ge 0}\PR{Z\in S_{q,\epsilon}\mid \G}.
\end{align*}
Let $Y'\mid \F\sim\ND{0}{\Sigma}$ conditionally independent of $(Y,X)$ given $\F$, where $\Sigma=\Var(Y\mid \F)$ a.s., and let $Z'\eqdef\sum_{i\in N}Y_i'X_i$. Note that $\E[g_q(Z)\mid \G]=\E[g_q(Z')\mid \G]$ a.s. Also let $\QM_Y$ and $\QM_{Y'}$ be the regular conditional distributions of $Y$ and $Y'$ given $\F$ and $\QM\eqdef\QM_Y\otimes \QM_{Y'}$. Since $X$ is $\G$-measurable, for almost all $\omega\in \Omega$,
\[
	\E[g_q(T)-g_q(Z)\mid \G](\omega)=h_q(\omega),
\]
where
\[
	h_q(\omega)\eqdef\int_{R^{n\times 2}} g_q\left(\sum_{i\in N}y_iX_i(\omega)\right)-g_q\left(\sum_{i\in N}y_i' X_i(\omega)\right)\QM(\omega,d(y,y'))
\]
\citep[see, e.g.,][Theorem~5.4]{Kallenberg:02:FMP}.

\begin{claim}
There exists a constant $B_d>0$ depending only on $d$ such that for any $q>0$,
\[
	\abs{h_q(\omega)}\le \frac{B_d}{\epsilon^3}\sum_{i\in N}\sum_{j\in B_i}\sum_{k\in B_i\cup B_j}\prod_{l\in\{i,j,k\}}(\chi_l(\omega))^{1/3}\norm{X_l(\omega)}_{\infty},
\]
where $\chi_i(\omega)\eqdef \int_{\R^n}y_i^3\QM_Y(\omega,d(y))$.
\end{claim}

\begin{subproof}
For $y\equiv\seq{y_i}_{i\in N}$ let $\phi(y)\eqdef g_q(\sum_{i\in N}y_iX_i(\omega))$. Then the result follows from Theorem 3.4 in \cite{Rollin:13} by observing that
\[
	\norm{\phi_{ijk}}_{\infty}\le \frac{B_d'}{\epsilon^3}\prod_{l\in \{i,j,k\}}\norm{X_l(\omega)}_{\infty}
\]
for some constant $B_d'>0$ depending only on $d$, where $\phi_{ijk}$ is the third order partial derivative of $\phi$ w.r.t. the coordinates $i$, $j$, and $k$.
\end{subproof}

\medskip

As in the proof of Theorem \ref{thm:aux_CLT1} there exists a constant $C_d'>0$ depending only on $d$ such that
\[
	\sup_{q\ge 0}\PR{Z\in S_{q,\epsilon}\mid \F}\le \frac{C_d'}{\sqrt{\upsilon}}\epsilon \qtext{a.s.}
\]
Therefore, noticing that $\chi_i=\E[\abs{Y_i}^3\mid \F]$ a.s., the result follows by taking $\epsilon=\left(\sqrt{\upsilon}\beta\right)^{1/4}$. The second assertion for $d=1$ follows similarly.
\end{proof}

\end{document}